\newtheorem{bemark}[theorem]{Remark}
\newenvironment{proofsk}{ \textbf{Proof Sketch:} }{ \hfill $\Box$}
\def\bb0{{\mathbb{0}}}
\def\bb{{\mathbf{b}}}
\def\b0{{\mathbf{0}}}
\def\b1{{\mathbf{1}}}
\def\bbR{{\mathbb{R}}}
\def\cM{\mathcal{M}}
\def\cR{\mathcal{R}}
\def\sfr{{\mathsf{r}}}
\def\sf0{{\mathsf{0}}}
\title{Submodular Maximization Under A Matroid Constraint: Asking more from an old friend, the Greedy Algorithm} 
  \titlerunning{Submodular Maximization Under A Matroid Constraint}
\author[1]{Nived Rajaraman}
\author[2]{Rahul Vaze}
\affil[1]{%
  Indian Institute of Technology, Madras, \\ Department of Electrical Engineering, \\Indian Institute of Technology, Madras, Chennai, India
\texttt{nived.rajaraman@gmail.com}}
\affil[2]{%
  Tata Institute of Fundamental Research, Mumbai, \\
  School of Technology and Computer Science, \\ Tata Institute of Fundamental Research
  Mumbai, India,
\texttt{vaze@tcs.tifr.res.in}}
\begin{document}
\maketitle
% Title portion

%\renewcommand\shortauthors{Reddy, G. et al}

\begin{abstract}The classical problem of maximizing a submodular function under a matroid constraint is considered. Defining a new measure for the increments made by the greedy algorithm at each step, called the {\bf discriminant}, improved approximation ratio guarantees are derived for the greedy algorithm. At each step, discriminant measures the multiplicative gap in the incremental valuation between the item chosen by the greedy algorithm and the largest potential incremental valuation for eligible items not selected by it.
The new guarantee subsumes all the previous known results for the greedy algorithm, including the curvature based ones, and the derived guarantees are shown to be tight via constructing specific instances. More refined approximation guarantee is derived for a special case called the submodular welfare maximization/partition problem that is also tight, for both the offline and the online case.
\end{abstract}

%
% End generated code
%

\keywords{Submodular maximization, Partition Problem, Greedy Algorithms}

%\maketitle

\section{Introduction}

We consider the problem of maximizing a submodular function under a matroid constraint. This is a classical problem (\cite{edmonds1971matroids}), with many important special cases, e.g., uniform matroid (the subset selection problem), partition matroid (submodular welfare/partition problem). The problem is known to be NP-hard even for special cases, and the earliest theoretical results on this problem date back to the seminal work of  \cite{edmonds1971matroids, nemhauser1978best, Fisher1978}, that derived tight approximation guarantees. 
In particular, for the general problem, the greedy algorithm is known to achieve a $1/2$-approximation (\cite{Fisher1978}), while for the uniform matroid (subset selection problem), where the objective is to select the optimal subset under a cardinality constraint, the greedy algorithm is known to be $(1-1/e)$-approximate (\cite{edmonds1971matroids, nemhauser1978best}). Moreover, these guarantees are tight and instance independent. 
%The max$-k$-cover problem
%(\cite{feige1998threshold}

Even though the theoretical work limits the instance independent guarantees for the greedy algorithm to be $1/2$ or $1-1/e$, in practice, the performance of the greedy algorithm is far better, sometimes even close to the optimal. 
To explain this phenomenon, work on instance dependent guarantees started with (\cite{CONFORTI1984251}), which showed that using the concept of {\it curvature}, the approximation guarantee of the greedy algorithm can be improved from $1/2$ to $\frac{1}{1+c}$, where $c$ is the curvature that captures the distance of the function from being linear. Lower curvature is better, with zero-curvature (modular) giving the optimal solution. For the special case of subset selection problem, the guarantee can be improved from $(1-1/e)$ to 
$(1-e^{-c})/c$ ($\rightarrow 1$ as $c \rightarrow 0$) (\cite{CONFORTI1984251}). 
In more recent work,  (\cite{DBLP:journals/corr/abs-1709-02910}) has improved the guarantee to $(1-\gamma_h/e-\epsilon)$, where $\gamma_h$ is the $h$-curvature and for any $\epsilon>0$. Most of the theoretical work on the greedy algorithm assumes the value oracle model, where a polynomial algorithm is assumed to exist that can compute the optimal increment in each iteration. To obviate this possibly restrictive assumption, approximate greedy algorithms were considered by \cite{goundan2007revisiting}, where the increment is only available up to a certain approximation guarantee. 
 There is also work on finding guarantees for non-monotone submodular maximization problems (see, e.g., \cite{feige2011maximizing}). 

Compared to deterministic algorithms, randomized algorithms (\cite{Calinescu:2011:MMS:2340436.2340447}) can improve the instance dependent guarantee to $(1-e^{-c})/c$ (with the continuous greedy algorithm) for the general problem, which can be further refined for the subset selection problem to 
$(1-c/e)$ (\cite{DBLP:journals/corr/SviridenkoW13}) using a non-oblivious local search algorithm.

In addition to the subset selection problem, another important special case of the general problem is the submodular welfare maximization/partition problem, where there is a set of items/resources $\cR$ that has to be partitioned among the set of $n$ agents, each agent has a submodular valuation function $f_i$ over the subsets of $\cR$, and the problem is to find the partition of $\cR$ that maximizes the sum of the agents' valuations after partition. This problem was addressed in (\cite{Fisher1978}) itself that gave a $1/2$-approximate guarantee for the greedy algorithm, which surprisingly holds even in the online setting (the elements of the resource set $\cR$ are revealed sequentially, and on arrival of each new element it has to be assigned irrevocably to one of the agents). Randomized algorithm with 
$(1-1/e)$-approximation guarantee was proposed in (\cite{vondrak2008optimal}) for this problem. Instance dependent guarantees as a function of the curvature $c$ for the general problem of course carry over to this problem as well. 

For deriving instance dependent guarantees, the motivation to consider the curvature of the submodular function was that if the curvature is small, then the greedy algorithm remains `close' to the optimal solution. In a similar spirit, in this paper, we consider a 
new measure of the problem instance and the greedy algorithm, called the {\it discriminant}, where larger discriminant helps the greedy algorithm to stay `close' to the optimal. 
%Loosely the discriminant measures the ratio of the incremental gain made by the greedy algorithm and 
We exploit the discriminant for finding improved instance dependent guarantees for the greedy algorithm when used to solve the general problem, the partition problem, and the online partition problem under the value oracle model. 

We begin the discussion on discriminant using the partition problem and then describe the corresponding definition of the discriminant for the general problem.
For the partition problem, the greedy algorithm at each iteration choses the item-agent pair that maximizes the incremental valuation, and assigns the chosen item to the chosen agent.
We define the discriminant $d_s$ at iteration $s$ of the greedy algorithm, as the ratio of the incremental increase in valuation made by the greedy algorithm (because of the item-agent pair chosen by greedy) and the best incremental increase (possible) in valuation among all other agents (not chosen by greedy) for the item chosen by the greedy algorithm in iteration $s$, given the past choices of the greedy algorithm until iteration $s-1$.
Formal definition of $d_s$ is given in Definition \ref{defn:disc}. It is easy to see that uniformly (over all iterations) large discriminant should help the greedy algorithm in staying `close' to the optimal solution.

The intuition behind considering the discriminant becomes clear especially for the following  asymmetric partition problem, where among the $n$-agents, one of them (say $i$) has a valuation such that $f_i(S) > > f_j(S)$  $\forall \ j \ne i, \forall \ S \subset \cR$. Clearly, the greedy algorithm (assign all resources to agent $i$) is optimal, while
the best known bound for it over this instance is
$(1-e^{-c})/c$, where $c= \max_i c_i$ ($c_i$ is the curvature for user $i$) can be large, resulting in a poor guarantee.
Incidentally though, the discriminant remains uniformly large throughout the execution of the greedy algorithm for this example, indicating that discriminant may be related with the performance of the greedy algorithm and can imply better guarantees. 
%Note that 

%For the asymmetric partition problem, in each step/iteration of the greedy algorithm, the  and the greedy algorithm's performance is indeed close to optimal. 
%The question we ask: can we bridge this theoretical gap for the greedy algorithm. 

So the first question we ask: can we generalize this intuition and derive a theoretical guarantee on the approximation ratio of the greedy algorithm for the partition problem as a function of the discriminant without losing out on the dependence of the curvature. The answer turns out to be positive: we show that the greedy algorithm (with a slight modification for tie-breaking) can achieve an approximation ratio of 
\begin{equation}\label{eq:res1}
\min\left\{1, \frac{1}{\max_s\{\frac{1}{d_s} + c_s\}}\right\},
\end{equation} where $c_s$ is the curvature of the user chosen at iteration $s$ of the greedy algorithm. 
%If we define,  
%$d = \min_s d_s$ as the minimum discriminant over all the steps of the greedy algorithm, then the guarantee simplifies to $\min\left\{1, \frac{1}{1/d + c}\right\}$ where $c = \max_{i\in \cI} c_i$, where $\cI$ is the set of agents.
This result nicely explains the optimal performance of the greedy algorithm for the asymmetric partition problem, since for that $d_s$ is very large for all $s$, and our approximation guarantee approaches $1$ as $d_s \rightarrow \infty, \ \forall \ s$ and since $c_s\le 1$. By definition $d_s\ge 1$, thus, compared to the previous best known approximation guarantee of $\frac{1}{1 + c}$ ($c = \max_{u \in\text{user set}} c_u$) for the greedy algorithm (\cite{CONFORTI1984251}), our result is stronger unless $d_s=1$ for some iteration $s$ and the curvature of the user chosen in iteration $s$ is the largest among all the users, in which case it equals $\frac{1}{1 + c}$. So our result provides a newer and stronger guarantee for asymmetric problems, when $d_s$ remains large, and exploits a new dimension (discriminant) of the submodular partition problem that is tied to the greedy algorithm.  
Similar to the earlier approximation guaratees for the greedy algorithm (\cite{Fisher1978}), we show that the derived approximation guarantee \eqref{eq:res1} holds  for even the online partition problem, and the bound is tight. We refer the reader to (\cite{44224}) for more details and review of the recent progress on the online problem under stochastic/secretarial settings). We also provide some intuition on the specific form of dependence of the discriminant on the approximation guarantee \eqref{eq:res1} in Remark \ref{rem:intuition}.

%One might expect the performance of the greedy algorithm to improve with large $d$, but why precisely being proportional to $1/d$ is not obvious. We give an intuitive argument towards this and also construct a partition problem instance, where the approximation guarantee cannot be any better, proving the tightness of the analysis. 

Next, the natural question is: can we derive discriminant dependent guarantees for the greedy algorithm for the general submodular maximization problem under a matroid constraint. The answer to this question is also yes, however, the guarantee is little different and is given by 
\begin{equation}\label{eq:result}\min\left\{1, \frac{1}{\{\frac{1}{d_{\min}} + c\}}\right\},
\end{equation} where  $d_{\min} = \min_s d_s$, and the discriminant at iteration $s$ $d_s$ for this case is defined as: given the past choices of the greedy algorithm until iteration $s-1$, $d_s$ is the ratio of the incremental increase in valuation (item chosen by greedy) and the best incremental increase (possible) in valuation among all other items (other than the one chosen by the greedy that are still available for selection) in iteration $s$ (formal definition is provided in Definition \ref{defn:discgen}). Once again as $d_{\min}\ge 1$, our guarantee subsumes the previous known result of $1/(1+c)$, and matches that only if $d_s=1$ is some iteration $s$. 
%We also  provide a instance which matches the guarantee. 

The intuition for the approximation guarantee \eqref{eq:result} can be developed by considering the special case of the uniform matroid (subset selection problem), where the greedy algorithm selects a new element that has the largest incremental increase 
in function valuation at each iteration. 
If in each iteration $s\ge 1$, $d_s$ is large, the greedy algorithm is making rapid progress towards the optimal valuation, by selecting `near-optimal' elements, since the elements it rejects have comparatively low incremental valuation. We also show that this guarantee is tight for the general problem. 
%Unfortunately, however, we do not have similar result in the online case for the general problem.

%In the context of prior work on greedy algorithms for the submodular maximization problem, where the last known fundamental improvement was made in \cite{CONFORTI1984251},
Our work exploits an unexplored parameter of the greedy algorithm, discriminant, 
and provides a new guarantee that subsumes all previous guarantees. 
The utility of discriminant based guarantee is easily manifested for the submodular partitioning problems, where the valuation functions for different users have inherent asymmetry, such that the discriminants are uniformly large, e.g. in subcarrier and power allocation in wireless systems (\cite{thekumparampil2016combinatorial}).  
As far as we know, this is the first time an algorithm dependent (greedy algorithm) and instance dependent approximation guarantee has been derived for submodular maximization problem. Even though the guarantee (discriminant) is algorithm dependent, however, since the greedy algorithm is deterministic, the discriminants can be computed once the problem instance is specified, and computational complexity of finding the discriminant is same as the complexity of the greedy algorithm. 

In this paper, we have discovered a new connection between the multiplicative gap (which we call the discriminant) between the locally best increment made by the greedy algorithm and the next best increment possible, and its performance guarantee for the submodular maximization problem that has wide applications. The discriminant appears to be a fundamental quantity in studying greedy algorithms, and we believe that such an approach can also lead to improved guarantees for similar combinatorial problems, e.g., the generalized assignment (GAP) problem (\cite{fleischer2006tight}), where the greedy algorithm is known/observed to perform well.

\newtheorem{problem}{Problem}
\section{Monotone Submodular Maximization over a Matroid}

\begin{definition} \label{d1}
(Matroid)
\noindent 
A matroid over a finite ground set $N$ is a pair $(N,\mathcal{M})$, where $\mathcal{M}\subseteq 2^N$ (power set of $N$) that satisfies the following properties:
\begin{enumerate}
    \item $\phi \in \mathcal{M}$,
    \item If $T \in \mathcal{M} \text{ and } S \subset T \Rightarrow S \in \mathcal{M}$ [independence system property],
    \item If $S, T \in \mathcal{M} : |T| > |S| \Rightarrow \exists \ x \in T \setminus S : S \cup \{ x\} \in \mathcal{M}$ [augmentation property].
    $\mathcal{M}$ is called the family of independent sets.
\end{enumerate}
\end{definition}

\begin{definition} \label{d2}
(Rank of a matroid) For $S\subseteq N$, the rank function of a matroid $(N,\mathcal{M})$ is defined as 
$r(S) = \max\{|M|: M\subseteq S, M \in \cM\}$, and rank of the matroid is $r(N)$, the cardinality of the largest independent set. For a matroid to have rank $K$, there must exist no independent sets of cardinality $K+1$.
\end{definition}

%\begin{definition} \label{d3}
%(Girth of a matroid) The girth of a matroid is the cardinality of the smallest subset of the ground set $N$ that is not an independent set. For a matroid with girth $h+1$, this is equivalent to saying that $h$ is the maximum value such that all subsets of the ground set of cardinality $h$ are independent.
%\end{definition}

For many applications, two special cases of matroids are of interest, namely the uniform and the partition matroid that are defined as follows.

\begin{definition} \label{d4}
(Uniform Matroid)
\noindent For some $K>0$, the uniform matroid over a ground set $N$ is defined as $(N,\mathcal{M}^u)$, where $\mathcal{M}^u = \{ S : S \subseteq N, |S| \le K\}$.
The uniform matroid has rank $K$.
\end{definition}

\begin{definition} \label{d5}
(Partition Matroid) A ground set $N$, and its partition $\{ P_i : i = 1,2,\dots,p\}$, 
$\cup_i P_i = N$, and $\ P_i \cap P_j = \phi, i \ne j$ are given. Given integers 
$k_i : 1 \le k_i \le |P_i|$, the partition matroid over $N$ is defined as 
$(N,\mathcal{M}^p)$, where $\mathcal{M}^p = \{ S : S \subseteq N \text{ and } |S \cap P_i| \le k_i \text{ for } i = 1,2,\dots,p \}$.
This partition matroid has rank $\sum_{i=1}^p k_i$.\end{definition}

\begin{definition} \label{defn:submod}
(Monotone and Submodular Function)
A set function $Z: 2^N \rightarrow \mathbb{R}$ is defined to be
monotone if for $S\subset T \subseteq N$, $Z(S) \le Z(T)$,  and submodular if for all 
$ T \subseteq N \setminus \phi, S \subset T \text{ and } x \notin T$, 
$\ Z(T \cup \{x\}) - Z(T) \le Z(S \cup \{x\}) - Z(S) \text{ [Diminishing Returns]}$.
Without loss of generality, we assume $Z(\phi)=0$ $(\Rightarrow Z(S) \ge 0, \forall \ S \subset N)$.
\end{definition}
%\vspace{\baselineskip}
\begin{problem} \label{prob:1}
%(Monotone Submodular Maximization over a Matroid)
Given a matroid $(N,\cM)$ of rank $K$, and a monotone and submodular function $Z:2^N\rightarrow \bbR$, the problem is to find $ \text{max }\{Z(S) : S \in \mathcal{M}\}$.
\end{problem}

%\noindent The matroid constraint generalizes the solution space of many problems like the $k$-subset selection problem, and the resource partitioning problem among others. On the other hand submodularity is a natural assumption for many classes of problems that satisfy the property of diminishing returns. Problem \ref{prob:1} can be shown to be NP-hard under the value oracle model [ref].

Since $Z(S)$ is non-decreasing, 
%we can argue that in general, a global optimum of Problem \ref{prob:1} has cardinality equal to the rank $K$ of the matroid. We can similarly argue that any solution, $S$ to Problem \ref{prob:1} cannot be better than $S' \supset S : |S'| = K$. Henceforth, 
we only consider feasible solutions to Problem \ref{prob:1} that have cardinality $K$ even if there exists a smaller solution with the same valuation. For the rest of the paper we use the following notation. Given a set $S$ associated with an ordering $(s_1,s_2, \dots, s_{|S|})$, $S^{i}$ denotes the partial ordering $(s_1,s_2, \dots, s_{i})$.
 The increment in valuation of set $S$ upon addition of element $q$ to $S$ is defined as
$$ \rho_q(S) = Z(S \cup \{q\}) - Z(S). $$ 
The most natural algorithm to solve Problem \ref{prob:1} is a greedy algorithm (Algorithm \ref{alg0}) that incrementally adds an element to the existing set that provides the largest increase in the set valuation as described next.
\begin{algorithm}[H]
\caption{Greedy Algorithm for monotone submodular maximization over a rank-$K$ matroid $(N,\mathcal{M})$} \label{alg0}
\begin{algorithmic}
\Procedure{$\mathsf{GREEDY}$}{}
    \State \textbf{Initialize:} $G^0=\phi, i=1$
    \While{$i \le K$}
        \State $q_i \gets \underset{q}{\text{ argmax}}\left\{ \rho_q(G^{i-1}) : \ q \cup G^{i-1} \in \mathcal{M} \right\}$
        \Comment{Pick arbitrarily in case of ties}
        \State $G^i \gets G^{i-1} \cup \{q_i\}$
        \State $i \gets i+1$
    \EndWhile 
    \State \textbf{Return} $G=G^K$
\EndProcedure
\end{algorithmic}
\end{algorithm}

%In this paper, we hope to provide improved problem-instance dependent lower bounds on the approximation ratio of the greedy algorithm for Problem \ref{prob:1}.
We let $\Omega$ denote the global optimal solution and $G$ the solution generated by the $\mathsf{GREEDY}$ algorithm for the problem in context, respectively.
Then the following guarantees are known for Problem \ref{prob:1}.
\begin{theorem} \cite{Fisher1978} \label{t1}For Problem \ref{prob:1}, $\frac{Z(G)}{Z(\Omega)} \ge \frac{1}{2}$.
\end{theorem}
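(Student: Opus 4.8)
The plan is to compare the greedy base $G=\{q_1,\dots,q_K\}$, indexed by the order in which its elements were added so that $G^{i}=\{q_1,\dots,q_i\}$, with the optimal solution $\Omega$, both of which may be taken to be bases of $(N,\cM)$, i.e.\ independent sets of cardinality exactly $K$. (The full-cardinality convention is the one fixed just before the statement, justified by monotonicity of $Z$.) The whole argument rests on pairing the two bases element by element so that each optimal element can be ``charged'' against the greedy step that, in effect, displaced it.

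First I would invoke the strong basis--exchange property of matroids (Brualdi's bijection theorem): for two bases $G$ and $\Omega$ there is a bijection $\pi:\Omega\to G$ with $\pi(o)=o$ for $o\in G\cap\Omega$ and, for every $o\in\Omega$, $(G\setminus\{\pi(o)\})\cup\{o\}\in\cM$. This is the one genuinely matroid-theoretic ingredient, and I expect it to be the main obstacle: it does not drop out of the augmentation axiom (Definition \ref{d1}, property 3) in a single step, but requires a Hall-type matching argument on the exchange graph. Everything after it is submodularity bookkeeping.

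Next, fix $o\in\Omega$ and let $q_j=\pi(o)$ be its partner, the element chosen by greedy at step $j$. Since $G^{j-1}\subseteq G\setminus\{q_j\}$ and $(G\setminus\{q_j\})\cup\{o\}\in\cM$, the hereditary (independence-system) property gives $G^{j-1}\cup\{o\}\in\cM$, so $o$ was an eligible candidate at step $j$. The argmax rule in Algorithm \ref{alg0} therefore yields $\rho_{q_j}(G^{j-1})\ge\rho_o(G^{j-1})$, and diminishing returns applied to $G^{j-1}\subseteq G$ gives $\rho_o(G^{j-1})\ge\rho_o(G)$. Chaining these, $\rho_o(G)\le\rho_{q_j}(G^{j-1})$ for every $o\in\Omega$.

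Finally I would sum this inequality over $o\in\Omega$. Because $\pi$ is a bijection onto $G$, the right-hand side becomes $\sum_{j=1}^{K}\rho_{q_j}(G^{j-1})$, which telescopes to $Z(G^K)-Z(G^0)=Z(G)$. For the left-hand side, monotonicity and submodularity give $Z(\Omega)-Z(G)\le Z(\Omega\cup G)-Z(G)\le\sum_{o\in\Omega}\rho_o(G)$, obtained by adding the elements of $\Omega\setminus G$ to $G$ one at a time, each increment being at most $\rho_o(G)$ by diminishing returns. Combining the two bounds yields $Z(\Omega)-Z(G)\le Z(G)$, i.e.\ $Z(\Omega)\le 2Z(G)$, which is exactly $Z(G)/Z(\Omega)\ge 1/2$.
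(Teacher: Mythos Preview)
Your proof is correct. Note, however, that the paper does not actually prove Theorem~\ref{t1}; it is merely quoted from \cite{Fisher1978}. That said, the paper's toolkit contains precisely the ingredient you flag as ``the main obstacle'': Lemma~\ref{l2} (taken from \cite{CONFORTI1984251}) asserts that for two bases $G=(g_1,\dots,g_K)$ and $\Omega$ one can order $\Omega=(\omega_1,\dots,\omega_K)$ so that $G^{i-1}\cup\{\omega_i\}\in\mathcal{M}$ for every $i$, with $\omega_i=g_i$ whenever $\omega_i\in G\cap\Omega$. This is exactly the Brualdi bijection you invoke, recast as an index-preserving ordering: setting $\pi(\omega_i)=g_i$ recovers your map, and conversely. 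From Lemma~\ref{l2} the paper's route to the $1/2$ bound would be the same three lines you wrote---$\rho_{\omega_i}(G)\le\rho_{\omega_i}(G^{i-1})\le\rho_{g_i}(G^{i-1})$, then sum and combine with Lemma~\ref{l1}. So your argument and the paper's machinery are essentially identical; the only cosmetic difference is that you name the matroid fact as ``Brualdi's strong basis exchange'' while the paper packages it as an ordering lemma attributed to Conforti--Cornu\'ejols.
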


\noindent The above $\frac{1}{2}$-approximation bound is a global lower bound on the performance of the greedy algorithm for Problem \ref{prob:1}, which can be further improved with the knowledge of the \textbf{curvature} parameter, $c$ of the monotone and submodular valuation function, that is defined as
\begin{equation}
c = 1 - \underset{S,j\in S^*}{\text{min }} \frac{\rho_j(S)}{\rho_j(\phi)}, \quad \text{where } \ S^* = \{j: j \in N \setminus S,\ \rho_j(\phi) > 0\}. \label{def:curvature}
\end{equation}

\noindent By submodularity, we have $c \le 1$. The case $c = 0$ implies that the function valuations are linear.

\begin{theorem} [\cite{CONFORTI1984251} Theorem 2.3] \label{t2}
For Problem \ref{prob:1}, $\frac{Z(G)}{Z(\Omega)} \ge \frac{1}{1+c}$.
\end{theorem}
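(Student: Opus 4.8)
The plan is to compare the greedy basis $G=(q_1,\dots,q_K)$ with the optimal solution $\Omega$, both of which are bases of the rank-$K$ matroid (each of size $K$, by the cardinality convention stated before Problem~\ref{prob:1}). First I would rewrite the curvature as a marginal lower bound: definition \eqref{def:curvature} is equivalent to $\rho_j(S)\ge (1-c)\,\rho_j(\phi)$ for every $S$ and every $j\notin S$ (this holds trivially when $\rho_j(\phi)=0$, since submodularity and monotonicity then force $\rho_j(S)=0$). I would also abbreviate the greedy increments by $\delta_i:=\rho_{q_i}(G^{i-1})$, so that the telescoping $Z(G)=\sum_{i=1}^K\delta_i$ holds, and split the greedy indices into $A=\{i:q_i\in\Omega\}$ and $B=\{i:q_i\notin\Omega\}$.

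The combinatorial heart of the argument is a matroid exchange lemma: there exists a bijection $\sigma:\Omega\to G$ that fixes $\Omega\cap G$ and enjoys the prefix-eligibility property that whenever $\sigma(o)=q_i$ with $o\in\Omega\setminus G$, the set $G^{i-1}\cup\{o\}$ is independent. This is the ingredient I expect to be the main obstacle: it does not follow from a single use of the augmentation axiom (Definition~\ref{d1}, property~3), but from its iterated use, e.g.\ via a Hall's-theorem argument on the bipartite eligibility graph between $B$ and $\Omega\setminus G$, or from a symmetric-exchange bijection reordered to respect the greedy insertion order. The payoff of eligibility is that when greedy chose $q_i$ the element $o$ was a feasible candidate, so by the $\mathrm{argmax}$ in Algorithm~\ref{alg0} we get $\rho_o(G^{i-1})\le \delta_i$.

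With the bijection in hand I would establish two estimates for $Z(G\cup\Omega)$. For the upper bound I telescope $Z(G\cup\Omega)-Z(G)$ over $\Omega\setminus G$; submodularity shrinks each marginal to $\rho_o(G)\le\rho_o(G^{i-1})\le\delta_i$ (eligibility is used in the last step), and since $\sigma$ maps $\Omega\setminus G$ onto $\{q_i:i\in B\}$ this gives $Z(G\cup\Omega)-Z(G)\le\sum_{i\in B}\delta_i$. For the lower bound I telescope $Z(G\cup\Omega)-Z(\Omega)$ over $G\setminus\Omega$; here the curvature inequality bounds each marginal $\rho_{q_i}(\cdot)\ge(1-c)\,\rho_{q_i}(\phi)\ge(1-c)\,\delta_i$ (the second step is submodularity, $\rho_{q_i}(\phi)\ge\rho_{q_i}(G^{i-1})$), yielding $Z(G\cup\Omega)-Z(\Omega)\ge(1-c)\sum_{i\in B}\delta_i$.

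Finally I would subtract the two estimates. Since $Z(\Omega)\le Z(G\cup\Omega)$ by monotonicity, combining them gives $Z(\Omega)+(1-c)\sum_{i\in B}\delta_i\le Z(G)+\sum_{i\in B}\delta_i$, hence $Z(\Omega)\le Z(G)+c\sum_{i\in B}\delta_i\le Z(G)+c\sum_{i=1}^K\delta_i=(1+c)\,Z(G)$, which is exactly $Z(G)/Z(\Omega)\ge 1/(1+c)$. Note that the curvature enters only through the lower bound on the value that the greedy-only elements add on top of $\Omega$; substituting $c=1$ recovers the telescoped inequality underlying Theorem~\ref{t1}, a useful consistency check.
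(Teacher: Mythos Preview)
Your proposal is correct and essentially reproduces the Conforti--Cornu\'ejols argument that the paper cites for Theorem~\ref{t2}: the two-sided estimate on $Z(G\cup\Omega)$ (upper via submodularity, lower via curvature) is exactly how Lemma~\ref{l5} is obtained (see the proof of Lemma~\ref{l6}, equations \eqref{eq:36000}--\eqref{eq:46700}), and the prefix-eligible bijection you flag as the main obstacle is precisely Lemma~\ref{l2}, with the index correspondence $g_i\in G\setminus\Omega\Leftrightarrow\omega_i\in\Omega\setminus G$ recorded in Lemma~\ref{l3}. Specializing Lemma~\ref{l5} with $A=G$, $B=\Omega$ and bounding each $\rho_{\omega_i}(G)\le\rho_{\omega_i}(G^{i-1})\le\delta_i$ via the eligibility from Lemma~\ref{l2} gives $Z(\Omega)\le(1+c)Z(G)$ exactly as you sketch; the monotonicity remark in your final paragraph is unnecessary, since the chain $Z(\Omega)+(1-c)\sum_{i\in B}\delta_i\le Z(G\cup\Omega)\le Z(G)+\sum_{i\in B}\delta_i$ already follows directly from your two estimates.
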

%\begin{proof}
%For completeness, we provide a short proof in  Appendix \ref{app:c} that is also helpful later.
%\end{proof}

The first main result of this paper is presented in the Theorem \ref{t6}, where we derive stronger approximation guarantees for the $\mathsf{GREEDY}$ algorithm as a function of `discriminant' (as defined next) in addition to the curvature as done in Theorem \ref{t2}. 
\begin{definition} \label{defn:discgen}
Let $S_\bot := \{ j : j \in N \setminus S,\ j \cup S \in \mathcal{M}\}$. Then, given the selected set $G^{i-1}$ by the $\mathsf{GREEDY}$ algorithm at the end of iteration $i-1$, the \textbf{discriminant}  $d_i$  at iteration $i$ is defined as:
$$d_i=  \frac{ \rho_{g_i}(G^{i-1}) }{\underset{ g'_i \ne g_i}{\text{max }} \rho_{g_i'}(G^{i-1}) },$$
where $g_i = \underset{g}{\text{argmax }}\rho_g(G^{i-1}), \text{ and } g'_i \in G^{i-1}_\bot. $
Define $d_i := \infty \ \text{ if } \forall g'_i \in G^{i-1}_\bot : g'_i \ne g_i,\ \rho_{g_i'}(G^{i-1}) = 0$.
Moreover, the \textbf{minimum discriminant} is defined as
$d_{\text{min}} = \underset{i < i_0}{\text{min }} d_i$, where $i_0 = \text{min }\{i: |G^{i-1}_\bot| = K-i+1 \}$, where $K$ is the rank of the matroid.
\end{definition} 
Consider the element $g_i$ selected by the $\mathsf{GREEDY}$ algorithm in iteration $i$ and consider the incremental gain $\rho_{g_i}(G^{i-1})$. Find the best element $g_i'$ other than $g_i$ such that $G^{i-1} \cup g_i' \in \mathcal{M}$, and compute the incremental increment $\rho_{g_i'}(G^{i-1})$. The ratio of the two is defined as the discriminant in iteration $i$. Moreover, index $i_0$ is the earliest iteration $i$ in the execution of the $\mathsf{GREEDY}$ algorithm where the number of items that are not part of the $\mathsf{GREEDY}$ algorithm's chosen set equals $K-i+1$. This is useful, since after iteration $i_0$ we show that the items chosen by the $\mathsf{GREEDY}$ algorithm belong to the optimal set as well. Thus, to compute $d_{\text{min}}$, minimization needs to be carried out over a smaller set.
%For solving Problem \ref{prob:1}, we will consider the `vanilla' $\mathsf{GREEDY}$ algorithm (Algorithm \ref{alg0}), and show the third main result of this paper as follows.

\begin{theorem} \label{t6}
    Using the $\mathsf{GREEDY}$ algorithm (Algorithm \ref{alg0}) for Problem \ref{prob:1} guarantees:
    $$\frac{Z(G)}{Z(\Omega)} \ge \text{min} \left( 1, \frac{1}{ \left(c + \underset{i < i_0}{\text{max }}\frac{1}{d_i}\right)}\right) = \text{min} \left( 1, \frac{1}{c + \frac{1}{d_{\text{min}}}}\right).$$
\end{theorem}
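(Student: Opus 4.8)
The plan is to adapt the classical Nemhauser--Wolsey--Fisher exchange argument, but to track the incremental gains quantitatively using the discriminant rather than discarding them. Let $G = G^K = \{g_1,\dots,g_K\}$ be the greedy solution in order of selection, and let $\Omega$ be the optimal solution. By the augmentation property of the matroid I would first fix a bijection $\pi$ between $\Omega$ and $G$ (or more precisely an injection matching optimal elements to greedy slots) so that at each greedy step $i$ the matched optimal element $o_i := \pi^{-1}(g_i)$ is still available, i.e. $G^{i-1} \cup \{o_i\} \in \mathcal{M}$. The standard way to obtain such a matching is to apply the exchange/augmentation property repeatedly, or equivalently to invoke the known fact that for two independent sets of equal cardinality there is a perfect matching in the exchange graph; I would set this up carefully since it is what makes each $o_i$ a legitimate competitor to $g_i$ at step $i$.

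Given the matching, the key inequality at each step is that $\rho_{g_i}(G^{i-1}) \ge \rho_{o_i}(G^{i-1})$ by greedy optimality, and moreover, since $o_i \ne g_i$ is an eligible element, $\rho_{o_i}(G^{i-1}) \le \max_{g_i' \ne g_i} \rho_{g_i'}(G^{i-1}) = \rho_{g_i}(G^{i-1}) / d_i$ directly by the Definition \ref{defn:discgen} of the discriminant. So each matched optimal element contributes at most $\rho_{g_i}(G^{i-1})/d_i \le \rho_{g_i}(G^{i-1})/d_{\min}$. The curvature then enters through a lower bound on how much of the optimal value the greedy gains even accounting for the ``late'' elements of $\Omega$: I would use \eqref{def:curvature} to write, for each optimal element $o$ with $o \notin G$, the telescoped bound $\rho_o(G^{i-1}) \ge (1-c)\,\rho_o(\phi)$, and combine this with submodularity to control the total marginal value $\sum_{o \in \Omega \setminus G} \rho_o(G)$ that greedy might ``miss.''

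I would then assemble these pieces into a single chain. Writing $Z(\Omega) \le Z(G) + \sum_{o \in \Omega \setminus G} \rho_o(G)$ (submodularity plus monotonicity), the curvature bound converts $\sum_{o} \rho_o(G)$ into a term of the form $c\, Z(G)$ plus a residual, while the discriminant bound converts the per-step competition into $\tfrac{1}{d_{\min}} Z(G)$. The target inequality $Z(\Omega) \le (c + 1/d_{\min}) Z(G)$, which rearranges to exactly the claimed ratio, should emerge. The role of the cutoff index $i_0$, flagged in the definition, is crucial here: the excerpt asserts that after iteration $i_0$ the greedy picks belong to $\Omega$ as well, so the missed-element sum and the discriminant maximum need only range over $i < i_0$; I would prove this stabilization claim (that $g_i \in \Omega$ for $i \ge i_0$, equivalently the unmatched remainder is forced) as a separate lemma, since otherwise the $\max_{i<i_0}$ in the statement is not justified.

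The main obstacle I anticipate is making the matching between $\Omega$ and $G$ compatible with the greedy ordering so that $o_i$ is simultaneously (a) available at step $i$ and (b) distinct from $g_i$, and then handling the $i \ge i_0$ tail correctly. Naively matching can fail property (a) for matroids more general than the uniform one, so I expect to need the exchange-graph perfect-matching theorem for matroids (that for bases $B_1,B_2$ there is a bijection $\phi:B_1\setminus B_2 \to B_2 \setminus B_1$ with $B_1 - x + \phi(x)$ independent for each $x$), together with an argument that this matching can be realized consistently step-by-step. A secondary delicate point is combining the curvature factor $(1-c)$ and the discriminant factor $1/d_{\min}$ without double-counting, so that they add rather than interact multiplicatively; I would verify the bookkeeping by checking the two degenerate limits ($d_{\min}=1$ recovering Theorem \ref{t2}, and $c=0$ giving the modular case) as a consistency test.
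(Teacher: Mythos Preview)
Your matching and discriminant steps are right and coincide with the paper's approach: the paper invokes the Conforti--Cornu\'ejols ordering (Lemma~\ref{l2}) to produce $(\omega_1,\dots,\omega_K)$ with $G^{i-1}\cup\{\omega_i\}\in\mathcal{M}$, which is exactly your exchange matching, and then Lemma~\ref{l9} is precisely your bound $\rho_{\omega_i}(G^{i-1})\le\rho_i/d_i$ for $\omega_i\in\Omega\setminus G$. The $i_0$ tail is also handled as a separate claim, as you anticipate.

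The gap is in your curvature step. You propose to apply \eqref{def:curvature} to the \emph{optimal} elements $o\in\Omega\setminus G$, obtaining $\rho_o(G^{i-1})\ge(1-c)\rho_o(\phi)$, and then to ``control'' $\sum_o\rho_o(G)$ with it. But that is a \emph{lower} bound on $\rho_o$, whereas you need an \emph{upper} bound on that sum. If you start from your displayed inequality $Z(\Omega)\le Z(G)+\sum_{o\in\Omega\setminus G}\rho_o(G)$ and feed in only the discriminant bound via the matching, you get at best $Z(\Omega)\le(1+1/d_{\min})Z(G)$, i.e.\ a $1$ where the theorem has a $c$. No manipulation of $\rho_o(\cdot)$ via curvature can shrink that $1$.

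The correct move, and what the paper does, is to apply curvature to the \emph{greedy} elements $g_i\in G\setminus\Omega$. Expand $Z(G\cup\Omega)$ two ways and use $\rho_{g_i}(\Omega\cup G^{i-1})\ge(1-c)\rho_{g_i}(G^{i-1})$ on the greedy-side telescoping; subtracting yields
\[
Z(\Omega)\ \le\ c\!\!\sum_{i:\,g_i\in G\setminus\Omega}\!\!\rho_i\ +\!\!\sum_{i:\,g_i\in G\cap\Omega}\!\!\rho_i\ +\!\!\sum_{\omega\in\Omega\setminus G}\!\!\rho_\omega(G),
\]
which is the paper's Lemma~\ref{l5}. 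Now the discriminant bound on the last sum, together with the index alignment $g_i\in G\setminus\Omega\Leftrightarrow\omega_i\in\Omega\setminus G$ (Lemma~\ref{l3}), collapses the first and third sums onto the \emph{same} index set, giving $\sum_{i:\,g_i\in G\setminus\Omega}(c+1/d_i)\rho_i+\sum_{i:\,g_i\in G\cap\Omega}\rho_i$. So $c$ and $1/d_i$ add because they multiply the same $\rho_i$'s; your stated worry about ``double counting'' is not the obstacle --- the obstacle is that you aimed curvature at the wrong side of the exchange.
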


{\it Discussion:} Since $d_{\text{min}}\ge 1$, the approximation guarantee provided by Theorem \ref{t6} subsumes the best known guarantee for the greedy algorithm (Theorem \ref{t2}), 
and matches that only when there is a tie in some iteration before $i_0$, in which case it matches the result of Theorem \ref{t2} ($\frac{1}{1+c}$). Theorem \ref{t6} shows that if the problem instance has large discriminants, the greedy algorithm is theoretically far better than what was previously known. The proof of Theorem \ref{t6} is rather technical and does not allow simple intuitive explanation. 
We provide intuition for the specific form of the guarantee as the function of the discriminant ($1/d$) for the special case of the submodular partition problem in the next section in Remark \ref{rem:intuition}.

All proofs are provided in the appendices. Here we give a brief proof sketch for Theorem \ref{t6}.

\begin{proofsk}The key step in proving Theorem \ref{t6} is to show that 
\begin{equation}\label{eq:sketch} Z(\Omega)\le c\sum_{i : g_i \in G \setminus \Omega}\rho_{g_i}(G^{i-1}) + \sum_{i : g_i \in G \cap \Omega}\rho_{g_i}(G^{i-1}) + \sum_{i : \omega_i \in \Omega \setminus G} \rho_{\omega_i}(G),
\end{equation} via Lemma \ref{l5}, and using a particular ordering for $\Omega$ that is a function of the ordering of $G$ (defined in Lemma \ref{l2}). We then show in Lemma \ref{l9} that for each $i$, $\rho_{\omega_i}(G^{i-1})$ (which upper bounds $\rho_{\omega_i}(G)$) cannot be larger than $\frac{\rho_{g_i}(G^{i-1})}{d_i}$, and substitute this back into the upper bound \eqref{eq:sketch} for $Z(\Omega)$. Subsequently, as shown in Section \ref{sec6p1} B, the set $\{i : i \ge i_0\}$ is a subset of $\{i : g_i \in G \cap \Omega\}$, which implies that $\{i : i < i_0\}$ is a superset of $\{i : g_i \in G \backslash \Omega\}$ as well as $\{i : g_i \in \Omega  \backslash G\}$, and we  replace the summation index set for the first and the third term in \eqref{eq:sketch} with $\{i : i < i_0\}$
to further upper bound  $Z(\Omega)$. The statement of Theorem \ref{t6} then follows by rearranging terms and simplifying.
\end{proofsk}

%\end{bemark}

\subsection{Discussion on the definition of discriminant for Problem \ref{prob:1}} \label{sec6p1}
Essentially, the discriminant in each iteration (Definition \ref{defn:discgen}) is the ratio of the locally best increment made by the $\mathsf{GREEDY}$ algorithm and the next best possible increment at any iteration in its execution. However, there is a little subtlety which is explained as follows.  Note the following cases:
\begin{enumerate}
    \item[A] If $\underset{ g'_i \ne g_i }{\text{max }} \rho_{g_i'}(G^{i-1}) = 0$, $d_i$ is defined as $\infty$ for the sake of continuity and can be removed from the minimization over $i$ in the definition of $d_{\min}$. If $i_0 = 1$, the problem is trivial (there exists only $1$ feasible solution).
    \item[B] In the following argument, we show that in every iteration after $i_0$, the element chosen by the $\mathsf{GREEDY}$ algorithm belongs to the optimal solution, i.e., $\{i : i \ge i_0\} \subseteq \{i : g_i \in G \cap \Omega\}$. Let at a iteration $i$, $|G^{i-1}_\bot| = K-i+1$, that is, there exist exactly $K-i+1$ valid choices of elements that can be added to $G^{i-1}$ to generate $G^i$.
    \begin{enumerate}
        \item For such $i$, Lemma \ref{lemma:choices} states that $S=G^{i-1}_\bot$ is a unique set that satisfies $|S| = K-i+1$ and $G \cup G^{i-1}_\bot \in \mathcal{M}$. Thus it follows from Corollary \ref{c2}, that $G^{i-1}_\bot \subseteq \Omega$, where $\Omega$ is the optimal solution.
        \item From the uniqueness of $S$, it follows that $G = G^{i-1} \cup G^{i-1}_\bot$. Thus, $G^{i-1}_\bot \subseteq G$.
        \item From (a) and (b), it follows that for iteration $i$, $|G^{i-1}_\bot| = K-i+1$, then $G^{i-1}_\bot \in G \cap \Omega$.
        \item If in iteration $i$, $|G^{i-1}_\bot| = K-i+1$, then for all $j>i$, $|G^{j-1}_\bot| = K-j+1$.
    \end{enumerate}
    Recalling the definition of $i_0 = \text{min }\{i: |G^{i-1}_\bot| = K-i+1 \}$, from (c) and (d) we conclude that for every iteration $i \ge i_0$, we have $
            G^{i-1}_\bot \in G \cap \Omega$.
   % \item[$\bullet$] A tie in step $i$ means that there are multiple elements that give the best increment for the $\mathsf{GREEDY}$ algorithm to choose between in that step. Formally, a tie is said to occur in step $i$ if: $$|\underset{g \in G^{i-1}_\bot}{\text{argmax }} \rho_{g}(G^{i-1})| > 1.$$

    %\item[2.3] The condition $d_i=1$, is equal to observing a tie in step $i$.
\end{enumerate}

\begin{lemma}\label{lem:submodtight} The approximation guarantee obtained in Theorem \ref{t6} is tight.
\end{lemma}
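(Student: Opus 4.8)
The plan is to establish tightness constructively: for every target curvature $c\in[0,1]$ and every target discriminant $d\ge 1$, I will exhibit an instance of Problem~\ref{prob:1} on which the $\mathsf{GREEDY}$ algorithm has curvature exactly $c$, minimum discriminant $d_{\text{min}}=d$, and whose greedy output $G$ and optimum $\Omega$ satisfy $Z(G)/Z(\Omega)=\frac{1}{c+1/d}$. Since $c\le 1$ and $d\ge 1$, when $c+1/d\le 1$ the guarantee is the trivial bound $1$, which is attained by any modular function, for which $\mathsf{GREEDY}$ is exactly optimal on any matroid; so I focus on the regime $c+1/d>1$ where the bound is nontrivial.

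A convenient simplification removes the need to argue optimality of the candidate set directly. It suffices to produce an instance together with a feasible set $\Omega$ satisfying $Z(\Omega)=(c+1/d)\,Z(G)$: Theorem~\ref{t6} already gives $Z(G)\ge \frac{1}{c+1/d}Z(\Omega_{\mathrm{opt}})$ for the true optimum $\Omega_{\mathrm{opt}}$, and since $Z(\Omega_{\mathrm{opt}})\ge Z(\Omega)=(c+1/d)\,Z(G)$, the two inequalities sandwich to equality, forcing $\Omega$ to be optimal and the ratio to equal $\frac{1}{c+1/d}$. Thus the entire task reduces to realizing the prescribed $c$ and $d$ while exhibiting one feasible set of the stated value.

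Concretely I would use a partition matroid with $K$ parts, each of capacity one and containing two elements $\{g_i,\omega_i\}$, so that $\mathsf{GREEDY}$ must choose one element per part and the two candidate solutions $G=\{g_i\}$ and $\Omega=\{\omega_i\}$ are disjoint. I would realize $Z$ as a weighted coverage function (hence automatically monotone submodular): each $\omega_i$ covers a private element of weight $\mu$ (making the $\omega_i$ modular, so $Z(\Omega)=K\mu$), while each $g_i$ covers a private element of geometrically decreasing weight $\pi_i=\pi_1 d^{-(i-1)}$ together with a shared ``curvature'' element that all $g_i$ cover. The decreasing privates force $\mathsf{GREEDY}$ to select $g_1,g_2,\dots$ in order, with each $g_i$ strictly the best available choice (avoiding the ties that would collapse the discriminant to $1$); the ratio $\pi_i/\pi_{i+1}=d$ produces discriminant $d$ at each interior step; and the shared element produces the diminishing return that sets the curvature. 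The aim is to tune $\pi_1$, $\mu$, the shared weight, and $K$ so that every inequality in the proof of Theorem~\ref{t6} -- the curvature step of Lemma~\ref{l5} and the discriminant step of Lemma~\ref{l9}, i.e.\ $\rho_{\omega_i}(G)=\rho_{g_i}(G^{i-1})/d$ -- holds with equality, making \eqref{eq:sketch} an identity $Z(\Omega)=(c+1/d)\,Z(G)$.

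The main obstacle is the interaction between the curvature gadget and the per-step discriminant. The naive shared-element gadget is grabbed together with a private at the first step, so that $g_1$'s marginal is inflated by the shared weight while its nearest competitor $g_2$ still also covers the as-yet-uncovered shared element; this depresses the first-step discriminant strictly below $d$ and would make $d_{\text{min}}<d$. Resolving this -- by distributing the interference across consecutive pairs so that every step sees the same inflation, by deferring the curvature contribution to steps after the discriminant is measured, or by taking a limit $K\to\infty$ in which the first step's deficit is excluded from the binding minimum -- while simultaneously pinning the curvature to exactly $c$ (i.e.\ making the minimizing ratio in \eqref{def:curvature} equal $1-c$) and handling the last step's boundary (where no $g_{i+1}$ remains as the second-best) is the delicate part. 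I expect the cleanest route is asymptotic tightness: let $K\to\infty$ and choose the shared weight and $\mu$ as functions of $c$ and $d$ so that the realized curvature and minimum discriminant converge to $c$ and $d$ and the ratio converges to $\frac{1}{c+1/d}$.
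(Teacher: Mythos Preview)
Your plan aims at something stronger than what the paper actually proves. You want tightness of Theorem~\ref{t6} for every prescribed pair $(c,d_{\min})$: an instance with curvature exactly $c$, algorithmic discriminant $d_{\min}=d$, and ratio $\tfrac{1}{c+1/d}$. The paper instead establishes tightness of the intermediate bound~\eqref{eq22}, which is stated in terms of the optimum-dependent quantities $d_i'=\rho_i/\rho_{\omega_i}(G^{i-1})$, not the discriminants $d_i$. Their instance is a partition matroid on pairs $\{\nu_i,\epsilon_i\}$, $i=1,\dots,K$, with a valuation whose marginals are chained ($\rho_{\nu_i}$ drops when $\epsilon_{i-1}$ is present, $\rho_{\epsilon_i}$ drops when $\nu_{i+1}$ is present). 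By design $\epsilon_i$ and $\nu_{i+1}$ are tied at every step, so $d_i=1$ throughout; greedy breaks the tie adversarially to $G=\{\epsilon_1,\dots,\epsilon_K\}$, while $\Omega=\{\nu_1,\dots,\nu_K\}$, and one checks $d_i'=d$ and $\lim_{K\to\infty}Z(G)/Z(\Omega)=\tfrac{1}{c+1/d}$, matching~\eqref{eq22}. Read literally as a statement about the $d_{\min}$-formulation of Theorem~\ref{t6}, this only demonstrates tightness at $d_{\min}=1$.

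The obstacle you flag is exactly why the paper does not attempt your stronger target, and your proposed coverage gadget would not reach it. With a single shared element, the curvature inequality in Lemma~\ref{l5} is tight only at $i=1$: for $i\ge 2$ the shared element is already covered by $G^{i-1}$, so $\rho_{g_i}(\Omega\cup G^{i-1})=\rho_{g_i}(G^{i-1})$ and the factor $c$ is never incurred there; meanwhile that same shared element depresses $d_1$ strictly below $d$. Forcing both the curvature step and the discriminant step to bind at every $i$ requires the per-step interlocking the paper builds via its $\epsilon_i\leftrightarrow\nu_{i+1}$ chain---but that chain is precisely what creates the ties $d_i=1$. Your sandwich observation (use Theorem~\ref{t6} itself to certify optimality of the candidate $\Omega$) is a clean shortcut the paper does not take---they verify optimality of $\{\nu_i\}$ directly by an exchange argument---but it does not resolve the construction itself.
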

The result of Theorem \ref{t6} is derived by upper bounding \eqref{eq22}. 
To prove Lemma \ref{lem:submodtight}, in Appendix \ref{app:tightgen}, we provide a problem instance for which the approximation guarantee matches the bound derived in \eqref{eq22}.

Next, we consider the special case of Problem \ref{prob:1}, the submodular partition problem, and provide better guarantees than the general problem as a function of the discriminants.

\section{Submodular Partition Problem}

\begin{problem} \label{p2}
Given a set of allocable resources $\mathcal{R}$ with $|\mathcal{R}|=n$, and a set of users denoted by $\mathcal{U}$ with $|\mathcal{U}| = m$. Each user $u$ has a monotone and submodular valuation function $Z_u(S) : 2^\mathcal{R} \rightarrow \mathbb{R}$, where without loss of generality, $Z_u(\phi) = 0, \forall u \in \mathcal{U}$. 
The submodular partition problem is to find a partition of the set of resources $\mathcal{R}$, among the set of users $\mathcal{U}$ such that the sum of the valuations of individual users is maximized. That is:
$$\text{max } \sum_{u \in \mathcal{U}} Z_u(S_u), \text{subject to:} \ S_u \subseteq \mathcal{R} \quad \forall u, \ S_{u_i} \cap S_{u_j} = \phi, \text{for }  u_i \ne u_j.$$
%\begin{align*}
%    &\text{Find }Z_{opt} = \text{max } \sum_{u \in \mathcal{U}} Z_u(S_u), \\
%    &\text{Subject to:} \\
%    &S_u \subseteq \mathcal{R} \quad \forall u\\
%    &S_{u_i} \cap S_{u_j} = \phi, \text{for }  u_i \ne u_j.
%\end{align*}
\end{problem}

\noindent The submodular partition problem is a special case of Problem \ref{prob:1}, where the matroid is the partition matroid $(\cR, \cM^p)$,
$$\mathcal{M}^p = \{ S: S \subseteq \mathcal{V}, \ |S \cap \mathcal{V}_r| \le 1 \quad \forall r \in \mathcal{R} \},$$
where $\mathcal{V}=\mathcal{U} \times \mathcal{R} = \{ (u,r) : u \in \mathcal{U}, r \in \mathcal{R}\}$, and $\mathcal{V}_{r}=\{ (u,r) : u \in \mathcal{U}\}$ and  $Z(S)=
\sum_{u \in \mathcal{U}} Z_u(S_u)$ is submodular, since the sum of submodular functions is submodular. For this special case, we denote the increment in valuation by allocating resource $r$ to user $u$ given the existing set $S$ as:
\begin{equation} \label{eq:rhodef}
\rho_r^u(S) := Z(S \cup \{(u,r)\})-Z(S) = Z_u(S_u \cup \{r\}) - Z_u(S_u).
\end{equation}
\noindent Since Problem \ref{p2} is a special case of Problem \ref{prob:1}, Theorem \ref{t2} implies an approximation guarantee of $\frac{1}{1+c}$, where $c$ is the curvature of $\sum_{u \in \mathcal{U}}Z_u(S_u)$. It is known that $c = \underset{u \in \mathcal{U}}{\text{max }} c_u$.
%\begin{proposition} \label{t3}
%Given $Z(S) = \sum_{u \in \mathcal{U}} Z_u(S_u)$ (as in Problem \ref{p2}), the curvature of $Z(S)$ is $ c = \underset{u \in \mathcal{U}}{\text{max }} c_u$,
%where $c_u$ denotes the curvature of user $u$'s valuation function $Z_u$.
%\end{proposition}
%For the submodular partition problem, the $\mathsf{GREEDY}$ algorithm reduces to allocating the 
%resource $r^*$ to user $u^*$ in step $i+1$ that has the highest increment 
%among all pairs $(r,u)$ such that $r$ has not been allocated until step $i$, i.e., $(r^*,u^*) = \arg \max_{u, r} \rho_r^u(G^{i})$, where $G^{i}$ is the set selected until step $i$.
%One easy improvement to the approximation guarantee for the $\mathsf{GREEDY}$ as a function of the curvature compared to Theorem \ref{t2} can be derived as follows, whose proof is described in Appendix \ref{app:t4}.
%
%\begin{theorem} \label{t4}
%    Denoting $u_i$ as the user chosen in step $i$ by the $\mathsf{GREEDY}$ algorithm for the submodular partition problem, that has curvature $c_{u_i}$,
%    $$\frac{Z(G)}{Z(\Omega)} \ge \frac{1}{1+\underset{i}{\text{max }} c_{u_i}}.$$
%\end{theorem}
%Compared to Theorem \ref{t2}, the approximation guarantee of Theorem \ref{t4} 
%is in terms of the largest of the curvatures for users for which at least one resource is allocated by the $\mathsf{GREEDY}$ algorithm. Thus, the guarantee is strictly better than that in Theorem \ref{t2} unless any resource is allocated by the $\mathsf{GREEDY}$ algorithm to any of the users with the largest curvature $c = \underset{u \in \mathcal{U}}{\text{max }} c_u$.

Next, we describe a modified greedy algorithm, called $\mathsf{GREEDY-M}$, where the modification is in the tie-breaking rule compared to $\mathsf{GREEDY}$,  that uses curvature and incremental gain ratios, and derive improved approximation guarantee in  Theorem \ref{t6} for it compared to Theorem \ref{t2}. 
%The algorithm is defined as a function of a parameter that we call as the \textbf{discriminant} in every step. To define the discriminant, we need the following preliminaries.
To define the modified greedy algorithm, we need the following definition.
Let for  $S \in \mathcal{M}^p$, the set of unallocated resources in $S$ be defined as 
$$ \mathcal{R}(S_\bot) := \{ r : \forall u \in \mathcal{U}, (u,r) \not\in S \}, $$
i.e., the resources that do not appear in the set of user-resource pairs that are part of $S$.

\begin{algorithm}[H]
\caption{Modified Greedy Algorithm for the Submodular Partition problem} \label{alg1}
\begin{algorithmic}[1]
\Procedure{$\mathsf{GREEDY-M}$}{}
    \State \textbf{Initialize:} $G^0=\phi, i=1$ 
    \While{$i \le |\cR|=n$} 
        \State $(u^*,r^*) \gets \underset{u,r}{\text{ argmax}}\left\{ \rho_r^u(G^{i-1}) : \ r \in \mathcal{R}(G^{i-1}_\bot) \right\}$\\
        \Comment{\textbf{Tie Breaking Rule}: In case there is more than one optimal pair, choose the user-resource pair $(u,r)$ that minimizes $c_u + \frac{1}{d_i(u,r)}$}, where $d_i(u,r) := \frac{\rho_{r}^{u}(G^{i-1})}{\max_{u' \ne u } \rho_{r}^{u'}(G^{i-1})}$
        \State $G^i \gets G^{i-1} \cup \{(u^*,r^*)\}$
        \State $i \gets i+1$
    \EndWhile
    \State \textbf{Return} $G=G^n$
\EndProcedure
\end{algorithmic}
\end{algorithm}

\begin{definition} \label{defn:disc}
\noindent Let the set chosen by the $\mathsf{GREEDY-M}$ algorithm at the end of iteration $i-1$ be $G^{i-1}$. Consider the user-resource pair $(u^*,r^*)$ selected by the greedy algorithm in iteration $i$ and consider its incremental gain $\rho_{r^*}^{u^*}(G^{i-1})$. 
For the resource $r^*$ chosen by the $\mathsf{GREEDY-M}$, find the best user $u'$ other than $u^*$ with the highest incremental valuation for $r^*$ and compute the incremental gain $\rho_{r^*}^{u'}(G^{i-1})$. The ratio of the two incremental gains is defined as the discriminant $d_i^p$ ($p$ stands for the partition problem) in iteration $i$, as 
$$d_i^p = \frac{\rho_{r^*}^{u^*}(G^{i-1}) }{\underset{ u' \ne u^*}{\text{max }} \rho_{r^*}^{u'} (G^{i-1}) }, \quad \text{where } (u^*,r^*) = \underset{u,r \in \mathcal{R}(G^{i-1}_\bot)}{\text{argmax }} \rho_r^{u}(G^{i-1}). $$
With respect to the $\mathsf{GREEDY-M}$ algorithm, $d_i^p = d_i(u^*,r^*)$.
\end{definition} 
At iteration $i$, discriminant $d_i^p$ is the ratio of the increment due to the best local user-resource pair chosen by the $\mathsf{GREEDY-M}$ algorithm and the increment possible if the resource chosen by the $\mathsf{GREEDY-M}$ algorithm is allocated to the user who values it second most.

\begin{bemark} For $\mathsf{GREEDY-M}$ algorithm, if the tie-breaking rule also fails to produce a unique pair, in which case a particular indexing of user resource pairs that is fixed at the beginning is used,  and the user resource pair with the highest index is declared the chosen pair in that iteration. Thus, the value of discriminant $d_i^p$ in iteration $i$ is uniquely defined once the problem is specified.
\end{bemark}

%\noindent A \textbf{tie} is said to occur in step $i$ if there are multiple user-resource pairs $T_i = \{(u,r) : (u,r) \in \mathcal{V} \}$ such that $\forall (u,r) \in T_i,\ G^{i-1} \cup (u,r) \in \mathcal{M}^p$ and $\rho_{r_1}^{u_1}(G^{i-1}) = \rho_{r_{2}}^{u_{2}}(G^{i-1}),\ \forall (u_{1},r_{1}),(u_{2},r_{2}) \in T_i$. In other words, if there is a tie in step $i$ there are multiple equally good user-resource pairs that the greedy algorithm can choose between in that step.

Next, we present the second main result of this paper, that gives an approximation guarantee for the $\mathsf{GREEDY-M}$ algorithm as a function of the discriminant and the curvature.

\begin{theorem} \label{t5}
    Using the $\mathsf{GREEDY-M}$ algorithm for Problem \ref{p2} guarantees:
        $$\frac{Z(G)}{Z(\Omega)} \ge \text{min} \left( 1, \frac{1}{ \underset{i}{\text{max}}\left\{c_{u_i} + \frac{1}{d_i^p}\right\}}\right),$$
 where $d_i^p$ is the discriminant and $c_{u_i}$ is the curvature of the user chosen, in iteration $i$, respectively, for $i= \{1,2,\dots, |\cR|\}$.      
\end{theorem}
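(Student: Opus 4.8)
The plan is to reduce everything to a single master inequality and then close with a short case analysis. Write $r_1,r_2,\dots,r_n$ for the resources in the order in which $\mathsf{GREEDY-M}$ allocates them, let $u_i$ be the user to which $r_i$ is assigned by the algorithm, and let $\omega_i$ be the user to which the optimal partition $\Omega$ assigns $r_i$. Since each $Z_u$ is monotone we may assume $\Omega$ allocates every resource, and $\mathsf{GREEDY-M}$ runs for exactly $n=|\cR|$ iterations, so both $G$ and $\Omega$ are full allocations of $\cR$. Consequently the matching between $G$ and $\Omega$ is simply the identity on resources --- there is no need for the matroid--exchange argument used for Problem~\ref{prob:1}. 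Call an index $i$ an \emph{agreement} if $u_i=\omega_i$ and a \emph{disagreement} otherwise. Telescoping the greedy increments gives $Z(G)=\sum_{i=1}^n \rho_{r_i}^{u_i}(G^{i-1})$, and the theorem will follow once we establish
\[
Z(\Omega)\;\le\;\sum_{i:\,u_i=\omega_i}\rho_{r_i}^{u_i}(G^{i-1})\;+\;\sum_{i:\,u_i\ne\omega_i}\Big(c_{u_i}+\tfrac{1}{d_i^p}\Big)\,\rho_{r_i}^{u_i}(G^{i-1}).
\]

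To obtain this I would sandwich $Z(G\cup\Omega)$ from both sides rather than use the crude bound $Z(\Omega)\le Z(G\cup\Omega)$. Adding the resources that $\Omega$ gives to a user but $G$ does not, in any order, and invoking submodularity (marginals only shrink as the base grows) yields $Z(G\cup\Omega)-Z(G)\le \sum_{i:\,u_i\ne\omega_i}\rho_{r_i}^{\omega_i}(G)$. For the other side, I add the resources that $G$ gives to a user but $\Omega$ does not: the disagreement resource $r_i$ is returned to the \emph{greedy} user $u_i$, so its marginal is at least $(1-c_{u_i})\rho_{r_i}^{u_i}(\phi)\ge (1-c_{u_i})\rho_{r_i}^{u_i}(G^{i-1})$ by the per--user curvature $c_{u_i}$ together with submodularity; summing gives $Z(G\cup\Omega)-Z(\Omega)\ge \sum_{i:\,u_i\ne\omega_i}(1-c_{u_i})\rho_{r_i}^{u_i}(G^{i-1})$. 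Writing $Z(\Omega)=Z(G\cup\Omega)-\big(Z(G\cup\Omega)-Z(\Omega)\big)$, substituting $Z(G\cup\Omega)=Z(G)+\big(Z(G\cup\Omega)-Z(G)\big)$, expanding $Z(G)$ by its telescoping sum, and cancelling the factor $(1-c_{u_i})\rho_{r_i}^{u_i}(G^{i-1})$ against $\rho_{r_i}^{u_i}(G^{i-1})$ on the disagreement indices leaves
\[
Z(\Omega)\le \sum_{i:\,u_i=\omega_i}\rho_{r_i}^{u_i}(G^{i-1}) + \sum_{i:\,u_i\ne\omega_i}c_{u_i}\rho_{r_i}^{u_i}(G^{i-1}) + \sum_{i:\,u_i\ne\omega_i}\rho_{r_i}^{\omega_i}(G),
\]
in which only the last term remains to be controlled.

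That last term is controlled by the discriminant. On a disagreement index $\omega_i\ne u_i$ and $r_i$ is unallocated in $G^{i-1}$, so submodularity and the definition of $d_i^p$ give $\rho_{r_i}^{\omega_i}(G)\le \rho_{r_i}^{\omega_i}(G^{i-1})\le \max_{u'\ne u_i}\rho_{r_i}^{u'}(G^{i-1})=\rho_{r_i}^{u_i}(G^{i-1})/d_i^p$. Inserting this produces the master inequality. To finish, set $M=\max_i\{c_{u_i}+\tfrac{1}{d_i^p}\}$. If $M\ge 1$ then every coefficient on the right (the $1$'s on agreements, the $c_{u_i}+1/d_i^p$'s on disagreements) is at most $M$, so $Z(\Omega)\le M\,Z(G)$ and $Z(G)/Z(\Omega)\ge 1/M$; if $M<1$ then every coefficient is at most $1$, so $Z(\Omega)\le Z(G)$. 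In both cases $Z(G)/Z(\Omega)\ge \min(1,1/M)$, the claim.

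I expect the per--user curvature step to be the main obstacle. The whole improvement over Theorem~\ref{t6} --- replacing the global $c=\max_u c_u$ and $d_{\min}=\min_i d_i$ by the coupled quantity $\max_i\{c_{u_i}+1/d_i^p\}$ --- rests on noticing that on a disagreement the returned resource $r_i$ is charged to $u_i$'s curvature and that the same iteration's discriminant $d_i^p$ also concerns $r_i$, so both quantities live on a common index. Making the per--user inequality $\rho_{r_i}^{u_i}(S)\ge(1-c_{u_i})\rho_{r_i}^{u_i}(\phi)$ interact correctly with the telescoping of $Z(G)$, while keeping the agreement/disagreement bookkeeping consistent across both sandwich bounds, is the delicate part. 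The tie--breaking rule of $\mathsf{GREEDY-M}$ is not needed for the master inequality to hold, but it guarantees that each term $c_{u_i}+1/d_i^p$, and hence $M$, is as small as possible among tied choices.
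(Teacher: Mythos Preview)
Your proof is correct and follows essentially the same route as the paper's. The paper packages your sandwich argument into Lemma~\ref{l6} (a per--user refinement of the Conforti--Cornu\'ejols inequality), applies it with $A=G$, $B=\Omega$, and then bounds the $\Omega\setminus G$ term via Lemma~\ref{l7}, which is exactly your discriminant step $\rho_{r_i}^{\omega_i}(G)\le \rho_{r_i}^{\omega_i}(G^{i-1})\le \rho_i/d_i^p$; the final case analysis is identical. Your observation that the resource--indexed correspondence $g_i\leftrightarrow\omega_i$ replaces the general matroid exchange of Lemma~\ref{l2} is also made in the paper, and your remark about the role of the tie--breaking rule is accurate.
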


{\bf Discussion:}
Compared to Theorem \ref{t2}, $1/d_i^p$ replaces the $1$ in the denominator, and thus larger the discriminant better is the approximation guarantee, and which approaches $1$ as  $d_i^p\rightarrow \infty$ for all $i$. 
The guarantee obtained by Theorem \ref{t5} is strictly better than the bound in Theorem \ref{t2} except for  the following corner case: 
$(i)$ $d_i^p=1$ i.e., there is a tie in some iteration $i$ between multiple users for the best resource, AND $(ii)$ every user who is part of the tie set has curvature equal to $ \underset{u \in \mathcal{U}}{\text{max }} c_u$ %(where $\mathcal{U}_{A}$ denotes the set of users who are allocated at least one resource by the $\mathsf{GREEDY}-M$ algorithm that does not use the tie breaking rule depending on $d_i(u,r)$ and $c_u$)
, in which case the bound matches with the bound of Theorem \ref{t2}, $\frac{1}{1+ \underset{u \in \mathcal{U}}{\text{max }} c_u}$.

\begin{lemma}\label{lem:partitiontight}The approximation guarantee obtained in Theorem \ref{t5} is tight.
\end{lemma}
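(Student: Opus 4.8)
The plan is to exhibit, for every target curvature $c \in (0,1]$ and discriminant level $d \ge 1$ with $c + 1/d > 1$ (the only regime where the guarantee is nontrivial, since otherwise the bound is $1$ and is met automatically by monotonicity), an explicit family of instances of Problem \ref{p2} on which $\mathsf{GREEDY-M}$ attains the ratio $\frac{1}{c + 1/d}$ up to a vanishing perturbation, so that $\max_i\{c_{u_i} + 1/d_i^p\}$ is realized and its reciprocal achieved. First I would isolate the two inequalities in the proof of Theorem \ref{t5} that must hold with equality on an extremal instance: the curvature step (every marginal used in the bound must degrade by exactly the factor $1-c$, i.e. the minimum in \eqref{def:curvature} is attained) and the discriminant step (the runner-up user for the contested resource at iteration $i$ must have incremental value exactly $\rho_{r^*}^{u^*}(G^{i-1})/d$, and this value must be unchanged when re-evaluated against the final set $G$). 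I would then design the valuations so that $G$ and $\Omega$ are disjoint as sets of user--resource pairs; this collapses the chain of inequalities to the exact identity $Z(\Omega) = (c + 1/d)\,Z(G)$.

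The engine of the construction is a bipartite ``blocking'' gadget. Each contested resource $r$ is assigned to a trap user $u^*$ whose valuation has the form $Z_{u^*}(S) = (1-c)\sum_{r \in S} w_r + c\max_{r \in S} w_r$ (with $Z_{u^*}(\phi)=0$), which is monotone submodular with curvature exactly $c$: a fresh resource yields $w_r$ and a later one yields $(1-c)w_r$. The weights $w_r$ are taken strictly decreasing and arbitrarily close to $1$ so that $u^*$ has a unique maximiser and $\mathsf{GREEDY-M}$ is driven to pick $(u^*,r)$ by the primary argmax rule, not by tie-breaking, while a runner-up user valuing $r$ alone at $\tfrac1d$ times the contested marginal pins $d_i^p = d$. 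The strictness is essential: because the $\mathsf{GREEDY-M}$ rule actively minimises $c_u + 1/d_i(u,r)$ on ties, any merely tied loss-inducing choice would be avoided, so the instance must make every loss-inducing assignment the strict greedy optimum while the partition constraint (one resource per user) and curvature-induced saturation jointly block the reassignment that $\Omega$ exploits.

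The main obstacle I anticipate is preventing $\mathsf{GREEDY-M}$ from recovering the value that the bound charges as lost. In a partition matroid a user may hold several resources, so the blocking is only partial: after $\mathsf{GREEDY-M}$ traps $u^*$, that user still earns a residual marginal $(1-c)w_{r'}$ on a subsequent resource $r'$, and a naive two-resource gadget therefore yields only $\frac{2-c}{1+1/d}$, which meets $\frac{1}{c+1/d}$ solely at $c=1$ or on the boundary $c+1/d=1$. Closing this gap for all interior $(c,d)$ is the crux: the gadgets must be chained along a path so that each such residual marginal is itself the forced, discriminant-$d$ greedy choice that traps the next user and is wasted by $\Omega$, rather than being free value greedy collects. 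Once the chain forces greedy's entire output to consist of curvature-degraded, discriminant-limited increments while $\Omega$ collects the corresponding fresh full-value increments, the ratio follows; what remains is the routine verification that each $Z_u$ is monotone and submodular with the prescribed curvature, that the stated trajectory is exactly the one produced by $\mathsf{GREEDY-M}$ (including its secondary fixed-indexing tie-break), and that $Z(\Omega)$ is genuinely optimal. This parallels, in the partition setting, the instance used for Lemma \ref{lem:submodtight} in Appendix \ref{app:tightgen}, specialised to the partition matroid $\mathcal{M}^p$.
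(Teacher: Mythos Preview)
Your plan is correct and matches the paper's approach in Appendix~\ref{app:tightpart}: both build a chained instance on which $G$ and $\Omega$ are disjoint allocations, every greedy increment sits a factor $d$ above its runner-up, each chosen user's curvature is pinned at $c$, strict perturbations prevent the tie-breaking rule of $\mathsf{GREEDY\text{-}M}$ from rescuing greedy, and a limit (chain length $\to\infty$, perturbation $\to 0$) drives the ratio to $\tfrac{1}{c+1/d}$. The paper realises this with exactly two users and an explicit alternating table of marginals (greedy outputs $\{(u_1,r_1),(u_2,r_2),(u_1,r_3),\dots\}$, the optimum is the swapped allocation, and increments decay geometrically with ratio $d(1-c)$), whereas you parametrise the trap valuations by the closed form $(1-c)\sum_{r\in S} w_r + c\max_{r\in S} w_r$; these are different packagings of the same geometric chain, and your diagnosis of why a single two-resource gadget only attains $\tfrac{2-c}{1+1/d}$ is precisely the reason the paper also passes to an infinite chain under the constraint $d\le \tfrac{1}{1-c}$.
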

%The proof is provided in Appendix \ref{app:tightpart}.

\begin{bemark}\label{rem:intuition} One question that is important to understand is the exact form of dependence 
of the discriminant on the approximation guarantee for the greedy algorithm that emerges from Theorem \ref{t5}. Some intuition towards this end can be derived as follows. Recall that the total number of resources $|\cR|=n$.
%We choose a specific indexing for the resource set $r_1, \dots, r_n$. For this indexing, 
%let $\hat{u}_i$ be the user that is allocated resource $r_i$ by the optimal algorithm.
%Let $r_i
Consider the instance where in the first $n-1$ iterations of the $\mathsf{GREEDY-M}$ algorithm, resources $r_1, \dots, r_{n-1}$ (indexed by choices of $\mathsf{GREEDY-M}$ algorithm) have been allocated to users $\hat{u}_1, \dots, \hat{u}_{n-1}$ that matches the allocation for the same resources by the optimal algorithm, i.e., $G^{n-1} = \Omega^{n-1}$ ($\Omega^{i}$ is the restriction of the optimal solution $\Omega$ to the items chosen in $G^{i}$). 
Then we claim that the one remaining resource is also allocated to the same user as done by the optimal algorithm, i.e., $G=\Omega$.
The reason is that the $\mathsf{GREEDY-M}$ algorithm in the last iteration allocates 
$r_n$ to the user $u_n$ that maximizes $Z(G^{n-1} \cup \{(u_n,r_n)\}) - Z(G^{n-1})$. 
Since the optimal allocation $\Omega$ maximizes $Z$, and $G^{n-1} = \Omega^{n-1}$, the $\mathsf{GREEDY-M}$ algorithm allocates $r_n$ to $\hat{u}_n$ (same as in the optimal solution).

%Moreover, consider an instance where $\mathsf{GREEDY-M}$ also allocates the resources $r_1, \dots, r_{n-1}$ (order chosen by $\mathsf{GREEDY-M}$ algorithm) to the users $\hat{u}_1, \dots, $\hat{u}_{n-1}$ as done by the optimal algorithm. 
%Then we claim that $\mathsf{GREEDY-M}$ will also allocate the final resource $r_n$ to the  user $\hat{u}_n$ as chosen by the optimal algorithm.  

Extending this scenario backwards, let $G^{n-2} = \Omega^{n-2}$, then $G=G^{n}$ may not be equal to $\Omega$, since the $\mathsf{GREEDY-M}$ algorithm in the final two iterations may make different choices compared to the optimal algorithm. In particular, let the $\mathsf{GREEDY-M}$ algorithm allocate resource $r_{n-1}$ to user $u_{n-1}$, while in the optimal algorithm $r_{n-1}$ is allocated to user ${\hat u}_{n-1}$, where  $u_{n-1} \ne {\hat u}_{n-1}$. Let the increment made by the $\mathsf{GREEDY-M}$ algorithm at iteration $n-1$ be 
$\rho_{n-1}$, while the increment made by the optimal algorithm in choosing user 
${\hat u}_{n-1}$ to allocate resource $r_{n-1}$ 
given its choices for the $n-2$ resources be $\rho^o_{n-1} = Z(\Omega^{n-2} \cup \{({\hat u}_{n-1},r_{n-1})\}) - Z(\Omega^{n-2})$, where $\Omega^{n-2} = G^{n-2}$. By definition of the $\mathsf{GREEDY-M}$ algorithm, $\rho^o_{n-1} \le \frac{\rho_{n-1}}{d_{n-1}^p}$.
Moving on to the final iteration, let the $\mathsf{GREEDY-M}$ algorithm accrue zero incremental valuation on allocating the last item (worst case) while the incremental valuation for the optimal algorithm on allocation of item $r_n$ to user ${\hat u}_{n}$ be $\rho^o_{n} = Z(\Omega^{n-1} \cup \{({\hat u}_{n},r_{n})\}) - Z(\Omega^{n-1})$. Note that $\rho^o_{n}\le \rho_{n-1}$, since otherwise the $\mathsf{GREEDY-M}$ algorithm would have allocated resource $r_{n}$ to user ${\hat u}_{n}$ in iteration $n-1$ itself.

%In the $2^{nd}$ last iteration, let  decide to allocate resource $r_{K-1}$ to user $u_{K-1}$ with increment . Assume that $u_{K-1} \ne \hat{u}_{K-1}$, the optimal user to allocate $r_{K-1}$ to.
   % \item In contrast, suppose in $\Omega$, $r_{K-1}$ is allocated to some other $\hat{u}_{K-1}$. This increment is at most .
%    \item A possible (worst-case) scenario is that every user $u$ except $u_{K-1}$ gives a nil valuation to $r_K$, that is $\forall S \subset \mathcal{R},\ \rho_{r_K}^{u}(S) = 0$, and $u_{K-1}$ also gives a nil valuation to $r_K$ under the condition that it has been allocated $r_{K-1}$, that is $\forall S \subset \mathcal{R} : r_{K-1} \in S,\ \rho_{r_{K-1}}^{u_{K-1}}(S)$. Under these valuations, $\rho_K$ is $0$ and the partial greedy solution does not improve in valuation in the last iteration.
%    \item On the other hand, in the optimal solution, since $r_{K-1}$ is not given to $u_{K-1}$, and is instead given to $\hat{u}_{K-1}$, it is possible for $r_{K-1}$ to generate a positive increment (on being allotted to $u_{K-1}$ - which is the only user who values $r_{K}$). This increment cannot be larger than $\rho_{K-1}$ since otherwise $r_K$ would have been allotted to $u_{K-1}$ in the $(K-1)^{th}$ iteration.
%\end{enumerate}
Thus, the ratio of the $\mathsf{GREEDY-M}$ valuation to the optimal valuation is:
\begin{align*}
    \frac{Z(G)}{Z(\Omega)} &= \frac{Z(G^{n-2}) + \rho_{n-1}
    + 0}{Z(\Omega^{n-2}) + 
    \rho^o_{n-1}+ \rho^o_n}\ge  \frac{Z(\Omega^{n-2}) + \rho_{n-1} + 0}{Z(\Omega^{n-2}) + \frac{\rho_{n-1}}{d_{n-1}^p} + \rho_{n-1}} \ge \frac{1}{\frac{1}{d_{n-1}^p}+1}.
\end{align*}
Similar argument can be extended to earlier iterations of the $\mathsf{GREEDY-M}$ algorithm, to conclude that its approximation guarantee should depend on the discriminant $d_i^p$ as $1/d_i^p$.
\end{bemark}
%[We mention without proof that a similar case can be constructed when the greedy allocates resources to their optimal users in the first $K-3$ iterations and so on that bounds the gap between the optimal solution and the $\mathsf{GREEDY-M}$ solution.] (reword)

%This instance also motivates why using purely curvature based bounds may fail to explain why the greedy algorithm often performs better.

%\noindent In the following section, we analyze the performance of the $\mathsf{GREEDY}$ algorithm (Algorithm \ref{alg0}) on Problem \ref{prob:1} by introducing the discriminant parameter, and demonstrate a bound that is better than the guarantee in Theorem \ref{t2}.

In the next section, we consider the online version of the submodular partition problem, and show that the same approximation guarantee as derived in Theorem \ref{t5} (which now will be called competitive ratio) can be achieved by a natural online version of the $\mathsf{GREEDY-M}$ algorithm. 
\section{Online Monotone Submodular Partition Problem}

\begin{problem} \label{p3}
This problem is identical to Problem \ref{p2}, except that now, at each time index $t=1,2,\dots,|\cR|$, 
one resource $j_t \in \mathcal{R}, |\cR|=n$ arrives, which must immediately be allocated to exactly one of the users and the decision is irrevocable.
%
 %i.e., $\text{max }_{S^n_u} \sum_{u \in \mathcal{U}} Z_u(S^n_u)$,
%\begin{align*}
%    &\text{where $\{\{ S_u^t : u \in \mathcal{U} \}\}_{t=1}^n$ is a sequence of partially filled partitions generated causally, such that:} \\
%    &\begin{tabular*}{\textwidth}{l @{\extracolsep{\fill}} lcr}
%    $1.\ S^t_u \subseteq \mathcal{R}, \ \forall u \in \mathcal{U}$, & $2.\ \forall t,\ S^t_{u_i} \cap S^t_{u_j} = \phi, \text{ for }  u_1 \ne u_2$, & $3.\ \forall u \in \mathcal{U},\ S_u^0 = \phi$
%    \end{tabular*}\\
%    &\text{and for every $t$, there exists a unique } u_0 \in \mathcal{U} :\ S^t_{u_0} \cup \{j_t\} = S^{t+1}_{u_0}.
%\end{align*}
\end{problem}
For an online problem, given the arrival sequence of resources $\sigma$ that is a permutation over the order of arrival of $|\cR|=n$ resources, the competitive ratio of any online $A$ algorithm is defined as $\sfr_A = \min_{\sigma}\frac{Z(A(\sigma))}{Z(\Omega(\sigma))}$ and the objective is to find an optimal online algorithm $A^*$ such that $\sfr_{A^*} =  \max_A \sfr_A$. We propose a simple modification of the $\mathsf{GREEDY-M}$ algorithm to make it online and then bound its competitive ratio.

\begin{algorithm}[H] \label{alg2}
\caption{Modified Greedy Algorithm for the Online Monotone Submodular Partition problem}
\begin{algorithmic}[1]
\Procedure{$\mathsf{GREEDY-ON}$}{}
    \State \textbf{Initialize:} $G^0=\phi, t=1$
    \While{$t \le |\cR|=n$, on arrival of resource $j_t$ at time $t$}
    \State Allocate $j_t$ to user $u_t$ if 
        \State $u_t = \underset{u}{\text{ argmax}}\left\{ \rho_{j_t}^u(G^{t-1}) \right\}$ 
        \State \textbf{Tie:} Allocate resource $j_t$ to user $u$ with least $c_{u}$ 
        \State $G^t \gets G^{t-1} \cup \{(u_t,j_t)\}$
        \State $t \gets t+1$
    \EndWhile
    \State \textbf{Return} $G=G^n$
\EndProcedure
\end{algorithmic}
\end{algorithm}

\begin{bemark}
$\mathsf{GREEDY-ON}$ is simpler than $\mathsf{GREEDY-M}$ since at each iteration, the resource to be allocated is fixed, and the tie breaking rule does not require the discriminant information.
%    Note that this modification of the greedy algorithm is not equivalent to adding an online constraint to Algorithm \ref{alg1}. The difference can be seen in the case of a tie involving multiple distinct resources with the same user. Algorithm \ref{alg1} distinguishes between resources by picking the one that maximizes $d_i$ in that step. However, since Problem \ref{p3} ensures that resources arrive sequentially, the problem of having to distinguish between multiple resources tied with a single user is avoided. Hence it is sufficient to look at the curvatures of different users involved to break a tie (as they will never involve multiple resources).
\end{bemark}

To derive a lower bound on the competitive ratio of the $\mathsf{GREEDY-ON}$ algorithm, we need the following definition of the discriminant.
\begin{definition} \label{defn:discon}
\noindent For iteration $t$ of $\mathsf{GREEDY-ON}$, where resource $j_t$ arrives and the current allocated set is $G^{t-1}$, the discriminant at iteration $t$, $d^o_t$ (where $^o$ stands for online) is defined as:
$$d^o_t = \frac{\rho_{j_t}^{u^*}(G^{t-1}) }{\underset{ u' \ne u^*}{\text{max }} \rho_{j_t}^{u'}(G^{t-1}) }, \qquad \text{where, } u^* = \underset{u}{\text{argmax}} \rho_{j_t}^u(G^{t-1}).$$
\end{definition}

\noindent The definition for discriminant in the online case only slightly differs from the offline case, in that the resource being allocated at time $t$ is fixed as $j_t$. 
%Since the numerator is always greater than the denominator, $\forall t,\ d_t^o \ge 1$.\newline
Next, we present the final main result of the paper that provides a guarantee on the competitive ratio of the $\mathsf{GREEDY-ON}$ algorithm.
%, where $c_{u_t}$ is the curvature of the user allocated resource $j_t$ at time $t$ by the $\mathsf{GREEDY-ON}$ algorithm. 

\begin{theorem} \label{t7}
For any arrival sequence $\sigma$ over the $|\cR|=n$ resources, the competitive ratio of the $\mathsf{GREEDY-ON}$ algorithm on Problem \ref{p3} is bounded by 
    $$\sfr_{\mathsf{GREEDY-ON}} \ge \text{min} \left(1,\frac{1}{\left(\underset{t}{\text{max }}\left\{\frac{1}{d^o_t} + c_{u_t}\right\}\right)}\right),$$
 where $d^o_t$ is the discriminant and $c_{u_t}$ is the curvature of the user chosen, in iteration $t$, respectively, for $t \in \{1,2,\dots,|\cR|\}$.
\end{theorem}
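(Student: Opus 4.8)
The plan is to show that the analysis underlying Theorem \ref{t5} is insensitive to the order in which resources are processed, so that it applies verbatim when the processing order is the adversarial arrival order rather than one chosen by the algorithm. Note first that the benchmark $\Omega$, being the optimal partition of $\cR$, does not depend on $\sigma$, so it suffices to lower bound $Z(G)/Z(\Omega)$ for an arbitrary but fixed $\sigma$. Fix such a $\sigma$, let $j_1,\dots,j_n$ be the resources in arrival order, let $u_t$ be the user to which $\mathsf{GREEDY-ON}$ assigns $j_t$, and let $G^{t-1}$ be the allocation after $t-1$ steps, so that $Z(G)=\sum_{t=1}^n\rho_{j_t}^{u_t}(G^{t-1})$ by telescoping. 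Since $G$ and $\Omega$ both partition $\cR$, there is a canonical, order-independent matching: resource $j_t$ is realized by $\mathsf{GREEDY-ON}$ as the pair $(u_t,j_t)$ and by $\Omega$ as the pair $(\hat u_t,j_t)$, where $\hat u_t$ is the user that $\Omega$ assigns $j_t$ to. Call $t$ an agreement step ($t\in A$) if $u_t=\hat u_t$ and a disagreement step ($t\in D$) otherwise; then $G\setminus\Omega=\{(u_t,j_t):t\in D\}$, $\Omega\setminus G=\{(\hat u_t,j_t):t\in D\}$, and $G\cap\Omega=\{(u_t,j_t):t\in A\}$.

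The first ingredient is the partition-problem analogue of the curvature-refined key inequality used to prove Theorem \ref{t5} (the counterpart of \eqref{eq:sketch}):
\begin{equation*}
Z(\Omega)\le \sum_{t\in D}c_{u_t}\,\rho_{j_t}^{u_t}(G^{t-1})+\sum_{t\in A}\rho_{j_t}^{u_t}(G^{t-1})+\sum_{t\in D}\rho_{j_t}^{\hat u_t}(G).
\end{equation*}
I would establish this exactly as in the offline case: start from $Z(\Omega)\le Z(G\cup\Omega)$ by monotonicity, peel off the disagreement pairs of $\Omega$ one at a time using submodularity to produce the third sum (each marginal is evaluated over a superset of $G$, hence is at most $\rho_{j_t}^{\hat u_t}(G)$), and use the curvature of each disagreeing user to sharpen the coefficient of its greedy increment from $1$ to $c_{u_t}$. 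Every submodularity step here relies only on $G^{t-1}_u\subseteq G_u$ — greedy's per-user set only grows — which holds for any processing order.

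The second ingredient is the discriminant bound, and this is where $\mathsf{GREEDY-ON}$'s choice rule enters. For a disagreement step we have $\hat u_t\ne u_t$, and because $\mathsf{GREEDY-ON}$ assigns $j_t$ to the user maximizing $\rho_{j_t}^{u}(G^{t-1})$ over $u$, Definition \ref{defn:discon} gives $\rho_{j_t}^{\hat u_t}(G)\le\rho_{j_t}^{\hat u_t}(G^{t-1})\le \frac{1}{d^o_t}\,\rho_{j_t}^{u_t}(G^{t-1})$, the first step being submodularity and the second the definition of $d^o_t$. Substituting into the displayed bound gives the per-step form
\begin{equation*}
Z(\Omega)\le \sum_{t\in D}\Big(c_{u_t}+\frac{1}{d^o_t}\Big)\rho_{j_t}^{u_t}(G^{t-1})+\sum_{t\in A}\rho_{j_t}^{u_t}(G^{t-1}).
\end{equation*}
Setting $\beta=\max_t\big(c_{u_t}+1/d^o_t\big)$, if $\beta\ge1$ then every coefficient above, including the coefficient $1$ on the agreement terms, is at most $\beta$, so $Z(\Omega)\le\beta\sum_t\rho_{j_t}^{u_t}(G^{t-1})=\beta\,Z(G)$; if $\beta<1$ then each disagreement coefficient is below $1$ and the whole right-hand side is at most $Z(G)$. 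In both cases $Z(G)/Z(\Omega)\ge\min(1,1/\beta)$, and since this holds for every $\sigma$ it bounds $\sfr_{\mathsf{GREEDY-ON}}$.

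The main obstacle is the curvature-refined decomposition: turning the coefficient on each disagreement greedy increment from $1$ into $c_{u_t}$ does not follow from the crude monotone/submodular bound $Z(\Omega)\le Z(G)+\sum_{t\in D}\rho_{j_t}^{\hat u_t}(G)$ and is the genuinely technical step, so I must port the offline argument carefully and confirm it never exploits the freedom to reorder resources — it should not, resting only on $G^{t-1}_u\subseteq G_u$ and the canonical resource matching. A secondary point is to check that the simpler online tie-breaking — assign to the user of least curvature — is adequate: a tie means two users attain the best increment on $j_t$, forcing $d^o_t=1$ and coefficient $1+c_{u_t}$, so selecting the least-curvature user is exactly what keeps $\beta$ as small as the offline rule would, leaving the final guarantee unchanged.
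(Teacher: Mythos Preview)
Your proposal is correct and follows essentially the same argument as the paper. The paper's proof introduces auxiliary notation $\Psi^t=\Omega\cap U^t$ and $\xi^t=G\cap U^t$ (equivalent to your agreement/disagreement sets $A,D$) and re-derives the curvature-refined inequality inline via the two-sided expansion of $Z(\Omega\cup G)$ and the bound $\rho_q(\Omega\cup G^{t-1})\ge(1-c_{u_t})\rho_q(G^{t-1})$, rather than invoking Lemma~\ref{l6} directly; but this is exactly the ``port the offline argument'' step you describe, and your observation that Lemma~\ref{l6} holds for any ordering of $A$ is precisely why the online arrival order causes no difficulty.
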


The natural online variant of the greedy algorithm $\mathsf{GREEDY}$ is known to have a  competitive ratio of at least $1/2$  (\cite{Fisher1978}), which can be improved to $1/(1+c)$ using the curvature information (\cite{CONFORTI1984251}). Theorem \ref{t7} shows that the competitive ratio can be further improved if the discriminant values for the problem instance are large.

%Finally, we are ready to state the discriminant based approximation guarantee for the general submodular maximization problem over a matroid (Problem \ref{prob:1}).

%%%%%%%%%%%%%%%%%%%%%%%%%%%%%%%%%%%%%%%%%%%%%%%%%%

%$$--------------------------------------------------- $$

\newpage
\section{Appendix}
Recall that given a set $S$ associated with an ordering $(s_1,s_2, \dots, s_{|S|})$, $S^{i}$ denotes the partial ordering $(s_1,s_2, \dots, s_{i})$.
\subsection{Intermediate Lemmas}

\begin{lemma} [\cite{CONFORTI1984251}, Equation 2.1] \label{l1}
    Given an instance of Problem $\ref{prob:1}$, for all 
    $ A,B \in \mathcal{M} $, the following is true:
    $$ Z(A \cup B) \le Z(A) + \sum_{i:b_i \in B \setminus A} \rho_{b_i}(A).$$
\end{lemma}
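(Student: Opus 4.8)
The plan is to prove this by a standard telescoping argument that exploits the diminishing-returns property of $Z$ (Definition \ref{defn:submod}). First I would enumerate the elements of $B \setminus A$ in an arbitrary order as $b_1, b_2, \dots, b_k$, where $k = |B \setminus A|$, and build the increasing chain of sets $A_0 := A$ and $A_j := A \cup \{b_1, \dots, b_j\}$ for $j = 1, \dots, k$. Since $A \cup B = A \cup (B \setminus A)$, the top of this chain is $A_k = A \cup B$, and each step adds exactly one element of $B \setminus A$.

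Next I would write the total increment as a telescoping sum along this chain:
\[
Z(A \cup B) - Z(A) = \sum_{j=1}^{k} \big(Z(A_j) - Z(A_{j-1})\big) = \sum_{j=1}^{k} \rho_{b_j}(A_{j-1}).
\]
The key step is then to bound each term of this sum from above using submodularity. For each $j$ we have $A \subseteq A_{j-1}$ and $b_j \notin A_{j-1}$, so the diminishing-returns inequality of Definition \ref{defn:submod} (applied with $S = A$, $T = A_{j-1}$, and $x = b_j$) gives $\rho_{b_j}(A_{j-1}) \le \rho_{b_j}(A)$. Summing these term-by-term bounds over $j$ yields
\[
Z(A \cup B) - Z(A) \le \sum_{j=1}^{k} \rho_{b_j}(A) = \sum_{i : b_i \in B \setminus A} \rho_{b_i}(A),
\]
and moving $Z(A)$ to the right-hand side gives the claimed inequality.

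There is essentially no hard obstacle here, as this is a routine consequence of submodularity. The only two points requiring care are applying the diminishing-returns inequality in the correct direction — namely that the incremental gain $\rho_{b_j}(\cdot)$ shrinks as the base set grows from $A$ to $A_{j-1}$ — and observing that submodularity of $Z$ holds over all of $2^N$, so the intermediate sets $A_{j-1}$ and $A_{j-1} \cup \{b_j\}$ need not be independent in $\mathcal{M}$ for the argument to go through. The choice of ordering of $B \setminus A$ is immaterial, since the final bound is symmetric in the elements $b_i$.
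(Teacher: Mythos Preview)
Your proof is correct and is the standard telescoping argument for this well-known submodularity inequality. The paper does not give its own proof of this lemma; it is simply quoted as Equation~2.1 from \cite{CONFORTI1984251}, so there is nothing to compare against beyond noting that your argument is exactly the classical one.
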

%\begin{proof}
%    Order $B$ as $(b_1,b_2 \dots, b_{|B|})$ arbitrarily and denote the partial ordered sets $B^t = (b_1,b_2 \dots, b_t)$ $[\text{with } B^0 = \phi]$, then we have:
%    \begin{align}
%        Z(A \cup B) &= Z(A) + \rho_{b_1}(A) + \rho_{b_2}(A \cup B^1) + \dots, + \rho_{b_K}(A \cup B^{K-1}), \nonumber \\
%        &= Z(A) + \sum_{i:b_i \in B \setminus A} \rho_{b_i}(A \cup B^{i-1}),\label{eq1000} \\
%        &\overset{(i)}{\le} Z(A) + \sum_{i:b_i \in B \setminus A} \rho_{b_i}(A) \nonumber.
%    \end{align}
%    where $(i)$ follows by submodularity.
%\end{proof}
\begin{lemma} [\cite{CONFORTI1984251}, Lemma 2.2] \label{l2}
Given feasible solutions $A,B$ to Problem \ref{prob:1}, and an ordering $A = (a_1, a_2, \dots, a_K)$, an ordering for $B$ can be constructed as $(b_1, b_2, \dots, b_K)$ such that:
\begin{equation*}
    A^{i-1} \cup \{b_i\} \in \mathcal{M}, \quad \text{for } i = 1, 2, \dots, K.
\end{equation*}
Furthermore, if for any $i$, $b_i \in A \cap B$, then $b_i = a_i$.
\end{lemma}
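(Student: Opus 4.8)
The plan is to build the ordering of $B$ by a backward greedy sweep over the positions, using the augmentation property of the matroid $(N,\mathcal{M})$ to guarantee that every position can be filled. First I would note that since we restrict attention to feasible solutions of cardinality $K$, both $A$ and $B$ are bases, so $|A|=|B|=K$, and every subset of either is independent by the independence-system property (Definition \ref{d1}). I would then process the positions in \emph{decreasing} order $j=K,K-1,\dots,1$, maintaining the set $B_j$ of elements of $B$ not yet assigned when position $j$ is about to be filled; after positions $K,\dots,j+1$ have been handled exactly $K-j$ elements have been used, so $|B_j|=j$.

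The reason to sweep backward rather than forward is the counting needed for augmentation: at step $j$ the prefix $A^{j-1}$ has $j-1$ elements while $B_j$ has $j$ elements and is independent, so $|B_j|>|A^{j-1}|$ holds at every step, and the augmentation property yields an $x\in B_j\setminus A^{j-1}$ with $A^{j-1}\cup\{x\}\in\mathcal{M}$. (A forward sweep fails because once $j$ exceeds roughly $K/2$ the remaining pool $B_j$ becomes smaller than $A^{j-1}$, so augmentation can no longer be invoked.) To secure the ``furthermore'' clause I would reserve the common elements: at step $j$, if $a_j\in A\cap B$ set $b_j=a_j$, which is legal because $A^{j-1}\cup\{a_j\}=A^{j}\subseteq A\in\mathcal{M}$; otherwise, when $a_j\in A\setminus B$, take $b_j$ to be the element $x$ produced by augmentation.

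The key step, and the place where the argument must be handled carefully, is showing that the reservation of common elements is consistent: that the augmentation branch never ``steals'' an element needed at its own position, and that each common element is still unassigned when its position arrives. Here I would establish the invariant that at the start of step $j$ every still-unassigned common element lies in $A^{j-1}$. Indeed, a common element $a_{j'}$ is assigned exactly at step $j'$, so the ones still unassigned at step $j$ are precisely those with $j'\le j$; and in the augmentation branch $a_j\notin B$ forces $j'\le j-1$, whence $a_{j'}\in A^{j-1}$. Consequently the augmentation output $x\in B_j\setminus A^{j-1}$ cannot be a common element, so the non-common branch never consumes a reserved element, and each $a_j\in A\cap B$ is indeed present in $B_j$ at step $j$.

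Finally I would collect the conclusion: the sweep assigns to the $K$ positions $K$ distinct elements of $B$, hence a bijection, so $(b_1,\dots,b_K)$ is a genuine ordering of $B$; by construction $A^{i-1}\cup\{b_i\}\in\mathcal{M}$ for every $i$, and whenever $b_i\in A\cap B$ it was placed by the reservation rule, giving $b_i=a_i$. The main obstacle is entirely the consistency invariant of the third paragraph; once that is in place, the existence of a valid fill at each step follows immediately from the favorable count $|B_j|>|A^{j-1}|$ together with the augmentation axiom.
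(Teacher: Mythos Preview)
Your backward-sweep argument is correct and complete. The paper itself does not supply a proof of Lemma~\ref{l2}; it merely cites \cite{CONFORTI1984251}, Lemma~2.2, and uses the statement as a black box. So there is no ``paper's own proof'' to compare against here.

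A small presentational remark: the sentence ``a common element $a_{j'}$ is assigned exactly at step $j'$'' reads as if it is already known, when in fact it is precisely the inductive claim being maintained. It would be cleaner to state the invariant explicitly as, say, $I(j)$: \emph{at the start of step $j$, $B_j\cap A=\{a_{j'}:j'\le j,\ a_{j'}\in B\}$}, and then verify the two cases. With this formulation the argument you give goes through verbatim: in the reservation branch $a_j\in B_j$ by $I(j)$ and removing it yields $I(j-1)$; in the augmentation branch, $I(j)$ forces any $x\in B_j\cap A$ to lie in $A^{j-1}$ (since $a_j\notin B$), so the augmentation output $x\in B_j\setminus A^{j-1}$ satisfies $x\notin A$, and again $I(j-1)$ follows. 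This also immediately gives $b_j\in B\setminus A$ in the augmentation branch, which is what secures the ``furthermore'' clause. None of this changes the substance of your proof; it only removes the superficial appearance of circularity.
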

%\begin{proof}
%This can be proved via induction on $i$, for $i=1 \dots K$. By the augmentation property of matroids, $\exists\ b_K \in B-A^{K-1} : A^{K-1} \cup \{b_K\} \in \mathcal{M}$. If $a_K \in B$, set $b_K = a_K$. Therefore the claim is true for $i=K$. Let us assume that it is true for $\forall i \ge t$:
%
%$$ \Rightarrow \exists\ B^t = (b_t,b_{t+1} \dots, b_K) \subset B : \quad A^{i-1} \cup b_i \in \mathcal{M}, \quad \forall i \ge t.$$
%
%\noindent Consider the set $\overline{B}^{t-1} = B \setminus B^t$. By the independence system property of matroids, $\overline{B}^{t-1} \subset B \in \mathcal{M}$.
%
%\noindent By the augmentation property of matroids, $\exists \ b_{t-1} \in \overline{B}^{t-1} \setminus A^{t-2}$ ($\Rightarrow b_{t-1} \not\in B^{t-1}$ and $b_{t-1} \not\in A^{t-2}$), such that $b_{t-1} \cup A^{t-2} \in \mathcal{M}$. Consider the case when $a_{t-1} \in B$:
%\begin{enumerate}
%    \item[$(a)$] For $i > t-1, b_i \in \overline{B}^{i} \setminus A^{i-1} \Rightarrow b_i \not\in A^{i-1} \supseteq A^{t-2}.$
%    \item[$(b)$] Thus, for $i>t-1, b_i \ne a_{t-1}$. This means $ a_{t-1} \not\in \{ b_t, b_{t+1}, \dots, b_{K} \}.$
%    \item[$(c)$] Since $a_{t-1} \in B$, from $(b)$, we have $a_{t-1} \in \overline{B}^{t-1}$. Since $a_{t-1} \not\in A^{t-2} \Rightarrow a_{t-1} \in \overline{B}^{t-1} - A^{t-2}$.
%\end{enumerate}
%From $(c)$, we can conclude that $b_{t-1} = a_{t-1}$ is a valid choice if $a_{t-1} \in B$. In this case, set $b_{t-1} = a_{t-1}$. Therefore, the statement is true for $i=t-1$ as well.
%\end{proof}

\begin{bemark} \label{r1}
    The ordering for $B$ in Lemma \ref{l2} need not be unique. In a particular iteration of the construction, there may exist more than one $b_{t} : b_{t} \cup A^{t-1} \in \mathcal{M}$. The ordering exists irrespective of which $b_{t}$ is chosen.
\end{bemark}

\begin{corollary} \label{c1}
Given a feasible solution $A$ to Problem \ref{prob:1} with ordering $A=(a_1,a_2,\dots,a_K)$, if for some $i$, there exists a unique $a$ : $A^{i-1} \cup \{ a \} \in \mathcal{M}$ then, $a$ belongs to every feasible solution to Problem \ref{prob:1}.
\end{corollary}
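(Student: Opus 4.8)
The plan is to prove the statement directly: I fix an arbitrary feasible solution $B$ to Problem \ref{prob:1}, which by the standing convention (we only consider solutions of cardinality $K$) is a basis, i.e.\ $B \in \mathcal{M}$ with $|B| = K$, and show that $a \in B$. Since $B$ is then arbitrary, this gives the claim that $a$ belongs to every feasible solution. The whole argument is a single invocation of the matroid augmentation property, so the body of the proof is short; most of the work is in lining up the hypotheses correctly.

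First I would record two preliminary observations. The prefix $A^{i-1}=(a_1,\dots,a_{i-1})$ is independent: $A^{i-1}\subseteq A$ and $A\in\mathcal{M}$, so by the independence system property (Definition \ref{d1}, property 2) $A^{i-1}\in\mathcal{M}$, with $|A^{i-1}| = i-1$. Since the ordering of $A$ has length $K$ we have $i\le K$, hence $|B| = K > i-1 = |A^{i-1}|$, which is the strict cardinality inequality needed to apply augmentation. I would also note that the hypothesized unique $a$ with $A^{i-1}\cup\{a\}\in\mathcal{M}$ is a genuinely new element, $a\notin A^{i-1}$: indeed $a_i$ itself satisfies $A^{i-1}\cup\{a_i\}=A^{i}\in\mathcal{M}$ with $a_i\notin A^{i-1}$, so by uniqueness $a = a_i$, and the set of \emph{new} extensions of $A^{i-1}$ is exactly $\{a\}$ (this is the reading consistent with the definition of $S_\bot$ in Definition \ref{defn:discgen}).

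The key step is then one application of the augmentation property (Definition \ref{d1}, property 3) to the independent sets $S = A^{i-1}$ and $T = B$: since $|B| > |A^{i-1}|$, there exists $x \in B\setminus A^{i-1}$ with $A^{i-1}\cup\{x\}\in\mathcal{M}$. Thus $x$ is a new extension of $A^{i-1}$, and by the uniqueness hypothesis the only such extension is $a$, so $x = a$; as $x\in B$ we conclude $a\in B$. There is no genuine obstacle in this argument; the only points needing care are the strict inequality $|B|>|A^{i-1}|$ (which is where $i\le K$ is used, so augmentation is guaranteed to produce at least one candidate $x\in B\setminus A^{i-1}$) and the interpretation of ``unique'' as uniqueness among $N\setminus A^{i-1}$, which is harmless because $a_i$ always witnesses at least one such extension. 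I would remark in passing that the hypothesis actually forces $i = K$ (so that $a$ is a coloop lying in every basis), but this observation is not needed: the augmentation argument delivers $a\in B$ directly for whatever $i$ the hypothesis holds.
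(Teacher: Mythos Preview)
Your proof is correct. It differs from the paper's in one respect: the paper invokes Lemma~\ref{l2} to order an arbitrary feasible $B$ as $(b_1,\dots,b_K)$ with $A^{i-1}\cup\{b_i\}\in\mathcal{M}$, and then concludes $b_i=a$ by uniqueness. You instead bypass Lemma~\ref{l2} entirely and apply the augmentation axiom once to the pair $A^{i-1}$, $B$, obtaining some $x\in B\setminus A^{i-1}$ with $A^{i-1}\cup\{x\}\in\mathcal{M}$ and then $x=a$ by uniqueness. Since Lemma~\ref{l2} is itself built on repeated augmentation, your argument is strictly more elementary and uses exactly the one step that is actually needed here; the paper's route has the minor advantage of being uniform with the machinery used elsewhere (e.g.\ in Corollary~\ref{c2} and the proof of Theorem~\ref{t6}). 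Your handling of the interpretation of ``unique $a$'' (as unique in $N\setminus A^{i-1}$, consistent with $S_\bot$) and the verification of the strict inequality $|B|>|A^{i-1}|$ are both correct and worth keeping.
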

\begin{proof}
Order any feasible solution $B$ to Problem \ref{prob:1} as per Lemma \ref{l2}. $A^{i-1} \cup b_{i} \in \mathcal{M} \Rightarrow b_i = a$, and hence $a \in B$.
\end{proof}

\begin{lemma} [\cite{CONFORTI1984251}, Lemma 2.2] \label{l3}
Given feasible solutions $A,B$ to Problem \ref{prob:1}, and an ordering $A = (a_1, a_2 \dots, a_K)$, ordering $B$ as per Lemma \ref{l2} guarantees that: \begin{align*} a_i \in A \setminus B &\Leftrightarrow b_i \in B \setminus A.
\end{align*}
\end{lemma}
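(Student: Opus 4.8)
The plan is to recast the biconditional position by position and then reduce it to a short counting argument. Since $A=(a_1,\dots,a_K)$ and $B=(b_1,\dots,b_K)$ are orderings of feasible solutions, both of cardinality $K$, for each index $i$ the element $a_i$ lies either in $A\cap B$ or in $A\setminus B$, and likewise $b_i$ lies either in $A\cap B$ or in $B\setminus A$. Hence the asserted equivalence $a_i\in A\setminus B \Leftrightarrow b_i\in B\setminus A$ is, position by position, the negation of the equivalence $a_i\in A\cap B \Leftrightarrow b_i\in A\cap B$, so it suffices to establish this latter form.

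First I would dispatch the easy implication. By the construction in Lemma \ref{l2}, whenever $b_i\in A\cap B$ we have $b_i=a_i$, so $a_i=b_i\in A\cap B$. Writing $P=\{i:b_i\in A\cap B\}$ and $Q=\{i:a_i\in A\cap B\}$, this shows $P\subseteq Q$. This implication alone already yields the $\Rightarrow$ direction of the lemma (it is exactly the contrapositive of $a_i\in A\setminus B \Rightarrow b_i\in B\setminus A$).

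The remaining inclusion $Q\subseteq P$ is where the work lies, and I would close it by counting rather than by direct element-chasing. Because $(a_1,\dots,a_K)$ enumerates the distinct elements of $A$, the number of indices $i$ with $a_i\in A\cap B$ is exactly $|A\cap B|$, so $|Q|=|A\cap B|$; identically, since $(b_1,\dots,b_K)$ enumerates $B$, we get $|P|=|A\cap B|$. Thus $P$ and $Q$ are finite sets of equal cardinality with $P\subseteq Q$, forcing $P=Q$. The equality $P=Q$ is precisely the equivalence $a_i\in A\cap B \Leftrightarrow b_i\in A\cap B$, which, after taking complements within each position as above, is the claimed statement $a_i\in A\setminus B \Leftrightarrow b_i\in B\setminus A$.

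The subtlety worth flagging is that the counting step hinges on both solutions having the same cardinality $K$ and on the two orderings sharing the common index set $\{1,\dots,K\}$; this is guaranteed by the standing convention that all feasible solutions considered have cardinality equal to the matroid rank. Apart from this, the argument uses only the second clause of Lemma \ref{l2} (that $b_i\in A\cap B$ implies $b_i=a_i$), so no further appeal to the augmentation property is needed.
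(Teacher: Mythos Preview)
Your argument is correct. The paper does not supply its own proof of this lemma; it is simply cited from \cite{CONFORTI1984251}, Lemma~2.2, with no argument reproduced in the appendix. Your counting approach is a clean self-contained substitute: the second clause of Lemma~\ref{l2} gives $P\subseteq Q$ for the index sets $P=\{i:b_i\in A\cap B\}$ and $Q=\{i:a_i\in A\cap B\}$, and since both orderings are bijections onto sets of size $K$ you get $|P|=|Q|=|A\cap B|$, hence $P=Q$, which is the desired equivalence after complementing within each position. The only external assumption you use is that both feasible solutions have cardinality equal to the matroid rank $K$, which the paper explicitly adopts as a standing convention, so nothing is missing.
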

We introduce Lemma \ref{l4} below as an aid for the proof in Section \ref{sec6p1} B.

\begin{lemma} \label{l4}
Given feasible solutions $A,B$ to Problem \ref{prob:1}, and an ordering $A = (a_1, a_2 \dots, a_K)$, ordering $B$ as per Lemma \ref{l2} also guarantees for all $i$ that:
\begin{align*}
&\exists A_c \subset A \setminus A^{i-1},\ |A_c| = K-i, \text{ such that: } A^{i-1} \cup \{b_i\} \cup A_c \in \mathcal{M}.
\end{align*}
\end{lemma}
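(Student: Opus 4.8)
The plan is to construct $A_c$ greedily by repeatedly invoking the matroid augmentation property (property 3 of Definition \ref{d1}) on the independent set $A^{i-1}\cup\{b_i\}$ and the independent set $A$. First I would record two facts about the ordering of $B$ produced by Lemma \ref{l2}. That lemma gives $A^{i-1}\cup\{b_i\}\in\mathcal{M}$ directly. Moreover $b_i\notin A^{i-1}$: if $b_i\in A$ then $b_i\in A\cap B$, so by the second part of Lemma \ref{l2} we have $b_i=a_i$, which is distinct from $a_1,\dots,a_{i-1}$; and if $b_i\notin A$ the claim is immediate. Hence $I:=A^{i-1}\cup\{b_i\}$ is an independent set of size exactly $i$, and crucially $A^{i-1}\subseteq I$.

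Next I would grow $I$ into an independent set of size $K$ using only elements of $A$. Since $I\in\mathcal{M}$, $A\in\mathcal{M}$, and $|I|=i<K=|A|$, the augmentation property yields some $x\in A\setminus I$ with $I\cup\{x\}\in\mathcal{M}$. Because $A^{i-1}\subseteq I$, we have $A\setminus I\subseteq A\setminus A^{i-1}$, so this $x$ lies in $A\setminus A^{i-1}$. Repeating the step, each time applying augmentation to the current independent set and $A$, I add one element of $A\setminus A^{i-1}$ at a time until after exactly $K-i$ steps the set reaches size $K$. Collecting the added elements into $A_c$ gives $A_c\subseteq A\setminus A^{i-1}$ with $|A_c|=K-i$ and $A^{i-1}\cup\{b_i\}\cup A_c\in\mathcal{M}$, which is the claim.

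The only point needing care is that the augmenting elements be drawn from $A\setminus A^{i-1}$ rather than merely from $A$; this is exactly what the containment $A^{i-1}\subseteq I$ guarantees, since augmentation selects from $A\setminus I$, which sits inside $A\setminus A^{i-1}$. The second delicate point is the size bookkeeping, namely verifying $b_i\notin A^{i-1}$ so that $|I|=i$ and precisely $K-i$ augmentation steps are needed; this uses the ``$b_i=a_i$ whenever $b_i\in A\cap B$'' half of Lemma \ref{l2}. I do not expect any substantial obstacle beyond these two observations, since the repeated augmentation is the standard way of extending one independent set using elements of a larger one.
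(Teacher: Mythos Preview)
Your proposal is correct and is essentially the same argument as the paper's: both proofs repeatedly invoke the matroid augmentation property to extend $A^{i-1}\cup\{b_i\}$ by $K-i$ elements drawn from $A$. The paper separates the case $b_i\in A$ as trivial before running the induction, whereas you treat both cases uniformly by first verifying $b_i\notin A^{i-1}$; this is a cosmetic difference, and your explicit check that $A\setminus I\subseteq A\setminus A^{i-1}$ is a detail the paper leaves implicit.
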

\begin{proof}
    The statement is trivially true for $b_i \in A$ ($\Rightarrow b_i=a_i$), as we can choose $A_c = A \setminus (A^{i-1} \cup \{b_i \})$. For $b_i \not\in A$, the claim is proved as follows by showing that we can find a sequence of sets $\{ A^t_c \}$ such that:
    \begin{align} \label{eq1}
    A_c^t \subset A \setminus A^{i-1},\ |A_c| = t, \text{ such that: } A^{i-1} \cup \{b_i\} \cup A^t_c \in \mathcal{M}.
    \end{align}
    From the ordering of $B$ as per Lemma \ref{l2},  (\ref{eq1}) is trivial for $t=0$. For $t > 0$, we prove (\ref{eq1}) by induction:
    \begin{enumerate}
        \item Assume that (\ref{eq1}) is true for some $t$:
        $$ \Rightarrow \exists A_c^t \subset A \setminus A^{i-1} ,\ |A_c^t| = t, \text{ such that: } A^{i-1}\cup \{b_i\} \cup A_c^t \in \mathcal{M}.$$
        \item Consider the sets $A^{i-1}\cup \{b_i\} \cup A_c^t$ and $A$. By the augmentation property of matroids, \begin{equation} \label{eq2}
        \exists a \in A \setminus A^{i-1} \setminus A_c^t : A^{i-1}\cup \{b_i\} \cup A_c^t \cup \{a\} \in \mathcal{M}. \end{equation}
        \item We set $A^{t+1}_c \gets A^t_c \cup \{a \}\ [\subset A \setminus A^{i-1}]$. From (\ref{eq2}), we have $A^{i-1} \cup b_i \cup A^{t+1}_c \in \mathcal{M}$. Thus we have found an $A^{t+1}_c$, with $|A^{t+1}_c| = t+1$ that satisfies (\ref{eq1}). Plugging $t=K-i$ into (\ref{eq1}):
    \begin{align*}
        \Rightarrow\ \exists A_c \subset A \setminus A^{i-1},\ |A_c| = K-i, \text{ such that: } A^{i-1} \cup \{b_i\} \cup A_c \in \mathcal{M}.
    \end{align*}
    \end{enumerate}
\end{proof}

\begin{corollary} \label{c2}
(An extension to Corollary \ref{c1}) For some feasible solution $A$ to Problem \ref{prob:1}, if there exists a unique $S : S \subseteq N \setminus A^{i-1},\ |S| = K-i+1$ such that $A^{i-1} \cup S \in \mathcal{M}$, then $S$ is a part of every feasible solution to Problem \ref{prob:1}.
\end{corollary}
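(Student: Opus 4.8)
The plan is to fix an arbitrary feasible solution $B$ to Problem \ref{prob:1} and show $S \subseteq B$. Since we only consider feasible solutions of cardinality $K$, and $|A^{i-1} \cup S| = (i-1) + (K-i+1) = K$, both $B$ and $A^{i-1}\cup S$ are maximal independent sets. It then suffices to prove that each individual element $s \in S$ lies in $B$. This mirrors Corollary \ref{c1}, which handled the special case $|S| = 1$; the extension simply replaces ``the unique single completion of $A^{i-1}$'' by ``the unique completion of size $K-i+1$''.

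The core step I would carry out is a single application of the augmentation property (property 3 of Definition \ref{d1}). Fix $s \in S$ and set $S' := A^{i-1} \cup (S \setminus \{s\})$. Because $S \subseteq N \setminus A^{i-1}$ the union is disjoint, so $|S'| = K-1$, and $S' \subseteq A^{i-1}\cup S \in \mathcal{M}$ gives $S' \in \mathcal{M}$ by the independence-system property. Now apply augmentation to $S'$ and $B$: as $|B| = K > K-1 = |S'|$, there exists $y \in B \setminus S'$ with $S' \cup \{y\} \in \mathcal{M}$. By construction $y \notin A^{i-1}$ and $y \notin S \setminus \{s\}$, so $(S \setminus \{s\}) \cup \{y\}$ is a subset of $N \setminus A^{i-1}$ of cardinality $K-i+1$ that satisfies $A^{i-1} \cup ((S \setminus \{s\}) \cup \{y\}) = S' \cup \{y\} \in \mathcal{M}$.

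At this point the uniqueness hypothesis does the work: $(S \setminus \{s\}) \cup \{y\}$ is a completion of $A^{i-1}$ of exactly the prescribed size $K-i+1$, so it must coincide with the unique such set $S$. Since $y \notin S \setminus \{s\}$ and both sets have size $K-i+1$, this forces $y = s$, and because $y \in B$ we conclude $s \in B$. Letting $s$ range over $S$ yields $S \subseteq B$, and letting $B$ range over all feasible solutions proves the claim. The main point to get right is the bookkeeping that makes uniqueness applicable, namely that the augmenting element $y$ lands outside $A^{i-1} \cup (S \setminus \{s\})$ so that $(S \setminus \{s\}) \cup \{y\}$ is a genuine disjoint completion of the correct cardinality rather than collapsing back onto $S \setminus \{s\}$; this is exactly what the condition $y \in B \setminus S'$ guarantees. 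An alternative route, closer in spirit to the surrounding lemmas, would be to order $B$ via Lemma \ref{l2} and invoke Lemma \ref{l4} to produce a completion $\{b_i\} \cup A_c$ of $A^{i-1}$ of size $K-i+1$, which uniqueness identifies with $S$; however, that argument only delivers $b_i \in S$ directly, so the per-element augmentation argument above is the cleaner path to the full inclusion $S \subseteq B$.
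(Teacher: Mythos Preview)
Your argument is correct and is genuinely different from the paper's route. The paper proceeds via the ordering machinery: it orders $B$ according to Lemma~\ref{l2}, then invokes Lemma~\ref{l4} to obtain a set $A_c \subset A\setminus A^{i-1}$ of size $K-i$ with $A^{i-1}\cup\{b_i\}\cup A_c \in \mathcal{M}$; uniqueness of $S$ then forces $b_i \in S$ and in fact $b_i = a_i$, and the argument is iterated over $t = i, i+1,\dots,K$ to recover all of $S$ inside $B$. By contrast, you bypass Lemmas~\ref{l2} and~\ref{l4} entirely and apply the matroid augmentation axiom once per element of $S$, removing an element $s$, augmenting from $B$, and letting uniqueness pin the augmenting element back to $s$. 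What your approach buys is brevity and self-containment: it uses only Definition~\ref{d1} and the uniqueness hypothesis. What the paper's approach buys is a little extra structural information (the alignment $b_t = a_t$ for all $t\ge i$ under the Lemma~\ref{l2} ordering), which is consistent with how Corollary~\ref{c2} is later used in Section~\ref{sec6p1}~B, though for the bare statement of the corollary your argument is cleaner. Your final remark about the Lemma~\ref{l4} route ``only delivering $b_i\in S$ directly'' slightly undersells it: the paper does complete the inclusion $S\subseteq B$, but only after an explicit iteration over $t>i$, which is exactly the overhead your augmentation-per-element trick avoids.
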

\begin{proof}
The set $A \setminus A^{i-1}$ is a valid candidate for $S$, since it has cardinality $K-i+1$ and $A^{i-1} \cup (A \setminus A^{i-1}) = A \in \mathcal{M}$. By the uniqueness of $S$, we conclude that $S = A \setminus A^{i-1} = (a_i,a_{i+1},\dots,a_K)$. Consider any feasible solution $B$ to Problem \ref{prob:1}. Order $B$ as per Lemma \ref{l2}. From Lemma \ref{l3}, we have that: 
    \begin{align} \label{eq3}
        &\exists A_c \subset A \setminus A^{i-1},\ |A_c| = K-i, \text{ such that:} A^{i-1} \cup \{b_i\} \cup A_c \in \mathcal{M}.
    \end{align}
Observe that: (a) $A_c \subset A \setminus A^{i-1}$, (b) $|A \setminus A^{i-1}| = K-i+1$, and (c) $|A_c| = K-i$.
Therefore $(A \setminus A^{i-1}) \setminus A_c$ is a singleton set containing some $a_{i_1}$, i.e. $A \setminus A^{i-1} = A_c \cup \{a_{i_1} \}$. Also note that: 
\begin{align} \label{eq4}
A = A_{i-1} \cup (A \setminus A^{i-1}) \in \mathcal{M},\ \Rightarrow\ A_{i-1} \cup (A_c \cup \{ a_{i_1} \}) \in \mathcal{M}.
\end{align}
Because of the uniqueness of $S : |S| = K-i+1$ and $S \cup A^{i-1} \in \mathcal{M}$, from (\ref{eq3}) and (\ref{eq4}), we have:
\begin{equation} \label{eq5}
a_{i_1} \in S  \quad \text{ and } \quad b_i = a_{i_1} \Rightarrow b_{i} \in A.
\end{equation}
By the ordering of $B$ as per Lemma \ref{l2}, $b_i \in A \Rightarrow b_i = a_i$. Thus, from (\ref{eq5}), we have $b_{i} = a_{i}$.
We also have $b_i = a_{i_1}$, which gives $a_i = a_{i_1}$. \vspace{\baselineskip}\newline
We can also argue that since $S = A \setminus A^{i-1}$ is the only set with $|S| = K-i+1$ such that $S \cup A^{i-1} \in \mathcal{M}$, $\forall t > i,\ S' = A \setminus A^{t-1}$ is the only set with $|S'| = K-t+1$ such that $S' \cup A^{t-1} \in \mathcal{M}$. Therefore, we can extend the arguments made in (\ref{eq3}), (\ref{eq4}) and (\ref{eq5}) to the set $A^{t-1}$ and conclude that $b_t = a_{i_t} = a_t$. Thus, $\{b_i,b_{i+1},\dots,b_K\} = \{a_i,a_{i+1},\dots,a_K \} = S \Rightarrow S \subseteq B$.
\end{proof}
\begin{lemma} \label{lemma:choices}
    There exists a unique $S : |S| = K-i+1$ such that $A^{i-1} \cup S \in \mathcal{M}$ iff there exist exactly $K-i+1$ choices for $a$ such that $A^{i-1} \cup \{a \} \in \mathcal{M}$.
\end{lemma}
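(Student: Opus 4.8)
The plan is to recast both sides in terms of the set of admissible single extensions of $A^{i-1}$, namely
$$A^{i-1}_\bot = \{a \in N \setminus A^{i-1} : A^{i-1} \cup \{a\} \in \mathcal{M}\},$$
so that the right-hand condition of the lemma reads exactly $|A^{i-1}_\bot| = K-i+1$. (As in Corollary \ref{c2}, I take both the sets $S$ and the candidate elements $a$ to be disjoint from $A^{i-1}$, which is the convention under which the statement is used.) Two elementary observations will do almost all the work. First, by the independence system property of Definition \ref{d1}, if $A^{i-1} \cup S \in \mathcal{M}$ then $A^{i-1} \cup \{a\} \in \mathcal{M}$ for every $a \in S$, so \emph{every} feasible $S$ satisfies $S \subseteq A^{i-1}_\bot$. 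Second, since $A$ is a feasible solution (a basis of rank $K$) with $A^{i-1} \subseteq A$, the set $S_0 := A \setminus A^{i-1}$ has $|S_0| = K-i+1$ and $A^{i-1} \cup S_0 = A \in \mathcal{M}$; hence a feasible $S$ always exists, $S_0 \subseteq A^{i-1}_\bot$, and unconditionally $|A^{i-1}_\bot| \ge K-i+1$.

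For the ($\Leftarrow$) direction I would assume $|A^{i-1}_\bot| = K-i+1$. Because $S_0 \subseteq A^{i-1}_\bot$ and both sets have cardinality $K-i+1$, we get $S_0 = A^{i-1}_\bot$, so $A^{i-1}_\bot$ is itself feasible. Uniqueness is then immediate: any feasible $S$ obeys $S \subseteq A^{i-1}_\bot$ with $|S| = K-i+1 = |A^{i-1}_\bot|$, forcing $S = A^{i-1}_\bot$.

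For the ($\Rightarrow$) direction I would argue the contrapositive. Assuming $|A^{i-1}_\bot| \ne K-i+1$, the unconditional bound gives $|A^{i-1}_\bot| \ge K-i+2$, so I can pick $a \in A^{i-1}_\bot \setminus S_0$. Since $a \in A^{i-1}_\bot$ gives $a \notin A^{i-1}$, while $a \notin S_0 = A \setminus A^{i-1}$, together these yield $a \notin A$. Now $A^{i-1} \cup \{a\}$ is independent of size $i$, and applying the augmentation property of Definition \ref{d1} repeatedly against the basis $A$ (of size $K > i$) enlarges it to an independent set $I'$ of size $K$ using only elements drawn from $A \setminus A^{i-1}$. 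Then $S' := I' \setminus A^{i-1}$ has $|S'| = K-i+1$ and $A^{i-1} \cup S' = I' \in \mathcal{M}$, so $S'$ is feasible; but $a \in S'$ while $a \notin S_0$, so $S' \ne S_0$, contradicting uniqueness. Hence uniqueness forces $|A^{i-1}_\bot| = K-i+1$.

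The one delicate point, and the step I expect to require the most care, is the augmentation construction in the forward direction: I must verify both that $a$ truly lies outside $A$ (so $S'$ genuinely differs from the canonical feasible set $S_0$) and that the repeated augmentation never stalls before reaching size $K$. Both reduce to the bookkeeping that, after adjoining $t$ new elements $x_1,\dots,x_t \in A \setminus A^{i-1}$, the current set has size $i+t$ and $|A \setminus (A^{i-1} \cup \{x_1,\dots,x_t\})| = K-i+1-t > 0$ as long as $i+t < K = |A|$, so a fresh element of $A$ is always available to continue the augmentation until the target size $K$ is met.
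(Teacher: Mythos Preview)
Your proof is correct and follows essentially the same approach as the paper: both directions hinge on the observation that every feasible $S$ is contained in $A^{i-1}_\bot$ (hereditary property) together with the augmentation property to build a second feasible $S'$ whenever $|A^{i-1}_\bot|$ exceeds $K-i+1$. The paper's argument is a two-sentence sketch invoking exactly these ingredients; you have simply unpacked it carefully, in particular making the augmentation-against-$A$ construction and the bookkeeping explicit.
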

    \begin{proof}
    The forward implication can be shown because there cannot be less than $K-i+1$ choices to add to $A^{i-1}$ (contradicts that $|S| = K-i+1$ as each element of $S$ is a valid choice) and there cannot be more than $K-i+1$ choices, because this contradicts the uniqueness of $S$ (using the augmentation property of matroids).\newline
    The reverse implication also directly follows from the augmentation property of matroids and from the rank of the matroid being equal to $K$.
    \end{proof}

\begin{lemma} [\cite{CONFORTI1984251} Lemma 2.1] \label{l5}
Given an instance of Problem \ref{prob:1}, for all $A,B \in \mathcal{M}$, with $A$ ordered as $(a_1,a_2, \dots, a_{|A|})$ and $A^t = (a_1,a_2, \dots, a_t)$, we have:
$$Z(B) \le \sum_{i:a_i \in A \cap B}\rho_{a_i}(A^{i-1}) + c \sum_{i:a_i\in A \setminus B} \rho_{a_i}(A^{i-1}) + \sum_{b \in B\setminus A}\rho_b(A).$$
\end{lemma}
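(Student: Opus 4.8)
The plan is to combine a telescoping decomposition of $Z(A\cup B)$ along the fixed ordering of $A$ with the submodularity bound of Lemma~\ref{l1}, and then to absorb the resulting ``loss'' terms for the elements of $A\setminus B$ using the curvature $c$. The whole argument is instance-free: no matroid structure beyond $A,B\in\mathcal{M}$ is used, only monotonicity, submodularity, and the definition of $c$.

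First I would note that $Z(B)\le Z(A\cup B)$ by monotonicity, and decompose $Z(A\cup B)-Z(B)$ by inserting the elements of $A$ one at a time in the given order. Since $B\cup A^{0}=B$ and $B\cup A^{K}=A\cup B$,
$$Z(A\cup B)-Z(B)=\sum_{i=1}^{K}\big(Z(B\cup A^{i})-Z(B\cup A^{i-1})\big)=\sum_{i:\,a_i\in A\setminus B}\rho_{a_i}(B\cup A^{i-1}),$$
where the terms with $a_i\in A\cap B$ vanish because then $a_i\in B\cup A^{i-1}$. Rearranging gives $Z(B)=Z(A\cup B)-\sum_{i:\,a_i\in A\setminus B}\rho_{a_i}(B\cup A^{i-1})$. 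Next I would upper bound $Z(A\cup B)\le Z(A)+\sum_{b\in B\setminus A}\rho_b(A)$ via Lemma~\ref{l1}, and expand $Z(A)$ by telescoping along the same ordering, $Z(A)=\sum_{i=1}^{K}\rho_{a_i}(A^{i-1})$ (using $Z(\phi)=0$), splitting the index set into $a_i\in A\cap B$ and $a_i\in A\setminus B$. Substituting these back yields
$$Z(B)\le\sum_{i:\,a_i\in A\cap B}\rho_{a_i}(A^{i-1})+\sum_{i:\,a_i\in A\setminus B}\big(\rho_{a_i}(A^{i-1})-\rho_{a_i}(B\cup A^{i-1})\big)+\sum_{b\in B\setminus A}\rho_b(A).$$

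The crux, and the step I expect to be the main (if short) obstacle, is to show that the bracketed loss satisfies $\rho_{a_i}(A^{i-1})-\rho_{a_i}(B\cup A^{i-1})\le c\,\rho_{a_i}(A^{i-1})$ for each $a_i\in A\setminus B$, i.e.\ that inserting $a_i$ at its ordered position already captures all but a $c$-fraction of its marginal value even after $B$ has been added. For such $i$ we have $a_i\notin B\cup A^{i-1}$, so when $\rho_{a_i}(\phi)>0$ the definition of curvature \eqref{def:curvature} gives $\rho_{a_i}(B\cup A^{i-1})\ge(1-c)\,\rho_{a_i}(\phi)$; chaining this with the submodularity fact $\rho_{a_i}(\phi)\ge\rho_{a_i}(A^{i-1})$ yields $\rho_{a_i}(B\cup A^{i-1})\ge(1-c)\,\rho_{a_i}(A^{i-1})$, which is exactly the desired bound. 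The degenerate case $\rho_{a_i}(\phi)=0$ forces $\rho_{a_i}(A^{i-1})=\rho_{a_i}(B\cup A^{i-1})=0$ by monotonicity and submodularity, so the inequality holds trivially there too. The only subtlety is to apply curvature at the shifted base set $B\cup A^{i-1}$ rather than at $\phi$, and then to push it back to $\rho_{a_i}(A^{i-1})$ through submodularity. Substituting this bound term-by-term into the middle sum replaces its coefficient $1$ by $c$, which gives precisely the claimed inequality and completes the proof.
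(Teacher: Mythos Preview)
Your argument is correct and follows essentially the same route as the paper (and the cited Conforti--Cornu\'ejols proof): the paper's own proof of Lemma~\ref{l6} explicitly invokes the intermediate identity
\[
Z(B)\le \sum_{i:a_i\in A}\rho_{a_i}(A^{i-1})-\sum_{i:a_i\in A\setminus B}\rho_{a_i}(B\cup A^{i-1})+\sum_{b\in B\setminus A}\rho_b(A),
\]
which is exactly what your telescoping plus Lemma~\ref{l1} produces, and then applies the curvature bound $\rho_{a_i}(B\cup A^{i-1})\ge(1-c)\rho_{a_i}(A^{i-1})$ just as you do. One cosmetic remark: the monotonicity step $Z(B)\le Z(A\cup B)$ you mention at the outset is not actually used, since your telescoping gives $Z(B)=Z(A\cup B)-\sum_{i:a_i\in A\setminus B}\rho_{a_i}(B\cup A^{i-1})$ as an equality.
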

\begin{lemma} \label{l6}
    Given an instance of Problem \ref{p2}, for all $A,B \in \mathcal{M}^p$, with $A$ ordered as $(a_1,a_2,\dots,a_{|A|})$ and $A^t=(a_1,a_2,\dots,a_t)$ we have:
\begin{equation*}
    Z(B) \le \sum_{\substack{i : a_i\in A \setminus B \\ a_i=(u_i,r_i)}} c_{u_i} \rho_{a_i}(A^{i-1}) + \sum_{i : a_i \in A \cap B} \rho_{a_i}(A^{i-1}) + \sum_{b \in B \setminus A} \rho_b (A),
\end{equation*}
where the curvature of $Z_{u_i}(S)$ is denoted by $c_{u_i}$.
\end{lemma}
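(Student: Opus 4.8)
The plan is to imitate the proof of Lemma \ref{l5}, but to exploit the separable structure of the partition objective $Z(S)=\sum_{u\in\cU}Z_u(S_u)$. The crucial observation is that for an element $a_i=(u_i,r_i)$ the marginal gain $\rho_{a_i}(S)=\rho_{r_i}^{u_i}(S)=Z_{u_i}(S_{u_i}\cup\{r_i\})-Z_{u_i}(S_{u_i})$ depends only on user $u_i$'s valuation and on the resources already allocated to $u_i$. Consequently, wherever the proof of Lemma \ref{l5} invokes the global curvature $c$ on a term indexed by $a_i$, I can instead invoke the curvature $c_{u_i}$ of the single function $Z_{u_i}$, which yields the sharper, per-element coefficient $c_{u_i}$ in the final bound.

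First I would combine the two available global estimates. By monotonicity $Z(B)\le Z(A\cup B)$, and Lemma \ref{l1} gives $Z(A\cup B)\le Z(A)+\sum_{b\in B\setminus A}\rho_b(A)$. Next I would compute $Z(A\cup B)-Z(B)$ by inserting the elements of $A\setminus B$ into $B$ one at a time in the order they appear in $A$; telescoping gives $Z(A\cup B)-Z(B)=\sum_{i:a_i\in A\setminus B}\rho_{a_i}\big(B\cup((A\setminus B)\cap A^{i-1})\big)$. A short set-inclusion check shows $B\cup((A\setminus B)\cap A^{i-1})\supseteq A^{i-1}$, so each intermediate set contains $A^{i-1}$.

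The heart of the argument is the per-user curvature step. For each $i$ with $a_i=(u_i,r_i)\in A\setminus B$, the definition of $c_{u_i}$ gives $\rho_{r_i}^{u_i}(S)\ge(1-c_{u_i})\rho_{r_i}^{u_i}(\phi)$ for every resource set $S$ (trivially when $\rho_{r_i}^{u_i}(\phi)=0$, since submodularity then forces all marginals of that element to vanish). Applying this to the intermediate set and then using $\rho_{a_i}(\phi)\ge\rho_{a_i}(A^{i-1})$ (submodularity of $Z_{u_i}$) yields $\rho_{a_i}\big(B\cup((A\setminus B)\cap A^{i-1})\big)\ge(1-c_{u_i})\rho_{a_i}(A^{i-1})$, hence $Z(A\cup B)-Z(B)\ge\sum_{i:a_i\in A\setminus B}(1-c_{u_i})\rho_{a_i}(A^{i-1})$. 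I would then substitute this lower bound, together with the telescoping identity $Z(A)=\sum_{i:a_i\in A\cap B}\rho_{a_i}(A^{i-1})+\sum_{i:a_i\in A\setminus B}\rho_{a_i}(A^{i-1})$, into $Z(B)=Z(A\cup B)-\big(Z(A\cup B)-Z(B)\big)$ after bounding the first $Z(A\cup B)$ via Lemma \ref{l1}. Collecting the two sums over $A\setminus B$, whose coefficients combine as $1-(1-c_{u_i})=c_{u_i}$, produces exactly the claimed inequality.

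The only genuine obstacle is conceptual rather than computational: one must verify that the curvature estimate may legitimately be localized to user $u_i$. This is precisely where the additive decomposition $Z=\sum_{u}Z_u$ is used — it guarantees that $\rho_{a_i}$ is a marginal of $Z_{u_i}$ alone, so that $c_{u_i}$ (not the global $c=\max_u c_u$) controls the loss incurred by that element. Everything else is the bookkeeping of the Conforti--Cornu\'ejols argument carried out with element-dependent constants.
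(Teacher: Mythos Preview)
Your proposal is correct and follows essentially the same route as the paper: both reproduce the Conforti--Cornu\'ejols argument behind Lemma~\ref{l5}, arriving at the intermediate inequality $Z(B)\le \sum_{i:a_i\in A}\rho_{a_i}(A^{i-1}) - \sum_{i:a_i\in A\setminus B}\rho_{a_i}(B\cup A^{i-1}) + \sum_{b\in B\setminus A}\rho_b(A)$ (your telescoping set $B\cup((A\setminus B)\cap A^{i-1})$ equals $B\cup A^{i-1}$ since $A^{i-1}\setminus B=(A\setminus B)\cap A^{i-1}$), and then replace the global curvature bound by the per-user estimate $\rho_{a_i}(B\cup A^{i-1})\ge(1-c_{u_i})\rho_{a_i}(A^{i-1})$, which is exactly the localization step you identify as the heart of the matter. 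Your write-up is in fact more explicit than the paper's about why this per-user step is legitimate (via $\rho_{a_i}(\cdot)$ depending only on $Z_{u_i}$, then curvature to $\phi$, then submodularity back to $A^{i-1}$); the mention of monotonicity $Z(B)\le Z(A\cup B)$ in your opening is harmless but unused in the actual chain of inequalities.
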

\begin{proof}
\noindent Following the proof of Lemma \ref{l5}  [\cite{CONFORTI1984251} Lemma 2.1] , we can write:
    \begin{equation} \label{eq:36000}
    Z(B) \le \sum_{i:a_i \in A}\rho_{a_i}(A^{i-1}) - \sum_{\substack{i:a_i\in A \setminus B \\ a_i = (u_i,r_i)}} \rho_{a_i}(B \cup A^{i-1}) + \sum_{b \in B\setminus A}\rho_b(A).
		\end{equation}
		From the definition of curvature of $Z_u(S)$ in (\ref{def:curvature}), $\rho^u_r(B \cup A^{i-1}) \ge (1-c_u) \rho_r^u(A^{i-1})$. Thus, (\ref{eq:36000}) gives:
    \begin{align} \label{eq:46700}
		Z(B) \le \sum_{i:a_i \in A}\rho_{a_i}(A^{i-1}) - \sum_{\substack{i:a_i\in A \setminus B \\ a_i = (u_i,r_i)}} (1-c_{u_i}) \rho_{a_i}(A^{i-1}) + \sum_{b \in B\setminus A}\rho_b(A).
    \end{align}
		Rearranging terms in (\ref{eq:46700}) and using $\underset{i : a_i \in A}{\sum} \rho_{a_i}(A^{i-1}) - \underset{i : a_i \in A \setminus B}{\sum} \rho_{a_i}(A^{i-1}) = \underset{i : a_i \in A \cap B}{\sum} \rho_{a_i}(A^{i-1})$ gives:
		\begin{align*}
		Z(B) \le \sum_{i : a_i \in A \cap B} \rho_{a_i}(A^{i-1}) + \sum_{\substack{i : a_i\in A \setminus  B \\ a_i=(u_i,r_i)}} c_{u_i} \rho_{a_i}(A^{i-1}) + \sum_{b \in B \setminus  A} \rho_b (A).
\end{align*}
\end{proof}

\subsection{Proof of Theorem \ref{t6}}\label{app:proofgen}
For proving Theorem \ref{t6}, we restate the following definitions.
\begin{enumerate}
    \item The solution output by the $\mathsf{GREEDY}$ algorithm is denoted by $G =(g_1,g_2 \dots,\dots,  g_K)$ ordered in the sequence of elements chosen by the $\mathsf{GREEDY}$ algorithm, where $K$ is the rank of the matroid.
    \item Partial solutions output by $\mathsf{GREEDY}$: $G^t$ as $(g_1,g_2 \dots, g_t)$.
    \item $ \rho_i := \underset{g_i}{\text{max }} \rho_{g_i}(G^{i-1}) : g_i \cup G^{i-1} \in \mathcal{M} \quad$ [$\mathsf{GREEDY}$ increment at iteration $i$].
    \item Recalling $N$ as the ground set in Problem \ref{prob:1}, for any set $S \subseteq N$, $$S_\bot := \{ j : j \in N \setminus S,\ j \cup S \in \mathcal{M}\}.$$
\end{enumerate}
\begin{lemma} \label{l8} For any $g' \in G^{i-1}_\bot \setminus G$, where 
$G$ is the output of the $\mathsf{GREEDY}$ algorithm,
$$\rho_{g'}(G^{i-1}) \le \frac{\rho_{g_i}(G^{i-1})}{d_i} = \frac{\rho_i}{d_i}.$$
\end{lemma}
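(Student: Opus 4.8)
The plan is to derive the inequality as an essentially immediate consequence of the definition of the discriminant (Definition \ref{defn:discgen}), once the right membership facts are lined up. First I would record that the element $g_i$ chosen by $\mathsf{GREEDY}$ in iteration $i$ belongs to the final solution $G$, since $G^i = G^{i-1}\cup\{g_i\}$ and $G^i\subseteq G$. Consequently, any $g'\in G^{i-1}_\bot\setminus G$ must be different from $g_i$: it lies outside $G$ whereas $g_i$ lies inside. This simple disjointness is the only structural input the proof needs.

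Next I would compare $g'$ against the maximization that defines $d_i$. By Definition \ref{defn:discgen}, the denominator of $d_i$ is $\max_{g'_i\ne g_i}\rho_{g'_i}(G^{i-1})$, where the maximum ranges over $g'_i\in G^{i-1}_\bot$ with $g'_i\ne g_i$. Since the hypothesis gives $g'\in G^{i-1}_\bot$ and the previous step gives $g'\ne g_i$, the element $g'$ is admissible in this maximization. Hence $\rho_{g'}(G^{i-1})\le \max_{g'_i\ne g_i}\rho_{g'_i}(G^{i-1})$, and substituting the defining expression for the denominator of $d_i$ yields $\rho_{g'}(G^{i-1})\le \rho_{g_i}(G^{i-1})/d_i$. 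Recalling the shorthand $\rho_i=\rho_{g_i}(G^{i-1})$ completes the chain $\rho_{g'}(G^{i-1})\le \rho_i/d_i$.

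The only point requiring separate attention is the degenerate case $d_i=\infty$, which by Definition \ref{defn:discgen} occurs exactly when every competing increment $\rho_{g'_i}(G^{i-1})$ with $g'_i\ne g_i$ is zero. In that case $\rho_{g'}(G^{i-1})=0$, so the bound holds trivially under the convention $\rho_i/\infty=0$. There is no genuine obstacle here: the lemma is little more than a reindexing of the discriminant's definition, and the main thing to get right is the membership argument $g'\ne g_i$ together with this boundary case. I would therefore keep the write-up to a few lines rather than inflate it.
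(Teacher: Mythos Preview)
Your proposal is correct and follows essentially the same approach as the paper: both arguments reduce to the observation that $g'\in G^{i-1}_\bot\setminus G$ forces $g'\ne g_i$ (the paper phrases this as the containment $G^{i-1}_\bot\setminus G\subseteq G^{i-1}_\bot\setminus\{g_i\}$), after which the bound is read off directly from the definition of $d_i$. Your explicit treatment of the $d_i=\infty$ case is a minor addition not spelled out in the paper's proof.
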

\begin{proof}
Since $g' \in G^{i-1}_\bot \setminus G$, we can write: \begin{equation} \label{eq18}
\rho_{g'}(G^{i-1}) \le \text{max}\left\{ \rho_{j'}(G^{i-1}) : j' \in G^{i-1}_\bot \setminus G \right\}.
\end{equation}
Observe that $G_\bot^{i-1} \setminus G \subseteq G_\bot^{i-1} \setminus \{ g_i \}$. Therefore, (\ref{eq18}) gives:
\begin{equation} \label{eq2098}
\rho_{g'}(G^{i-1}) \le \text{max}\left\{ \rho_{j'}(G^{i-1}) : j' \in G^{i-1}_\bot,\ j' \ne g_i \right\}.
\end{equation}
From the definition of $d_i$ in iteration $i$ (Definition \ref{defn:discgen}), we have:
\begin{align*}
&d_i = \frac{\rho_{g_i}(G^{i-1})}{\text{max}\left\{ \rho_{j'}(G^{i-1}) : j' \in G^{i-1}_\bot,\ j' \ne g_i \right\} },
\\
\text{which implies that} \ \  &\text{max}\left\{ \rho_{j'}(G^{i-1}) : j' \in G^{i-1}_\bot,\ j' \ne g_i \right\} = \frac{\rho_{g_i}(G^{i-1})}{d_i}.
\end{align*}
Together with (\ref{eq2098}), this completes the proof.
\end{proof}

\begin{lemma} \label{l9}
Given that $G$ is ordered as $(g_1,g_2,\dots,g_K)$, ordering $\Omega$ as $(\omega_1,\omega_2 \dots, \omega_K)$ as per Lemma \ref{l2} guarantees for any $1 \le t \le K$:
\begin{equation*}
\rho_{\omega_i}(G^{t-1}) \le 
\begin{cases}
\rho_{t} &\quad \forall i: i > t, \\ 
\frac{\rho_i}{d_i} &\quad \forall i: 1 \le i \le t, \\
\end{cases}
\quad i: \omega_i \in \Omega\setminus G.
\end{equation*}
\end{lemma}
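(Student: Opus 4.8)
The plan is to treat the two regimes $i>t$ and $1\le i\le t$ separately, in both cases exploiting that the Lemma~\ref{l2} ordering of $\Omega$ makes $\omega_i$ a \emph{feasible competitor} to the greedy choice at the relevant iteration. Throughout I would rely on two consequences of that ordering: first, that $G^{i-1}\cup\{\omega_i\}\in\mathcal{M}$ for every $i$ (this is exactly the property guaranteed by Lemma~\ref{l2}); and second, that $\omega_i\in\Omega\setminus G$ forces $\omega_i\notin G^{j-1}$ for every $j$, since $G^{j-1}\subseteq G$. I would state this reduction explicitly before splitting into cases, because both cases hinge on transferring the feasibility $G^{i-1}\cup\{\omega_i\}\in\mathcal{M}$ to a different base set via the downward-closure (independence-system) property of the matroid.

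For the case $i>t$ I would first argue that $\omega_i$ is an element the greedy algorithm was eligible to pick at iteration $t$. Since $t<i$ gives $G^{t-1}\subseteq G^{i-1}$, and $G^{i-1}\cup\{\omega_i\}\in\mathcal{M}$, downward closure yields $G^{t-1}\cup\{\omega_i\}\in\mathcal{M}$. As $\omega_i\notin G^{t-1}$, the element $\omega_i$ is a valid candidate in the argmax defining iteration $t$, so its increment cannot exceed the greedy increment; that is, $\rho_{\omega_i}(G^{t-1})\le\rho_t$ follows directly from $\rho_t=\max\{\rho_q(G^{t-1}):q\cup G^{t-1}\in\mathcal{M}\}$.

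For the case $1\le i\le t$ I would first collapse the superset $G^{t-1}$ back to $G^{i-1}$ using submodularity: since $G^{i-1}\subseteq G^{t-1}$ and $\omega_i\notin G^{t-1}$, diminishing returns give $\rho_{\omega_i}(G^{t-1})\le\rho_{\omega_i}(G^{i-1})$ (with equality when $i=t$). It then suffices to bound $\rho_{\omega_i}(G^{i-1})$, and for this I would invoke Lemma~\ref{l8}. To apply it I must verify its hypothesis $\omega_i\in G^{i-1}_\bot\setminus G$: the membership $\omega_i\in G^{i-1}_\bot$ follows from $\omega_i\notin G^{i-1}$ together with $G^{i-1}\cup\{\omega_i\}\in\mathcal{M}$, while $\omega_i\notin G$ is immediate from $\omega_i\in\Omega\setminus G$. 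Lemma~\ref{l8} then gives $\rho_{\omega_i}(G^{i-1})\le\rho_i/d_i$, and chaining this with the submodularity step closes the case.

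I do not anticipate a deep obstacle here: the argument is essentially careful bookkeeping of matroid-feasibility and submodularity inequalities. The only point genuinely requiring care is ensuring that the feasibility supplied by the Lemma~\ref{l2} ordering survives the transfer to a different base set — to $G^{t-1}$ in the first case, and the reuse inside the definition of $G^{i-1}_\bot$ in the second — both of which rest on downward closure. Once that transfer is made explicit, each case reduces to one line (greedy optimality in the first, and Lemma~\ref{l8} after a submodularity step in the second).
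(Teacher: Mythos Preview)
Your proposal is correct and follows essentially the same approach as the paper: the same two-case split, downward closure to make $\omega_i$ a feasible competitor at iteration $t$ in the first case, and a submodularity step followed by Lemma~\ref{l8} (after verifying $\omega_i\in G^{i-1}_\bot\setminus G$) in the second. The only detail the paper adds is a preliminary remark that $\{i:\omega_i\in\Omega\setminus G\}\subseteq\{i:i<i_0\}$ so that $d_i$ is well-defined, but this does not affect the argument.
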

\begin{proof} Recall that $i_0$ is defined as $\min \{i: |G_\bot^{i-1}| = K-i+1 \}$. Recall from the discussion in Section \ref{sec6p1} B, for $i \ge i_0,\ g_i \in G \cap \Omega$. Therefore the set $\{i : \omega_i \in \Omega \setminus G\}$ must be a subset of $\{i:i < i_0\}$. As long as $i < i_0$, $d_i$ is always defined (even if it is defined as $\infty$). Given that $G$ is ordered as $(g_1,g_2,\dots,g_K)$, order $\Omega$ as $(\omega_1,\omega_2 \dots, \omega_K)$ as in Lemma \ref{l2}. Consider $\rho_{\omega_i}(G^{t-1})$ for the $2$ cases:
\begin{enumerate}
\item If $i > t$, by the ordering of $\Omega$, $G^{i-1} \cup \omega_i \in \mathcal{M}$. Since, $G^{t-1} \cup \{\omega_i\} \subseteq G^{i-1} \cup \{\omega_i\}$, by the independence system property of matroids $G^{t-1} \cup \{\omega_i\} \in \mathcal{M}$. Thus, from the solution of $\mathsf{GREEDY}$, $G$, we have: $\rho_{\omega_i}(G^{t-1}) \le \rho_{g_{t}}(G^{t-1}) = \rho_{t}$ for $i > t$.
\item If $1 \le i \le t$, by submodularity, $\rho_{\omega_i}(G^{t-1}) \le \rho_{\omega_i}(G^{i-1}) \quad -(a)$.\newline Some facts about $\omega_i$:
    \begin{enumerate}
    \item[$(b)$] $G^{i-1} \cup \{\omega_i \} \in \mathcal{M} \Rightarrow \omega_i \in G^{i-1}_\bot$ [due to the ordering of $\Omega$],
    \item[$(c)$] $\omega_i \in \Omega \setminus G \Rightarrow \omega_i \not\in G$.
    \item[$(d)$] From $(b)$ and $(c)$, we have $\omega_i \in G^{i-1}_\bot \setminus G$. From Lemma \ref{l8}, we have $\rho_{\omega_i}(G^{i-1}) \le \frac{\rho_i}{d_i}$. 
    \end{enumerate}
    Together with $(a)$, for $1 \le i \le t$, this gives: $\rho_{\omega_i}(G^{t-1}) \le \frac{\rho_i}{d_i}.$
\end{enumerate}
\end{proof}
Now we are ready to prove Theorem \ref{t6}.
\begin{proof}[Proof of Theorem \ref{t6}]
From Lemma \ref{l5} for $\Omega$ and $G$, we have:
\begin{equation*}
    Z(\Omega) \le c \sum_{i : g_i\in G \setminus  \Omega} \rho_{g_i}(G^{i-1}) + \sum_{i : g_i \in G \cap \Omega} \rho_{g_i}(G^{i-1}) + \sum_{\omega \in \Omega \setminus  G} \rho_\omega (G).
\end{equation*}

\noindent Following the ordering for $\Omega$ in Lemma \ref{l2}, it follows that:
\begin{equation} \label{eq19}
    Z(\Omega) \le c \sum_{i : g_i\in G \setminus  \Omega } \rho_i + \sum_{i : g_i \in G \cap \Omega} \rho_i + \sum_{i: \omega_i \in \Omega \setminus  G} \rho_i \frac{\rho_{\omega_i}(G^{i-1})}{\rho_i}.
\end{equation}
\noindent From Lemma \ref{l3}, under the ordering of $\Omega$ from Lemma \ref{l2},
\begin{align} \label{eq3454}
    g_i\in G \setminus \Omega \Leftrightarrow \omega_i \in \Omega \setminus G.
\end{align}
We define
\begin{equation} \label{eq788}
    d_i' := \frac{\rho_i}{\rho_{\omega_i}(G^{i-1})}.
\end{equation}
For $i : \omega_i \in \Omega \setminus G$, we can lower bound $d_i'$ by $d_i$. This is a consequence of Lemma \ref{l9}, with $t=i$:
\begin{align}
\rho_{\omega_i}(G^{i-1}) &\le \frac{\rho_{i}}{d_i} \nonumber \\
\Rightarrow\ d_i &\le d_i' \label{eq:dom}
\end{align}
%This is because:
%\begin{enumerate}
%    \item $\omega_i \not\in G \Rightarrow \omega_i \ne g_i$,
%   \item the ordering of $\Omega$ as per Lemma \ref{l2} gives $G^{i-1} \cup \{ \omega_i\} \in \mathcal{M} \Rightarrow \omega_i \in G^{i-1}_\bot$.
%\end{enumerate} This means that $\omega_i \in G^{i-1}_\bot \setminus \{ g_i \}$. It then follows from the definition of $d_i$ (Definition \ref{defn:discgen}) that the denominator of $d_i$ involves maximizing over a set that contains $\omega_i$, from which we can conclude that 
%\begin{equation}\label{eq:dom}
%d_i \le d_i'.
%\end{equation}

\noindent Consider the last term in the RHS of (\ref{eq19}):
\begin{align}
    \sum_{i: \omega_i \in \Omega \setminus  G} \rho_i \frac{\rho_{\omega_i}(G^{i-1})}{\rho_i}
    &\overset{(i)}{=} \sum_{i: \omega_i \in \Omega \setminus G} \frac{\rho_i}{d_i'} \overset{(ii)}{=} \sum_{i: g_i \in G \setminus \Omega} \frac{\rho_i}{d_i'}, \label{eq556}
\end{align}
where $(i)$ follows from the definition of $d_i'$ in (\ref{eq788}) and $(ii)$ follows from the assertion in (\ref{eq3454}). Substituting (\ref{eq556}) back in (\ref{eq19}), we have:
\begin{align} \label{eq20}
    Z(\Omega) &\le c \sum_{i : g_i\in G \setminus  \Omega } \rho_i + \sum_{i : g_i \in G \cap \Omega} \rho_i + \sum_{i: g_i \in G \setminus \Omega} \frac{\rho_i}{d_i'}, \nonumber \\
    &= \sum_{i : g_i\in G \setminus  \Omega } \left(c + \frac{1}{d_i'}\right)\rho_i + \sum_{i : g_i \in G \cap \Omega} \rho_i, \nonumber \\
    &\le \left(c +\underset{i : g_i \in G \setminus \Omega}{\text{max }} \ \frac{1}{d_i'}\right) \sum_{i : g_i\in G \setminus  \Omega } \rho_i + \sum_{i : g_i \in G \cap \Omega} \rho_i.
\end{align}

\noindent We reduce (\ref{eq20}) as below:
\begin{align} \label{eq21}
    Z(\Omega) \le \text{max}\left(1,\ c +\underset{i : g_i \in G \setminus \Omega}{\text{max }} \ \frac{1}{d_i'}\right) \sum_{i : g_i\in G} \rho_i = \text{max}\left(1,\ c +\underset{i : g_i \in G \setminus \Omega}{\text{max }} \ \frac{1}{d_i'}\right) Z(G).
\end{align}

\noindent Since we cannot compute $G \setminus \Omega$ in polynomial time, we relax the domain over which $\frac{1}{d_i'}$ is maximized. Recall from the discussion in Section \ref{sec6p1} B: $\forall i \ge i_0,\ g_i \in G \cap \Omega$. Therefore $\{i:i \ge i_0 \} \subseteq \{ i: g_i \in G \cap \Omega \}$, which means that $\{i:i < i_0 \} \supseteq \{ i: g_i \in G \setminus \Omega \}$.\newline
Therefore, in the term involving maximization of $\frac{1}{d_i'}$ over $G \setminus \Omega$ in (\ref{eq20}), we replace $ G \setminus \Omega$ by the larger set $\{i: i < i_0 \}$ and perform the maximization over this set. This gives:
\begin{align} \label{eq22}
    Z(\Omega) \le \text{max}\left(1,\ c +\underset{i < i_0}{\text{max }} \ \frac{1}{d_i'}\right) Z(G).
\end{align}
From \eqref{eq:dom}, lower bounding $d_i'$ by $d_i$ in (\ref{eq22}) gives:
\begin{align*}
    Z(\Omega) \le \text{max}\left(1,\ c +\underset{i < i_0}{\text{max }} \ \frac{1}{d_i}\right) Z(G) &\le \text{max}\left(1,\ c + \ \frac{1}{\underset{i < i_0}{\text{min }}d_i}\right) Z(G),\\
    &\overset{(i)}{=} \text{max}\left(1,c+\frac{1}{d_{\min}} \right) Z(G),
\end{align*}
where $(i)$ follows from the definition of $d_{\min}$ (Definition \ref{defn:discgen}). Hence we get, 
\begin{equation*}
    \frac{Z(G)}{Z(\Omega)} \ge\frac{1}{ \text{max}\left(1,\frac{1}{d_{\min}}+c\right)} = \text{min} \left( 1, \frac{1}{ c + \frac{1}{d_{\min}}} \right) = \text{min} \left( 1, \frac{1}{\left( c + \underset{i < i_0}{\text{max }} \ \frac{1}{d_i}\right)} \right).
\end{equation*}

\end{proof}
\subsection{Proof of Theorem \ref{t5}}
Before proving Theorem \ref{t5}, we restate the following definitions.
 In the following, we denote the increment  made by the $\mathsf{GREEDY-M}$ algorithm in iteration $i$ by choosing the user-resource pair $g_i = (u_i,r_i)$, as 
$ \rho_i =  \rho^{u_i}_{r_i}(G^{i-1}) := \rho_{g_i}(G^{i-1})$ and the set chosen by the $\mathsf{GREEDY-M}$ algorithm as $G$.
\begin{lemma} \label{l7}
With $|\cR|=n$, let the solution generated by the $\mathsf{GREEDY-M}$ algorithm $G$ be ordered as $(g_1,g_2,  \dots, g_n)$ (in order of resources allocated by the $\mathsf{GREEDY-M}$ algorithm), where $g_i = (u_i,r_i)$ and the partial greedy solutions $G^t=(g_1,g_2, \dots,g_t)$ [with $G^0 = \phi$]. Ordering $\Omega$ as $(\omega_1,\omega_2,\dots,\omega_n)$, where $\omega_i = (\hat{u}_i,r_i)$ [arranged in the same order of resources $r_i^{{\text 's}}$ allocated by the $\mathsf{GREEDY-M}$ algorithm] gives:
\begin{equation*}
\rho^{\hat{u}_i}_{r_i}(G^t) \le 
\begin{cases}
\rho_{t+1} &\forall i: i > t,\\ 
\frac{\rho_i}{d_i^p} \quad &\forall i: i \le t,
\end{cases} \qquad
i: (\hat{u}_{i},r_i) \in \Omega \setminus G,
\end{equation*}
where $\rho_{r_i}^{\hat{u}_i}(G^t)$ follows the same notation as in (\ref{eq:rhodef}) and is equal to $Z(G^t \cup \{(\hat{u}_i,r_i)\}) - Z(G^t)$.
\end{lemma}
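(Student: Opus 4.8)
The plan is to prove Lemma \ref{l7} in close analogy with Lemma \ref{l9}, exploiting the special structure of the partition matroid that lets us match $G$ and $\Omega$ resource by resource. Since both $G$ and $\Omega$ are feasible solutions of cardinality $n$ (every resource is allocated exactly once, as $Z$ is monotone), the ordering $\omega_i = (\hat{u}_i, r_i)$ that pairs each resource $r_i$ chosen by $\mathsf{GREEDY-M}$ at iteration $i$ with the user $\hat{u}_i$ to whom $\Omega$ assigns $r_i$ is a well-defined bijection. I would first note that this ordering automatically satisfies the matroid-compatibility property of Lemma \ref{l2}: since $G^{i-1}$ uses only the distinct resources $r_1, \dots, r_{i-1}$, the set $G^{i-1} \cup \{\omega_i\}$ uses the distinct resources $r_1, \dots, r_i$ and hence lies in $\mathcal{M}^p$; moreover $\omega_i \in G$ would force $\hat{u}_i = u_i$, i.e. $\omega_i = g_i$. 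Thus only indices $i$ with $(\hat{u}_i, r_i) \in \Omega \setminus G$, equivalently $\hat{u}_i \ne u_i$, are relevant.

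For the case $i > t$, I would argue that the resource $r_i$ is still unallocated after $t$ iterations: $G^t$ allocates exactly the resources $r_1, \dots, r_t$, so $r_i \in \mathcal{R}(G^t_\bot)$. Consequently the pair $(\hat{u}_i, r_i)$ is an eligible choice for $\mathsf{GREEDY-M}$ at iteration $t+1$, and by the maximality of the increment selected at that iteration, $\rho^{\hat{u}_i}_{r_i}(G^t) \le \rho_{t+1}$.

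For the case $i \le t$, I would first invoke submodularity. Because $i-1 < t$, the resource set allocated to user $\hat{u}_i$ in $G^{i-1}$ is contained in the set allocated to $\hat{u}_i$ in $G^t$, so diminishing returns give $\rho^{\hat{u}_i}_{r_i}(G^t) \le \rho^{\hat{u}_i}_{r_i}(G^{i-1})$. Then, since $\hat{u}_i \ne u_i$, the user $\hat{u}_i$ is one of the competitors $u' \ne u_i$ appearing in the denominator of the discriminant $d_i^p$, whence $\rho^{\hat{u}_i}_{r_i}(G^{i-1}) \le \max_{u' \ne u_i} \rho^{u'}_{r_i}(G^{i-1}) = \rho_i / d_i^p$ directly from Definition \ref{defn:disc}. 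Chaining the two inequalities yields $\rho^{\hat{u}_i}_{r_i}(G^t) \le \rho_i / d_i^p$.

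The individual steps here are routine; the only points requiring care are aligning the discriminant definition with the present situation and verifying the availability of $r_i$. Specifically, one must recognize that $(\hat{u}_i, r_i) \in \Omega \setminus G$ guarantees $\hat{u}_i \ne u_i$, so that $\hat{u}_i$ is genuinely a second-best-or-lower user for $r_i$ at iteration $i$ and therefore lands in the maximization defining $d_i^p$; and that the partition structure, in which each resource is allocated exactly once, is precisely what keeps $r_i$ unallocated at iteration $t+1$ in the first case. This is the partition analogue of the two-case split in Lemma \ref{l9}, with the index shift $t-1 \mapsto t$ (hence $\rho_t \mapsto \rho_{t+1}$) arising simply from stating the bound at $G^t$ rather than at $G^{t-1}$.
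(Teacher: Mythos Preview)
Your proposal is correct and follows essentially the same approach as the paper: both proofs split on $i>t$ versus $i\le t$, use feasibility of $(\hat u_i,r_i)$ at iteration $t{+}1$ and greedy maximality for the first case, and for the second case combine submodularity (to pass from $G^t$ to $G^{i-1}$) with the observation that $\hat u_i\ne u_i$ places $\hat u_i$ in the maximization defining $d_i^p$. Your additional remarks tying the resource-by-resource ordering to Lemma~\ref{l2} and explaining the index shift relative to Lemma~\ref{l9} are correct extras, not deviations from the paper's argument.
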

\begin{proof}$ $
Consider an index $i : (\hat{u}_i,r_i) \in \Omega\setminus G \subseteq \Omega \setminus G^t$. If no such $i$ exists, we are guaranteed that $\Omega \setminus G = \phi \Rightarrow \Omega = G$. Otherwise, for such $i$, consider $\rho_{r_i}^{\hat{u}_i}(G^t)$:
\begin{enumerate}
\item For $i > t$, $G^t \cup (\hat{u}_i,r_i) \in \mathcal{M}^p$, since the resource $r_i$ has not yet been allocated in $G^t$. Thus for $i>t$, $\rho_{t+1} \ge \rho_{r_i}^{\hat{u}_i}(G^t)$.
\item $\mathsf{GREEDY-M}$ has allocated the resources $\{r_1,r_2, \dots, r_t\}$ to $G^t$ until iteration $t$, implying that $(u_i,r_i) \in G^t$ for any $i \le t$. Therefore, defining $S_{t,i} = G^t \cup \{(\hat{u}_i,r_i)\}$, we have that for $i \le t$, $S_{t,i} \not\in \mathcal{M}^p$. This is because $\{(u_i,r_i),(\hat{u}_i,r_i)\} \subset S_{t,i}$, and hence it fails to satisfy $|S_{t,i} \cap \mathcal{V}_{r_i}| \le 1$ - a condition every set in $\mathcal{M}^p$ must satisfy as defined in Problem \ref{p2}. However,
\begin{enumerate}
    \item[(a)] $G^{i-1} \cup \{(\hat{u}_i,r_i)\} \in \mathcal{M}^p$ (since $r_i$ has not yet been allocated in $G^{i-1}$), and
    \item[(b)] $\hat{u}_i \ne u_i$. This is because:
    \begin{align*}
        (\hat{u}_i,r_i) &\in \Omega \setminus G, \\ \Rightarrow (\hat{u}_i,r_i) &\not\in G = \{(u_1,r_1),(u_2,r_2),\dots,(u_n,r_n)\}, \\
        \Rightarrow (\hat{u}_i,r_i) &\ne (u_i,r_i). \qquad [\text{Since } (u_i,r_i) \text{ is an element of } G]
    \end{align*}
\end{enumerate}
\noindent Thus, from (a) and (b) we have that $(\hat{u}_i,r_i)$ is a valid choice for $\mathsf{GREEDY-M}$ in the $i^{th}$ iteration, but it involves allocating $r_i$ to a user different from the greedy choice in that iteration. Therefore, we conclude that $\hat{u}_i$ is at most the second best user choice for $r_i$ in iteration $i$ and we get the inequality $(i)$ that for $i\le t$
\begin{align}
\rho_{r_i}^{\hat{u}_i} (G^{i-1}) \overset{(i)}{\le} \underset{u \ne u_i}{\text{max }} \rho^{u}_{r_i} (G^{i-1}) \overset{(ii)}{\le} \frac{1}{d_i^p}\rho_{r_i}^{u_i}(G^{i-1}) = \frac{\rho_i}{d_i^p}. \label{eq1232}
\end{align}
\noindent where $(ii)$ follows from the definition of the discriminant at iteration $i$. It is important to note that $(i)$ is conditionally true only if $(\hat{u}_i,r_i) \in \Omega \setminus G$. Since $r_i$ has not yet been allocated in $G^{i-1}$, we are assured that $G^{i-1} \cup \{(\hat{u}_i ,r_i)\} \in \mathcal{M}^p$. Thus, from (\ref{eq1232}) we have:
$$ \text{for } i\le t,\ \ \rho^{\hat{u}_i}_{r_i}(G^t) \overset{(iii)}{\le} \rho_{r_i}^{\hat{u}_i} (G^{i-1}) \le \frac{\rho_i}{d_i^p}$$
where $(iii)$ follows from submodularity.
\end{enumerate}
\end{proof}
 
%\noindent As a function of the discriminant, we derive an approximation guarantee for the $\mathsf{GREEDY-M}$ algorithm that subsumes the previous guarantees. 
\subsubsection{Proof of Theorem \ref{t5}}

\begin{proof}
\noindent From Lemma \ref{l6}, we have:
\begin{equation} \label{eq14}
    Z(\Omega) \le \sum_{\substack{i : g_i\in G \setminus  \Omega \\ g_i=(u_i,r_i)}} c_{u_i} \rho_{g_i}(G^{i-1}) + \sum_{i : g_i \in G \cap \Omega} \rho_{g_i}(G^{i-1}) + \sum_{\omega \in \Omega \setminus  G} \rho_\omega (G).
\end{equation}

\noindent Let $G$ be ordered as $(g_1,g_2,\dots,g_n)$, where $g_i = (u_i,r_i)$. Recall that,
$$g_i = \underset{u,r}{\text{ argmax}}\left\{ \rho_r^u(G^{i-1}) : \ r \in \mathcal{R}(G^{i-1}_\bot) \right\}.$$
We order $\Omega$ as $(\omega_1,\omega_2,\dots,\omega_n)$, where $\omega_i = (\hat{u}_i,r_i)$ [in the same order of resources $r_i^{{\text 's}}$ allocated by $\mathsf{GREEDY-M}$]. Consider the last summation in the RHS of (\ref{eq14}):
\begin{align} \label{eq15}
    \sum_{\omega \in \Omega \setminus G} \rho_\omega(G) = &\sum_{i: \omega_i \in \Omega \setminus G} \rho_{\omega_i}(G)
    \overset{(i)}{\le} \sum_{i: \omega_i \in \Omega \setminus G} \rho_{\omega_i}(G^{i-1})
    \overset{(ii)}{\le} \sum_{i: \omega_i \in \Omega \setminus G} \frac{\rho_i}{d_i^p},
\end{align}
where $(i)$ follows from submodularity of $Z(S)$ and $(ii)$ follows from Lemma \ref{l7}. 
Putting together (\ref{eq14}) and (\ref{eq15}), we have:
\begin{align} \label{eq16}
    Z(\Omega) &\le \sum_{\substack{i : g_i\in G \setminus  \Omega \\ g_i=(u_i,r_i)}} c_{u_i} \rho_i + \sum_{i : g_i \in G \cap \Omega} \rho_i + \sum_{i: \omega_i \in \Omega \setminus  G} \frac{\rho_i}{d_i^p}.
\end{align}

\noindent In order to combine terms in the first and the third summations of (\ref{eq16}), we give an argument to show that if $\omega_i = (\hat{u}_i,r_i) \in \Omega \setminus G$ then $g_i = (u_i,r_i) \in G \setminus \Omega$ for the same index $i$. Consider $i$ such that $\omega_i = (\hat{u}_i,r_i) \in \Omega \setminus G$. If no such $i$ exists, $\Omega \setminus G = \phi \Rightarrow \Omega = G$, and hence the statement of Theorem \ref{t5} holds true trivially. Otherwise, it follows that:
\begin{enumerate}
\item[(a)] $\omega_i \not\in G$.
\item[(b)] Since $ g_i = (u_i, r_i) \in G$, for such $i$, it follows from (a) that $\omega_i \ne g_i$. This means that $\hat{u}_i \ne u_i$.
\item[(c)] From (b) it follows that if $\hat{u}_i \ne u_i$, then $(u_i,r_i) \not\in \Omega$ (since $\Omega \in \mathcal{M}^p$, it cannot contain both $(\hat{u}_i,r_i)$ and $(u_i,r_i)$). Hence $ g_i \in G \setminus \Omega$.
\end{enumerate}
Since $|G \setminus \Omega| = |\Omega \setminus G|$ (which is a consequence of $|G| = |\Omega| = n$), we combine the first and the third summations in (\ref{eq16}) to give:
\begin{align} \label{eq17}
    Z(\Omega) \le \sum_{\substack{i : g_i\in G \setminus \Omega \\ g_i = (u_i,r_i)}} \left(c_{u_i}+\frac{1}{d_i^p} \right) \rho_i + \sum_{i : g_  i \in G \cap \Omega} \rho_i.
\end{align}
Since, $
    1 \le \text{max}\left(1,\frac{1}{d_i^p}+c_{u_i}\right), \text{ and } \left(c_{u_i}+\frac{1}{d_i^p} \right) \le  \text{max}\left(1,\frac{1}{d_i^p}+c_{u_i}\right)$ from (\ref{eq17}), we have:
\begin{align*}
    Z(\Omega) &\le
    \sum_{i = 1}^{n} \text{max}\left(1,\frac{1}{d_i^p}+c_{u_i}\right) \rho_i \le \underset{i}{\text{max}}\left(\text{max}\left(1,\frac{1}{d_i^p}+c_{u_i}\right) \right) Z(G).
\end{align*}

\noindent Thus, $   \frac{Z(\Omega)}{Z(G)} \le\frac{1}{ \underset{i}{\text{max}}\left(\text{max}\left(1,\frac{1}{d_i^p}+c_{u_i}\right) \right)} = \text{min} \left( 1, \frac{1}{ \underset{i}{\text{max}}\left(c_{u_i} + \frac{1}{d_i^p}\right)}\right)$.
\end{proof}

\subsection{Proof of Theorem \ref{t7} - Online Partition Problem}
\noindent Before proceeding ahead, we restate the following notation:
\begin{enumerate}
\item User chosen by the $\mathsf{GREEDY-ON}$ algorithm at time $t$: $u_t$.
\item The $\mathsf{GREEDY-ON}$ solution, $G = \{(u_1,j_1),(u_2,j_2),\dots,(u_n,j_n)\} = \{g_1,g_2,\dots,g_n \}$ and optimal solution, $\Omega = \{(\hat{u}_1,j_1),(\hat{u}_2,j_2),\dots, (\hat{u}_n,j_n)\} = \{\omega_1,\omega_2,\dots,\omega_n \}$.
\item Partition matroid $\mathcal{M}^p$ as defined in Problem $2$. Then, $\sum_{u \in \mathcal{U}} Z_u(S_u) \ \equiv \ Z(S) : S \in \mathcal{M}^p$.
\item The set of all user-resource pairs involving the resource $j_t$: $U^t = \{(u,j_t): \forall u \in \mathcal{U}\}$
\item $\Psi^t = \Omega \cap U^t$ and $\xi^t = G \cap U^t$.
\end{enumerate}

\begin{bemark} \label{r3}
    $| \Psi^t \cap \xi^t | \le 1$. This is because at every time $t$ only a single resource is allocated ($|\Psi^t| = |\xi^t| = 1$).
\end{bemark}

\begin{lemma} \label{lemma:intersection}
    For $t_1 \ne t_2$, $\Psi^{t_1} \cap \Psi^{t_2} = \xi^{t_1} \cap \xi^{t_2} = \Psi^{t_1} \cap \xi^{t_2} = \phi$.
\end{lemma}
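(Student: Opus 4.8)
The plan is to reduce all three equalities to a single structural fact: the ``columns'' $U^{t_1}$ and $U^{t_2}$ of the user--resource grid are disjoint whenever $t_1 \ne t_2$. First I would observe that the arrival sequence $\sigma$ is a permutation of the $n$ resources, so the resources $j_1, j_2, \dots, j_n$ arriving at distinct times are themselves pairwise distinct; in particular $j_{t_1} \ne j_{t_2}$ for $t_1 \ne t_2$. Since every element of $U^t = \{ (u, j_t) : u \in \mathcal{U} \}$ has second coordinate exactly $j_t$, a pair lying in both $U^{t_1}$ and $U^{t_2}$ would have to have second coordinate equal to both $j_{t_1}$ and $j_{t_2}$, which is impossible. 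Hence $U^{t_1} \cap U^{t_2} = \phi$.

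With this in hand, each of the three claimed identities follows by a one-line set manipulation using the definitions $\Psi^t = \Omega \cap U^t$ and $\xi^t = G \cap U^t$. For the first, $\Psi^{t_1} \cap \Psi^{t_2} = (\Omega \cap U^{t_1}) \cap (\Omega \cap U^{t_2}) = \Omega \cap (U^{t_1} \cap U^{t_2}) = \phi$; the second is identical with $G$ in place of $\Omega$; and the third follows from the inclusion $\Psi^{t_1} \cap \xi^{t_2} \subseteq U^{t_1} \cap U^{t_2} = \phi$, without even needing to know which solution each factor comes from.

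I do not foresee a genuine obstacle here. The only point requiring care is making explicit the (otherwise implicit) fact that the $j_t$ are distinct, which is precisely where the permutation structure of the arrival sequence $\sigma$ enters; everything else is purely formal intersection-of-sets reasoning and requires no appeal to submodularity, the matroid axioms, or the detailed behavior of the $\mathsf{GREEDY-ON}$ algorithm.
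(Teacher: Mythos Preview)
Your proposal is correct and follows essentially the same route as the paper: both reduce the three identities to the disjointness $U^{t_1} \cap U^{t_2} = \phi$ and then use $\Psi^t, \xi^t \subseteq U^t$. You spell out more explicitly why $U^{t_1} \cap U^{t_2} = \phi$ (via distinctness of $j_{t_1}$ and $j_{t_2}$ from the permutation structure of $\sigma$), which the paper leaves implicit, but the argument is otherwise identical.
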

\begin{proof}
It follows from definition that $\Psi^t \subseteq U^t$ and $\xi^t \subseteq U^t$. Since, $U^{t_1} \cap U^{t_2} = \phi$, the result follows.
\end{proof}

\begin{lemma} \label{l10}
$G \setminus \Omega$ and $\Omega \setminus G$ can be decomposed as:
\begin{align*}
G \setminus \Omega = \bigcup_{t=1}^n \xi^t \setminus \Psi^t,\ \text{ and }\
\Omega \setminus G = \bigcup_{t=1}^n \Psi^t \setminus \xi^t.
\end{align*}
\end{lemma}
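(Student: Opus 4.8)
The plan is to first show that the families $\{\xi^t\}_{t=1}^n$ and $\{\Psi^t\}_{t=1}^n$ partition $G$ and $\Omega$ respectively, and then to reduce each set difference term by term. Since every resource $j_t$ is allocated to exactly one user by both $\mathsf{GREEDY-ON}$ and the optimal algorithm, each element of $G$ is a pair $(u_s,j_s)\in U^s$, so $G\subseteq\bigcup_{t=1}^n U^t$ and hence $G=\bigcup_{t=1}^n (G\cap U^t)=\bigcup_{t=1}^n \xi^t$. By Lemma \ref{lemma:intersection} the sets $\xi^t$ are pairwise disjoint, so this is a disjoint union; the identical argument gives $\Omega=\bigcup_{t=1}^n \Psi^t$ as a disjoint union. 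Remark \ref{r3} ($|\xi^t|=|\Psi^t|=1$) is exactly the statement that each $U^t$ contributes one element to each cover.

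Next, I would distribute the complement across this decomposition. Writing $G\setminus\Omega=\bigcup_{t=1}^n(\xi^t\setminus\Omega)$, it suffices to show $\xi^t\setminus\Omega=\xi^t\setminus\Psi^t$ for each $t$. The key observation is that $\xi^t\subseteq U^t$, so intersecting with $\Omega$ only sees the part of $\Omega$ lying in $U^t$, namely $\xi^t\cap\Omega=\xi^t\cap(\Omega\cap U^t)=\xi^t\cap\Psi^t$. Subtracting this from $\xi^t$ yields $\xi^t\setminus\Omega=\xi^t\setminus\Psi^t$, and assembling over $t$ gives $G\setminus\Omega=\bigcup_{t=1}^n(\xi^t\setminus\Psi^t)$. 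The claim for $\Omega\setminus G$ then follows verbatim by exchanging the roles of $G$ and $\Omega$ (and correspondingly of $\xi^t$ and $\Psi^t$), using $\Psi^t\subseteq U^t$ in place of $\xi^t\subseteq U^t$.

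This lemma is essentially bookkeeping, so I do not expect a genuine obstacle; the only point requiring care is justifying that $U^1,\dots,U^n$ cover both $G$ and $\Omega$, which relies on the fact that in both solutions each of the $n$ resources is allocated to exactly one user. Once the cover-and-disjointness facts from Remark \ref{r3} and Lemma \ref{lemma:intersection} are in hand, the remainder is the elementary identity $A\setminus C=A\setminus(C\cap U^t)$, valid whenever $A\subseteq U^t$, applied with $A=\xi^t$, $C=\Omega$ (and symmetrically with $A=\Psi^t$, $C=G$).
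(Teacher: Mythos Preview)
Your proposal is correct and follows essentially the same approach as the paper: first write $G=\bigcup_t \xi^t$ and $\Omega=\bigcup_t \Psi^t$, then use the disjointness of the $U^t$'s (equivalently, Lemma~\ref{lemma:intersection}) to reduce each set difference termwise. The paper's own proof is a two-line sketch of exactly this argument; your version simply spells out the step $\xi^t\cap\Omega=\xi^t\cap\Psi^t$ that the paper leaves implicit.
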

\begin{proof}
From the definition of $\Psi^t$ and $\xi^t$, it follows that: $G = \bigcup_{t=1}^n \xi^t, \ \text{ and }\
\Omega = \bigcup_{t=1}^n \Psi^t.$ The proof concludes using Lemma \ref{lemma:intersection}.
\end{proof}

\subsubsection{Proof of Theorem \ref{t7}}

\begin{proof}
\noindent From Lemma \ref{l1}, we have:
\begin{align} \label{eq23}
Z(\Omega \cup G) & \le Z(G) + \sum_{q \in \Omega \setminus  G} \rho_q(G), \nonumber \\
&\overset{(i)}{=} Z(G) + \sum_{t=1}^{n} \sum_{q \in \Psi^t \setminus  \xi^t} \rho_q(G), \nonumber \\
&\overset{(ii)}{\le} Z(G) + \sum_{t=1}^n \sum_{q \in \Psi^t\setminus \xi^t} \rho_q(G^{t-1}),
\end{align}
where $(i)$ follows from Lemma \ref{l10} and Remark \ref{lemma:intersection}, and $(ii)$ follows from submodularity. Expanding the LHS of (\ref{eq23}), and again applying Lemma \ref{l10} and Remark \ref{lemma:intersection} gives:

\begin{align} \label{eq24}
Z(\Omega \cup G) = Z(\Omega) + \sum_{t=1}^n \sum_{q \in \xi^t \setminus  \Psi^t} \rho_{q}(\Omega \cup G^{t-1}).
\end{align}

\noindent Combining (\ref{eq23}) and (\ref{eq24}):
\begin{align} \label{eq25}
    Z(\Omega) &\le Z(G) + \sum_{t=1}^n \sum_{q \in \Psi^t\setminus \xi^t} \rho_q(G^{t-1}) - \sum_{t=1}^n \sum_{q \in \xi^t \setminus  \Psi^t} \rho_{q}(\Omega \cap G^{t-1}).
\end{align}
From the definition of curvature, for any $q = (u_t,j_t)$:
\begin{equation} \label{eq26}
    \rho_{q}(\Omega \cap G^{t-1}) \ge (1-c_{u_t}) \rho_q(G^{t-1}).
\end{equation}
Expanding $Z(G)$ in (\ref{eq25}) and combining it together with (\ref{eq26}), we have:
\begin{align}
    Z(\Omega) &\le \sum_{t=1}^n \sum_{q \in \xi^t} \rho_q(G^{t-1}) + \sum_{t=1}^n \sum_{q \in \Psi^t\setminus \xi^t} \rho_q(G^{t-1}) - \sum_{t=1}^n \sum_{\substack{q \in \xi^t \setminus \Psi^t \\ q = (u_t,j_t)}} (1-c_{u_t})\rho_{q}(G^{t-1}). \label{eq4342}
\end{align}
Cancelling identical terms in the first and last summations in (\ref{eq4342}), we get:
\begin{align} \label{eq27}
    Z(\Omega) &\le \sum_{t=1}^n \sum_{q \in \xi^t \cap \Psi^t} \rho_q(G^{t-1}) + \sum_{t=1}^n \sum_{q \in \Psi^t\setminus \xi^t} \rho_q(G^{t-1}) + \sum_{t=1}^n \sum_{\substack{q \in \xi^t \setminus \Psi^t \\ q = (u_t,j_t)}} c_{u_t}.\rho_{q}(G^{t-1}).
\end{align}

\noindent Consider the middle summation of (\ref{eq27}). Recall that $\hat{u}_t$ is the user that the resource $j_t$ is allocated to in the optimal solution $\Omega$. For $q = (\hat{u}_t,j_t) \in \Psi^t\setminus \xi^t$:
\begin{enumerate}
    \item $(\hat{u}_t,j_t) \not\in \xi^t$. But, $\xi_t$ contains $(u_t,j_t)$ (since it is equal to $G \cap U^t$), and hence we conclude that $\hat{u}_t \ne u_t$.
    \item We also note that $G^{t-1} \cup (\gamma_{n-1}, j_t) \in \mathcal{M}^p$, $\forall \gamma_{n-1} \in \mathcal{U}$ (since $j_t$ has not been allocated in $G^{t-1}$), and hence $G^{t-1} \cup (\hat{u}_t,j_t) \in \mathcal{M}^p$.
\end{enumerate}
Therefore, from the definition of discriminant (Definition \ref{defn:discon}) for the online case, we can write for $q \in \Psi^t\setminus \xi^t$:
\begin{equation} \label{eq879}
\rho_q(G^{t-1}) \le \frac{\rho_{j_t}^{u_t}(G^{t-1})}{d_t^o}.
\end{equation}
Recalling, $\rho_t := \rho_{j_t}^{u_t}(G^{t-1})$ and putting together (\ref{eq27}) with (\ref{eq879}), we have:
\begin{align} \label{eq28}
    Z(\Omega) &\le \sum_{t=1}^n \left( \sum_{q \in \xi^t \cap \Psi^t} \rho_t + \sum_{q \in \Psi^t\setminus \xi^t} \frac{\rho_t}{d_t^o} + \sum_{\substack{q \in \xi^t\setminus \Psi^t \\ q = (u_t,j_t)}} c_{u_t}.\rho_t \right).
\end{align}
Recalling that $\Psi^t = \{ (\hat{u}_t,j_t)\}$ and $\xi^t =\{ (u_t,j_t)\}$. If $\xi^t \setminus \Psi^t \ne \phi$, then it is clear that $\Psi^t \setminus \xi^t \ne \phi$. Thus, we combine together the last two summations in (\ref{eq28}) to get 
\begin{align*}
    Z(\Omega) &\le \sum_{t=1}^n \sum_{q \in \xi^t \cap \Psi^t} \rho_t + \sum_{t=1}^n\sum_{\substack{q \in \xi^t\setminus \Psi^t \\ q = (u_t,j_t)}} \left( c_{u_t} + \frac{1}{d_t^o} \right)\rho_t.
\end{align*}
Bringing $c_{u_t} + \frac{1}{d_t^o}$ out of the summation by defining $\Lambda = \underset{t}{\text{max}}\left(c_{u_t}  +\frac{1}{d_t^o}\right)$, we get
\begin{align} \label{eq29}
    Z(\Omega) \le \sum_{t=1}^n \sum_{q \in \xi^t \cap \Psi^t} \rho_t + \Lambda \sum_{t=1}^n\sum_{\substack{q \in \xi^t\setminus \Psi^t \\ q = (u_t,j_t)}} \rho_t.
\end{align}
We can further reduce the RHS of (\ref{eq29}) to:
\begin{align*} 
    Z(\Omega) &\le \text{max}\left(1,\Lambda\right) \sum_{t=1}^n \sum_{q \in \xi^t \cap \Psi^t} \rho_t + \text{max}\left(1,\Lambda\right) \sum_{t=1}^n\sum_{\substack{q \in \xi^t\setminus \Psi^t \\ q = (u_t,j_t)}} \rho_t, \\
    &= \text{max}\left(1, \Lambda\right)\sum_{t=1}^n \sum_{q \in \xi^t} \rho_t \ = \ \text{max}\left(1, \Lambda\right) Z(G).
\end{align*}
This gives the required bound:
\begin{align*}
    \frac{Z(G)}{Z(\Omega)} \ge \text{min}\left(1, \frac{1}{\Lambda} \right) = \text{min}\left(1, \frac{1}{\underset{t}{\text{max}}\left(c_{u_t}  +\frac{1}{d_t^o}\right)}\right).
\end{align*}
\end{proof}

%\noindent By relaxing $d_t = 1,\ \forall t$ this bound automatically guarantees a $1+\underset{u}{\text{max }}c_u$, and a $1+\underset{t}{\text{max }}c_{u_t}$ competitive ratio when $\mathsf{GREEDY-ON}$ is used for Problem \ref{p3}. This is an improvement over the $\frac12$-approximation illustrated in [Nemhauser].

\subsection{Proof of Lemma \ref{lem:partitiontight} - Tight example for the bound in Theorem \ref{t5}:}\label{app:tightpart}
\begin{enumerate}
\item Set of users, $\mathcal{U} = \{ u_1, u_2 \}$, and set of resources, $\mathcal{R} = \{ r_1,r_2,\dots,r_{n}\}$.
\item For some $0 \le c \le 1$, $1 \le d \le \frac{1}{1-c}$, and some small $\epsilon>0$, let $d_-=d-\epsilon$.
\item Define the monotone submodular valuation functions, $Z_{u_1}(S)$ and $Z_{u_2}(S)$:\vspace{0.4\baselineskip}\newline
\begin{tabular}{|c|c|c|c|c|c}
    \hline & $r_1$ & $r_2$ & $r_3$ & $r_4$ & $\cdots$ \\
    \hline $u_1$ & $d$ & \begin{tabular}{|c|c|}
        $r_1 \not\in S_{u_1}$ & $r_1 \in S_{u_1}$ \\
        $d_-$ & $d_-(1-c)$
    \end{tabular} & $d^3(1-c)^2$
    & \begin{tabular}{|c|c|}
        $r_3 \not\in S_{u_1}$ & $r_3 \in S_{u_1}$ \\
        $d_-^3(1-c)^2$ & $d_-^3(1-c)^3$
    \end{tabular}
    & $\cdots$ \\
    \hline $u_2$ & $1$ & $d^2(1-c)$ & \begin{tabular}{|c|c|}
        $r_2 \not\in S_{u_2}$ & $r_2 \in S_{u_2}$ \\
        $d_-^2(1-c)$ & $d_-^2(1-c)^2$
    \end{tabular} & $d^4(1-c)^3$ & $\cdots$ \\ \hline
\end{tabular}\vspace{0.4\baselineskip}\newline

Subsequently,

\begin{center}
\begin{tabular}{|c|c|c|c|}
    \hline & $r_i : i$ is odd & $r_i : i$ is even \\
    \hline $u_1$ & $d^i(1-c)^{i-1}$
    & \begin{tabular}{|c|c|}
        $r_{i-1} \not\in S_{u_1}$ & $r_{i-1} \in S_{u_1}$ \\
        $d_-^{i-1}(1-c)^{i-2}$ & $d_-^{i-1}(1-c)^{i-1}$
    \end{tabular} \\
    \hline $u_2$ & \begin{tabular}{|c|c|}
        $r_{i-1} \not\in S_{u_2}$ & $r_{i-1} \in S_{u_2}$ \\
        $d_-^{i-1}(1-c)^{i-2}$ & $d_-^{i-1}(1-c)^{i-1}$
    \end{tabular} & $d^{i}(1-c)^{i-1}$ \\ \hline
\end{tabular}
\end{center}
\item $Z_{u_1}(S)$ and $Z_{u_2}(S)$ both have curvature $c$. This is discussed in more detail in Appendix \ref{app:int1}.
\item $\mathsf{GREEDY-M}$ \textbf{solution:}
\begin{enumerate}
    \item $(u_1,r_1)$ has the maximum initial valuation. Thus, the $\mathsf{GREEDY-M}$ algorithm allocates $r_1$ to $u_1$. The discriminant for the $1^{st}$ iteration is $d_1 = d$.
    \item $(u_2,r_2)$ has the maximum greedy increment for the $2^{nd}$ iteration. The discriminant for the $2^{nd}$ iteration is $d_2 = \frac{d^2}{d_-}$.
    \item Continuing as above, it follows that the greedy solution generated by $\mathsf{GREEDY-M}$ is $G = \{ (u_1,r_1),(u_2,r_2),(u_1,r_3),(u_2,r_4),\dots, \}$. Hence,
    \begin{align}
    Z(G) &= d(1+d(1-c) + (d(1-c))^2 + \dots + (d(1-c))^{n-1}) \nonumber \\ &= \frac{d(1-(d(1-c))^n)}{1-d(1-c)}. \label{eq4560}
    \end{align}
    \item[$\bullet$] In this problem, $c_{u_1} = c_{u_2} = c$ as shown in Appendix \ref{app:int1}, meaning that $\forall i,\ c_{u_i} = c$. For iteration $1$, the discriminant, $d_1 = d$.
    Step $2$ onwards, the discriminant $d_i=\frac{d^2}{d_-}$. Since $d_- < d$, for this problem, from Theorem $5$: $$\frac{Z(G)}{Z(\Omega)} \ge \frac{1}{c+\frac{1}{d}}.$$
\end{enumerate}
\item The optimal solution is $\Omega = \{ (u_2,r_1),(u_1,r_2),(u_2,r_3),(u_1,r_4),\dots, \}$, shown in Appendix \ref{app:opttightex}. Hence, 
\begin{align}
    Z(\Omega) &= 1+d_-(1+d_-(1-c) + (d_-(1-c))^2 + \dots + (d_-(1-c))^{n-1}) \nonumber \\
    &= 1+\frac{d_-(1-(d_-(1-c))^{n})}{1-d_-(1-c)}. \label{eq334}
\end{align}

\item Thus, from (\ref{eq4560}) and (\ref{eq334}), $$ \lim_{\substack{\epsilon \to 0, \\ n \to \infty}} \frac{Z(G)}{Z(\Omega)} = \lim_{\epsilon \to 0} \frac{d}{1-d_-(1-c) + d_-} = \lim_{\epsilon \to 0} \frac{d}{1+d_- c} = \frac{1}{\frac{1}{d}+c}.$$
\end{enumerate}
Thus, from (\ref{eq4560}) it follows that the guarantee for Theorem \ref{t5} is met.

\subsubsection{Curvatures of \texorpdfstring{$Z_{u_1}(S)$}{Zu1S} and \texorpdfstring{$Z_{u_2}(S)$}{Zu2S}:}\label{app:int1}

The definition of curvature for $Z_{u_1}(S)$ and $Z_{u_2}(S)$ is:
\begin{equation} \label{eq2200}
    \text{for }i = 1,2 \quad c_{u_i} = 1 - \underset{S,r \in S^*}{\text{min }} \frac{\rho_r^{u_i}(S)}{\rho_r^{u_i}(\phi)}, \qquad \text{where } S^* = \{ r: r \in \mathcal{R} \setminus S,\ \rho_r^{u_i}(\phi) > 0\}.
\end{equation}
For any $r \in \mathcal{R}$, the minimizer for $S$ in (\ref{eq2200}) is $ \mathcal{R} \setminus \{r \}$. This is because for some $r$, any valid choice for $S$ cannot contain $r$ (since $r \in S^*$), and hence $S \subseteq \mathcal{R} \setminus \{r_i \}$. Submodularity of $Z_{u_1}(S)$ guarantees that the increment upon adding $r$ to any such $S$ would be more than that for $ \mathcal{R} \setminus \{r_i \}$. Considering $S = \mathcal{R} \setminus \{r \}$ itself is a valid choice, since $S^* = \{ r \}$, which proves that it is the minimizer for $S$.

\noindent For $u_1$, the expression for curvature is evaluated below:
\begin{equation} \label{eq4567}
\frac{\rho_r^{u_1}(S)}{\rho_r^{u_1}(\phi)} =
\begin{cases}
\frac{d_-^{i-1}(1-c)^{i-1}}{d_-^{i-1}(1-c)^{i-2}} = 1- c & r_i : i \text{ is even, and } S = \mathcal{R} \setminus \{r_i\}, \\
\frac{d^i(1-c)^{i-1} + d_-^{i-1}(1-c)^{i} - d_-^{i-1}(1-c)^{i-1}}{d^i(1-c)^{i-2}} \overset{(\dagger)}{>} 1-c \quad & r_i : i \text{ is odd, and } S = \mathcal{R} \setminus \{ r_i\},
\end{cases}
\end{equation}
where $(\dagger)$ follows from the fact that $d_- < d$. It follows from (\ref{eq4567}) and the definition of curvature in (\ref{eq2200}) that:
$$ c_{u_1} = 1 - (1-c) = c.$$

\noindent A similar analysis for $u_2$ gives:
\begin{equation} \label{eq45678}
\frac{\rho_r^{u_2}(S)}{\rho_r^{u_2}(\phi)} =
\begin{cases}
\frac{d_-^{i-1}(1-c)^{i-1}}{d_-^{i-1}(1-c)^{i-2}} = 1- c & r_i : i \text{ is odd, and } S = \mathcal{R} \setminus \{r_i\}, \\
\frac{d^i(1-c)^{i-1} + d_-^{i-1}(1-c)^{i} - d_-^{i-1}(1-c)^{i-1}}{d^i(1-c)^{i-2}} > 1-c \quad & r_i : i \text{ is even, and } S = \mathcal{R} \setminus \{ r_i\}.
\end{cases}
\end{equation}
Together with the definition of curvature in (\ref{eq2200}), this once again gives:
$$ c_{u_2} = 1 - (1-c) = c.$$

\subsubsection{Optimal solution for tight example in Theorem \ref{t5}}\label{app:opttightex}

The optimal solution for the tight example in Appendix \ref{app:tightpart} is $\Omega = \{(u_2,r_1),(u_1,r_2),(u_2,r_3),(u_1,r_4),\dots, \}$
The reasoning behind this is presented below. From the structure of the problem instance in Appendix \ref{app:tightpart}, it follows that any valid solution $\Omega'$ can be equivalently represented as an ordered set of users $(\alpha_{1},\alpha_{2},\dots,\alpha_{K})$, where $\alpha_{i}$ is either $u_1$ or $u_2$ and represents the user to which $r_i$ is allocated, and $\Omega'$ is ordered as per $r_1,\dots,r_K$, the order in which $\mathsf{GREEDY-M}$ allocates resources. Any such $\Omega'$ can be converted to $\Omega$ by replacing every $\alpha_i : i$ is odd, by $u_2$ and every $\alpha_i : i$ is even, by $u_1$. We sequentially update $\Omega'$ in the following manner until is not further possible:

\begin{enumerate}
    \item Consider $i$ such that $\alpha_i = \alpha_{i+1} = u_1$. From the function valuations of $Z_{u_1}(S)$ and $Z_{u_2}(S)$ and since $d \le \frac{1}{1-c}$, it follows that for odd $i$, replacing $\alpha_i$ by $u_2$ and for even $i$, replacing $\alpha_{i+1}$ by $u_2$ would increase the valuation of $\Omega'$. $\Omega'$ is updated by performing this exchange.

    \item Consider all $i$ such that $\alpha_i = \alpha_{i+1} = u_2$. From the function valuations of $Z_{u_1}(S)$ and $Z_{u_2}(S)$ and since $d \le \frac{1}{1-c}$, it follows that for odd $i$, replacing $\alpha_{i+1}$ by $u_1$ and for even $i$, replacing $\alpha_i$ by $u_1$ would increase the valuation of $\Omega'$. $\Omega'$ is updated by performing this exchange.
\end{enumerate}

\noindent $\Omega'$ is recursively updated as per iterations $1$ and $2$, with each iteration increasing the valuation of $\Omega'$ until there no longer exists $i$ such that $\alpha_i = \alpha_{i+1} = u_1$ or $\alpha_{i} = \alpha_{i+1} = u_2$. We claim that unless $\Omega'$ is initially $\{u_1,u_2,u_1,\dots, \}$, after the final update $\Omega'$ must be equal to $\{u_2,u_1,u_2,\dots, \}$. This is because, as per the update rule, for all $i$, $\alpha_i$ and $\alpha_{i+1}$ cannot both be $u_1$ or $u_2$. Hence the final $\Omega'$ is either $\{u_2,u_1,u_2,\dots, \}$ or $\{u_1,u_2,u_1,\dots, \}$. If even a single update has taken place, the update rule ensures that either some $\alpha_i : i$ is odd is replaced by $u_2$ (or) some $\alpha_i : i$ is even is replaced by $u_1$. This eliminates $\{u_1,u_2,u_1,\dots, \}$ as the final outcome. If no update has taken place, it follows that $\Omega' = \{u_1,u_2,u_1,\dots, \}$ since it is the only set other than $\{u_2,u_1,u_2,\dots, \}$ that cannot be updated.

\noindent Thus, we can transform any solution $\Omega'$ to $\Omega$ with a net increase in valuation. This proves that $\Omega$ is the optimal solution.

\subsection{Proof of Lemma \ref{lem:submodtight} - Tight example for the bound in Theorem \ref{t6}}\label{app:tightgen}
We show that using the $\mathsf{GREEDY}$ algorithm on the following problem instance, matches the bound in (\ref{eq22}):
\begin{enumerate}
    \item Consider the ground set $N$ as: $\{ \nu_1,\nu_2,\dots,\nu_K, \epsilon_1, \epsilon_2, \dots, \epsilon_{K} \}$.
    \item Define the matroid as $(N,\mathcal{M})$, where $\mathcal{M} = \{ S: S \subseteq N, S \cap \{ \nu_i, \epsilon_i\} \le 1,\ \forall i\}$ [only one among $\nu_i$ and $\epsilon_i$ can be members of any set $S \in \mathcal{M}$].
    \item For some $d \ge 1$, define the valuation function as:
    \begin{equation*} 
    \begin{split}
            Z(S) = d\sum_{I_1} (d(1-c))^{i-1} + \sum_{i \in I_2} (d(1-c))^{i-1} \} + d \sum_{\ \in I_3} (d(1-c))^{i-2}.
    \end{split}
    \end{equation*}
    where $I_1 = \{i :\epsilon_i \in S \}$, $I_2 = \{i : \nu_i \in S \text{ and } \left[ \epsilon_{i-1} \in S \text{ or } i=1 \right] \}$ and $I_3 = \{i : \nu_i \in S \text{ and } \epsilon_{i-1} \not\in S \}$.
    
    The idea behind such a valuation function is that at every iteration $i$ of the $\mathsf{GREEDY}$ algorithm, there is a tie between $\epsilon_i$ and $\nu_{i+1}$ (as seen below), and $\mathsf{GREEDY}$ picks arbitrarily among the two, say $\epsilon_i$ which is the sub-optimal choice.
    \item $Z(S)$ can also be represented in the form of its marginal increments as:
    \begin{align}
    &\rho_{\nu_i}(S) = \begin{cases}
        d(d(1-c))^{i-2}, &\epsilon_{i-1} \not\in S, \\
        (d(1-c))^{i-1}, &\epsilon_{i-1} \in S \text{ or } i=1,
    \end{cases} \nonumber\\
    &\rho_{\epsilon_i}(S) = \begin{cases}
        d(d(1-c))^{i-1}, &\nu_{i+1} \not\in S, \\
        (d(1-c))^{i}, &\nu_{i+1} \in S.
    \end{cases}
    \label{vals}
    \end{align}
    \item Before proceeding with the analysis for the greedy algorithm, observe that the optimal solution is $\Omega = \{ \nu_1, \nu_2, \dots, \nu_{K} \}$.
    
 This is shown below by showing a sequence of iterations by which any valid solution $\Omega'$ can be converted to $\Omega$ with an increase in valuation in every iteration. From the structure of the matroid $(N, \mathcal{M})$, it follows that any valid solution $\Omega'$ can be ordered as $(\alpha_1,\alpha_2,\dots,\alpha_K )$ (where $\alpha_i$ is either $\nu_i$ or $\epsilon_i$), and can be converted to $\Omega$ by replacing every $\epsilon_i \in \Omega'$ by $\nu_i$. Noting this point, a sequence of iterations converting any such $\Omega'$ to $\Omega$ with an increase in valuation in every iteration is:
    \begin{enumerate}
        \item[i.] If $\epsilon_K \in \Omega'$, update $\Omega'$ by replacing $\epsilon_K$ by $\nu_K$. From the incremental valuations defined in (\ref{vals}), it follows that such an exchange provides a non-negative increment $\sigma_1$ to the valuation of $\Omega'$. In case $\nu_K \in \Omega'$, set $\sigma_1 = 0$.
        \item[ii.] Starting from $K$, iterate over $i$ in the reverse order sequentially updating $\Omega'$; for every $i$ such that $\epsilon_{i}$ precedes $\nu_{i+1}$, update $\Omega'$ by replacing $\epsilon_{i}$ by $\nu_{i}$. In case $\nu_i \in \Omega'$, no update is made to $\Omega'$ in that iteration. From the function increments in (\ref{vals}), in every iteration, this iteration gives a positive increment $\sigma_i$ ($\sigma_i$ is set as $0$ in case there is no exchange).
        \item[$\bullet$] In the $i^{th}$ iteration of this process, $\Omega'$ would be $(\alpha_1,\alpha_2,\dots,\alpha_{K-i},\nu_{K-i+1},\dots,\nu_{K})$.
    \end{enumerate}
    Thus, the above series of steps sequentially convert any $\Omega'$ to $\Omega$, and give:
    \begin{equation*}
    Z(\Omega) = Z(\Omega') + \sum_{i=1}^K \sigma_i, \qquad \forall i, \sigma_i \ge 0.
    \end{equation*}
    This shows for every valid solution $\Omega'$, $Z(\Omega) \ge Z(\Omega').$
    Thus, we conclude that $\Omega$ is the optimal solution, having valuation:
    \begin{equation} \label{eq3300}
        Z(\Omega) = 1 + d\sum_{i=1}^K (d(1-c))^{i-1}.
    \end{equation}
    \item In order to compute $d_i'$ for the bound in (\ref{eq22}), we need to compute the ordering of $\Omega$ as per Lemma \ref{l2}. If the greedy solution is chosen in the order $G = (\alpha_{i_1},\alpha_{i_2},\dots,\alpha_{i_K})$ (where $\alpha_{i_t}$ takes either $\epsilon_{i_t}$ or $\nu_{i_t}$), an ordering for $\Omega$ as per Lemma \ref{l2} is just $(\nu_{i_1},\nu_{i_2},\dots,\nu_{i_K})$. It follows from the structure of the matroid, that for all $t$, the set $S_t = G^{t-1} \cup \{ \nu_t \} = \{ \alpha_{i_1},\alpha_{i_2} \dots,\alpha_{i_{t-1}}\} \cup \{\nu_{i_t} \}$ does not contain both $\nu_{i_j}$ and $\epsilon_{i_j}$ (since $\alpha_{i_j}$ is uniquely either $\epsilon_{i_j}$ or $\nu_{i_j}$) for all $j$. Hence $\forall t,\ S_t \in \mathcal{M}$. This shows that $(\nu_{i_1},\nu_{i_2},\dots,\nu_{i_K})$ is a valid ordering as per Lemma \ref{l2}.

    \item We now compute the $\mathsf{GREEDY}$ solution:
    \begin{enumerate}
        \item \textbf{iteration $\mathbf{1}$}: $\epsilon_1$ and $\nu_2$ are tied ($d_1^g=1$), both giving increment, $\rho_{\epsilon_1}(\phi) = \rho_{\nu_2}(\phi) = d$. Assume that the $\mathsf{GREEDY}$ algorithm arbitrarily picks $\epsilon_1$ between the two. We have:
        $$d_1' \overset{(i)}{=} \frac{\rho_{\epsilon_1}(\phi)}{\rho_{\nu_1}(\phi)} \overset{(ii)}{=} \frac{d}{1} = d,$$
        where $(i)$ follows from the definition of $d_i'$ in (\ref{eq788}), and from the ordering of $\Omega$ established previously in point 6, and $(ii)$ follows from the incremental valuations defined in (\ref{vals}).
        \item \textbf{iterations $\mathbf{i = 2, \dots, K}$}: $\epsilon_i$ and $\nu_{i+1}$ are tied ($d_i=1$), both giving increment $\rho_{\epsilon_i}(G^{i-1}) = \rho_{\nu_{i+1}}(G^{i-1}) = d(d(1-c))^{i-1}$. Assume that the greedy algorithm picks $\epsilon_i$ arbitrarily, which gives:
        \begin{equation} \label{eq800}
        d_i' \overset{(iii)}{=}  \frac{\rho_{\epsilon_i}(G^{i-1})}{\rho_{\nu_i}(G^{i-1})} \overset{(iv)}{=} \frac{d(d(1-c))^{i-1}}{d(1-c)^{i-1}} = d,
        \end{equation}
        where $(iii)$ once again follow from the definition of $d_i'$ in (\ref{eq788}), and from the ordering of $\Omega$ established in point 6, and $(iv)$ follows from the function increments in (\ref{vals}).
    \end{enumerate}
    Thus, the greedy solution is $G = \{\epsilon_1,\epsilon_2,\dots,\epsilon_K\}$, and has valuation $Z(G) = d \sum_{i=1}^K (d(1-c))^{i-1}.$\newline
    Therefore, with the optimal valuation in (\ref{eq3300}) we have: 
    \begin{align}
    \lim_{K \to \infty} \frac{Z(G)}{Z(\Omega)} &= \lim_{K \to \infty} \frac{d \sum_{i=1}^K (d(1-c))^{i-1}}{1+d \sum_{i=1}^K (d(1-c))^{i-1}},\nonumber\\ &= \frac{d}{1-d(1-c)+d},\nonumber\\ &= \frac{1}{\frac{1}{d}+c}. \label{eq7700}
    \end{align}
    \item To compute the bound in (\ref{eq22}), we need to compute $ \underset{i<i_0}{\text{max }} \frac{1}{d_i'}$. Observe that in the $i^{th}$ stage of the greedy algorithm, we have $2(K-i+1) > K-i+1$ choices, $\{ \nu_i,\nu_{i+1},\dots,\nu_K, \epsilon_i, \epsilon_{i+1}, \dots, \epsilon_{K} \}$. Therefore $i_0 = K+1$. Putting this together with (\ref{eq800}), we have:
    \begin{equation*}
        \underset{i<i_0}{\text{max }} \frac{1}{ d_i'} = \underset{i\le K}{\text{max }} \frac{1}{ d_i'} = \frac{1}{d}.
    \end{equation*}
    
    \item Therefore the bound in (\ref{eq22}) reduces to:
    \begin{equation*}
        \frac{Z(G)}{Z(\Omega)} \ge \frac{1}{\frac{1}{d}+c},
    \end{equation*}
    which matches with the limit ($K \rightarrow \infty$) in (\ref{eq7700}).
\end{enumerate}

\nocite{*}
\bibliography{refs}

\end{document}